\documentclass[final,leqno]{siamltex}
\usepackage{fullpage}
\usepackage{amsmath}
\usepackage{hyperref}
\usepackage{graphicx}
\usepackage{placeins}
\usepackage{amssymb}
\usepackage{caption}
\usepackage{subcaption}
\usepackage{algorithmicx,algpseudocode}
\usepackage{algorithm}

\usepackage{color}

\def\Tr{\operatorname{Tr}}
\def\Var{\operatorname{Var}}
\def\Cov{\operatorname{Cov}}
\def\uim{u_{im}}
\def\vim{v_{im}}
\def\uil{u_{il}}
\def\vil{v_{il}}
\def\cuim{\bar{u}_{im}}
\def\cvim{\bar{v}_{im}}
\def\cuil{\bar{u}_{il}}
\def\cvil{\bar{v}_{il}}
\def\ujm{u_{jm}}
\def\vjm{v_{jm}}

\def\vjl{v_{jl}}

\def\cvjm{\bar{v}_{jm}}
\def\cujl{\bar{u}_{jl}}

\title{Deflation as a Method of Variance Reduction for Estimating the Trace of a Matrix Inverse}
\author{Arjun Singh Gambhir \footnotemark[2]\ \footnotemark[3]
\and Andreas Stathopoulos \footnotemark[4]
\and Kostas Orginos \footnotemark[2]\ \footnotemark[3]}

\begin{document}
\maketitle
\renewcommand{\thefootnote}{\fnsymbol{footnote}}
\footnotetext[2]{Department of Physics, College of William and Mary, Williamsburg, Virginia 23187-8795, U.S.A.}
\footnotetext[3]{Jefferson National Laboratory, 12000 Jefferson Avenue, Newport News, Virginia, 23606, U.S.A.}
\footnotetext[4]{Department of Computer Science, College of William and Mary, Williamsburg, Virginia 23187-8795, USA}
\renewcommand{\thefootnote}{\arabic{footnote}}
\maketitle
\begin{abstract}
Many fields require computing the trace of the inverse of a large, sparse 
  matrix. 
Since dense matrix methods are not practical, the typical method used 
  for such computations is the Hutchinson method which is a Monte Carlo (MC)
  averaging over matrix quadratures.
To improve its slow convergence, several variance reductions techniques have 
  been proposed.
In this paper, we study the effects of deflating the near null 
  singular value space.
We make two main contributions: 
One theoretical and one by engineering a solution to a real world application.

First, we analyze the variance of the Hutchinson method as a function of 
  the deflated singular values and vectors. 
By assuming additionally that the singular vectors are random unitary matrices, 
  we arrive at concise formulas for the deflated variance
  that include only the variance and the mean of the singular values. 
We make the remarkable observation that deflation may increase variance 
  for Hermitian matrices but not for non-Hermitian ones. 
The theory can be used as a model for predicting the benefits of deflation.
Experimentation shows that the model is robust even when the singular vectors
  are not random.

Second, we use deflation in the context of a large scale application 
  of ``disconnected diagrams" in Lattice QCD.
On lattices, Hierarchical Probing (HP) has previously provided significant
  variance reduction over MC by removing ``error'' from neighboring nodes 
  of increasing distance in the lattice.
Although deflation used directly on MC yields a limited improvement of 30\% 
  in our problem, when combined with HP they reduce variance by a factor 
  of about 60 over MC. 
We explain this synergy theoretically and provide a thorough experimental
  analysis. 
One of the important steps of our solution is the pre-computation of 
  1000 smallest singular values of an ill-conditioned matrix of size 25 million.
Using the state-of-the-art packages PRIMME and a domain-specific 
  Algebraic Multigrid preconditioner, we solve this large eigenvalue 
  computation on 32 nodes of Cray Edison in about 1.5 hours and at a 
  fraction of the cost of our trace computation.

\end{abstract}

\section{Introduction}
The estimation of the trace of the inverse, $\Tr(A^{-1})$, 
  of a large, sparse matrix appears in many applications, 
  including statistics \cite{Hutchinson_90}, 
  data mining \cite{Bekas_diagonal}, and
  uncertainty quantification \cite{Bekas_Uncertain}.
Our application comes from Lattice quantum Chromodynamics (QCD),
in the context of which  {\it ab initio}   calculations of  properties of hadrons can be performed. 
This is fundamental for understanding the most basic properties 
  of matter \cite{Gupta:1997nd}. 
The size of the matrix, $N$, in these problems makes it prohibitive
  to use matrix factorization methods to compute the exact trace, and 
  usually the trace is not needed in high accuracy. 
For these reasons, the typical tool for such calculations has been a 
  Monte Carlo (MC) method due to Hutchinson \cite{Hutchinson_90}. 

The Hutchinson method estimates the trace of $A$ as $E( z^HAz)$, 
  i.e., the expectation value of quadratures of $A$ with random vectors. 
Thus, given a sequence of $s$ random vectors $z_j$ whose  
components are random variables that satisfy the property
$ E(z_j(k)z_j(k')) = \delta_{kk'}$, 
the following is an unbiased estimator for 
 $\Tr(A^{-1})$,
\begin{equation}
t(A^{-1}) = \frac{1}{s}\sum_{j=1}^{s}z_j^HA^{-1}z_j, \mbox{ with } E(t(A^{-1})) = \Tr(A^{-1}).
\end{equation}
Here a superscript $H$ denotes the Hermitian conjugate as we allow $A$ to be 
  complex. 
As in all MC processes, the error of the estimator reduces as 
  $\sqrt{\Var(t(A^{-1}))/s}$ which can be very slow.
Therefore, most effort has concentrated in reducing the variance of 
  the estimator \cite{Iitaka_trace, traceHongKong, Toledo_trace}.
\textcolor{black}{
For general matrices, minimum variance is achieved by 
  the $\mathbb{Z}_4$ noise vectors, whose elements are uniformly sampled 
  from $\{ \pm 1, \pm i\}$ \cite{Wilcox:1999ab}.
For real symmetric matrices and real random vectors the minimum is achieved
  by the Rademacher vectors \cite{Toledo_trace}, i.e.,  
  $\mathbb{Z}_2$ vectors with $\pm 1$ uniformly distributed elements.
However, the variance in this case is twice as large as the variance with
  $\mathbb{Z}_4$ vectors.
To treat the general case, we limit our discussion to $\mathbb{Z}_4$ vectors 
  for which the variance of the Hutchinson method is given by the following formula.
First, we define for any $A$ its traceless matrix 
  $\tilde A = A - \diag(\diag(A))$ using the MATLAB $\diag()$ operator.
Then,
\begin{equation}
\Var(t(A^{-1}))=\|\tilde{A^{-1}}\|_F^2=\|A^{-1}\|_F^2-\sum_{i=1}^{N}|A^{-1}_{i,i}|^2.
\label{eq:variance}
\end{equation}
Note that the variance for vectors with elements following a Gaussian distribution 
  is $2\|A^{-1}\|_F^2$ which can be much larger 
  than (\ref{eq:variance}) and also larger than $2\|\tilde{A^{-1}}\|_F^2$ in the more 
  common Hermitian, $\mathbb{Z}_2$ case.
On the other hand, reducing (\ref{eq:variance}) is more complicated
  than reducing $2\|A^{-1}\|_F^2$.
}

 From (\ref{eq:variance}), it is clear that large off diagonal elements of 
  $A^{-1}$ slow down the convergence of the MC estimator.
Therefore, it would be beneficial if the largest elements of $A^{-1}$
  were removed with some deterministic process, and MC were allowed to 
  converge on the remaining matrix. 
Several such variance reduction mechanisms have been discussed in the past. 
One idea is to use Hadamard (instead of random) vectors which annihilate 
  the contribution of specific diagonals of $A^{-1}$ \cite{Bekas_diagonal}.
This works only if the large magnitude elements of $A^{-1}$ happen to be
  on those diagonals.

An extension of that idea is probing \cite{Tang_Saad_traceInv}. 
If the sparsity pattern of $A$ is known, graph coloring can be used 
  to generate a number of vectors equal to the number of colors that, 
  if multiplied by $A$, reveal the exact diagonal of $A$.
However, $A^{-1}$ is dense in general so it is only meaningful to
  apply probing to a sparsified version of $A^{-1}$; one that keeps 
  the largest elements of $A^{-1}$.
A large class of matrices (including the Dirac operator in Lattice QCD) exhibit an exponential decay of the elements 
  $|A_{i,j}^{-1}|$ as a function of the graph theoretical distance dist$(i,j)$
  in the graph of $A$.
For such matrices, the non-zero structure of $A^k, k=1,2,\ldots$, 
  provides a natural sparsification pattern for $A^{-1}$.
Although the actual elements of $A^{-1}$ need not be known, coloring 
  the graph of $A^k$ provides the appropriate probing vectors that 
  annihilate the elements of $A^{-1}$ that reside on the structure of $A^k$.
This coloring is equivalent with a distance $k$ coloring of $A$.
The larger the distance, the more elements of $A^{-1}$ are captured, and 
  therefore removed from the variance.
This was followed in \cite{Tang_Saad_traceInv} but can be expensive 
  for any but the shortest distances.

A similar scheme called dilution has been used in Lattice QCD
  \cite{Babich:2011, Morningstar_Peardon_etal_2011}.
There, the Dirac matrix is a discrete differential operator defined
  on a regular 4D lattice, where each lattice site has 12 degrees of 
  freedom (representing a dimensionality of 4 for the ``spin'' space
  and 3 for the ``color'' space).
Therefore, partitioning the lattice in a red-black checkerboard
  can be used to remove the contribution to the variance from nearest 
  neighbor connections in $A^{-1}$.
Moreover, practitioners often ``dilute'' the spin and color components as well.
Clearly, this is equivalent to probing for distance 1 and has provided
  a good reduction of variance. In \cite{Babich:2011} more elaborate dilution patterns have also been explored. 

Hierarchical probing (HP) extends the idea of probing and dilution to all 
  distances $k=2^i$ up to the diameter of the lattice and forces the 
  colorings to be nested \cite{Stathopoulos:2013aci}.
This nesting allows us to reuse the linear system solutions with
  probing vectors from shorter distances, if higher accuracy is required 
  in the Hutchinson method.
These solutions cannot be reused with classical probing.
In addition, the HP computational cost for coloring lattices 
  for all possible distances is negligible and so is the cost for generating 
  the probing vectors using appropriate permutations of Hadamard or Fourier 
  matrices.
HP removes error contributions from the largest elements of $A^{-1}$ 
  incrementally and thus, for not too ill-conditioned matrices, 
  achieves an error that scales as ${\cal O}(1/s)$ instead of 
  ${\cal O}(1/\sqrt{s})$ in MC.
In Lattice QCD calculations, we have observed close to an order of 
  magnitude variance reduction with HP 
  \cite{Stathopoulos:2013aci,Green:2015wqa}.

 
In this paper we study the effect of deflation using a partial singular value 
  decomposition (SVD) of $A$ in reducing the variance of the Hutchinson method.
For ill conditioned matrices, the $A^{-1}$ is dominated by the near null 
  singular space of $A$, and thus does not display the decaying properties
  of its elements on which probing is based.
We expect that by deflating this dominant near null space the variance
  is reduced, and the nearest neighbor connections can again be exploited
  by probing.
For linear systems, deflation of the smallest singular triplets of $A$ improves 
  the condition number of $A$ and thus speeds up iterative methods.
Similarly, if Hutchinson is used with Gaussian random vectors, 
  because of the optimality properties of SVD, deflation reduces the 
  variance $2\|A^{-1}\|_F^2$.
Benefits were also observed when deflating other types of estimators 
  \cite{Tr_interpolate_jcp16}.
However, the situation for (\ref{eq:variance}) is more complicated.

Our first contribution is an analysis of the effects of SVD deflation 
  to the variance of the Hutchinson method for general matrices. 
We find that variance is reduced only if the singular values grow 
  at a \textcolor{black}{sufficiently} fast rate from small to large. 
To simplify the rather complicated formulas, we then assume that the singular 
  vectors are random unitary matrices, which is approximately true for 
  many large scale matrices and especially for our Lattice QCD application.
This leads to concise formulas for the deflated variance and to the
  remarkable observation that deflation may increase variance 
  for Hermitian matrices but not for non-Hermitian ones. 
The formulas can be used as a model for predicting and quantifying 
  the benefits of deflation.
Experimentation shows that the model is robust even when the singular vectors
  are not random.

Our second contribution is the combination of deflation and HP
  in Lattice QCD  calculations, where the computation of the trace of the inverse of the Dirac operator is required.
The resulting matrix is large (typical problems have matrices of dimension ${\cal O} (10^7)$) and ill-conditioned,
  which creates many problems for both HP and deflation.
The combination of the two methods has a much bigger effect than each method 
  alone, reducing variance in our test problem by a factor of over 150 compared to MC. 
We explain this synergy theoretically and provide a thorough experimental
  analysis. 
We also describe how we integrated and tuned two state-of-the-art software 
packages to solve our problem efficiently on the Cray \textcolor{black}{supercomputer} at NERSC (Edison).
The first is the PRIMME eigenvalue library \cite{PRIMME} 
  and the second is an Algebraic Multigrid preconditioner developed for 
  Lattice QCD \cite{Babich:2010qb}, which provides an average speedup 
  of 30 over unpreconditioned eigensolvers.
With these new methods and software we can now address efficiently 
  extremely challenging lattice problems.


\section{The effect of deflation on variance}
\label{sec:theory}
Deflation is typically understood as the removal of a certain eigenvector 
  space from the range of an operator \cite{Saad:2003:IMS:829576}.
The part of the spectrum removed is problem dependent.
For example, to speed up iterative methods for linear systems we 
  remove the smallest magnitude eigenvalues, while for variance reduction
  for $\Tr(A)$ we remove the largest magnitude.
In this paper we focus on deflation based on the space from the 
  singular value decomposition of $A$.

Let $A$ be a non-Hermitian matrix and assume without loss of generality 
  that we seek the $\Tr(A)$ ($A$ could be the inverse or some other 
  function of our matrix).
Let $A = U\Sigma V^T$ be the singular value decomposition (SVD) of $A$, and
  $U_1, V_1, \Sigma_1$ the $k$ largest singular triplets.
If $U=[U_1,U_2], V = [V_1,V_2]$, and $\Sigma = \mbox{diag}(\Sigma_1,\Sigma_2)$, 
  $A$ can be decomposed as
\begin{equation}
A=U_1\Sigma_1V_1^H+U_2\Sigma_2V_2^H \equiv A_D+A_R,
\label{eq:deflatedA}
\end{equation}
and $\Tr(A)=\Tr(A_D)+\Tr(A_R)$. 
If the triplet $U_1, V_1, \Sigma_1$ has been pre-computed, using the 
  cyclic property of the trace, we can explicitly compute 
  $\Tr(A_D) =\Tr(\Sigma_1 V_1^HU_1)$ with only $O(k^2N)$ operations.
What remains is to compute $\Tr(A_R)$, the trace of our matrix projected 
  on the remaining singular vectors, $A_R =A(I-V_1V_1^H)$.
This can be estimated stochastically.
Because $A_D$ is the best $k$-rank approximation of $A$ in the least 
  squares sense, one would expect that the variance on $A_R$ will always 
  be smaller than (\ref{eq:variance}).
We show next that this only happens under certain conditions. 

\begin{theorem}
Let ($\sigma_1 \geq \ldots \geq \sigma_N \geq 0$, $U$, $V$) be the singular 
  triplets of $A$ and $\Delta = (U\odot \bar{V})^H(U\odot \bar{V})$, where
  $\odot$ is the elementwise product of matrices, and
  $\bar V$ is the elementwise conjugate of $V$.
This gives,
\begin{equation}
\Delta_{ml} = \sum_{i=1}^N \cuim \vim \uil \cvil, \qquad  m,l = 1,\ldots ,N.
	\label{eq:Delta}
\end{equation}
Consider the decomposition in (\ref{eq:deflatedA}) produced by deflating the 
  largest $k$ singular triplets.
The variance of the stochastic estimator for $\Tr(A_R)$ satisfies,
\begin{equation}
	\Var(t(A_R)) = \sum_{m=k+1}^N \sigma_m^2 - 
	\sum_{m=k+1}^N \sum_{l=k+1}^N \sigma_m \sigma_l \Delta_{ml}. \label{eq:varAr}
\end{equation}
The variance of the stochastic estimator for $\Tr(A)$ follows from (\ref{eq:varAr}) with $k=0$.
Then, their difference is
\begin{eqnarray}
\label{eq:vA-vAR}
\Var(t(A)) - \Var(t(A_R))
   &=& \sum_{m=1}^k \sigma_m^2 (1- \Delta_{mm}) 
        - \sum_{m=1}^k 
	\sum_{l=m+1}^{N} \sigma_m\sigma_l(\Delta_{ml}+\Delta_{lm}).
\end{eqnarray}
\label{theorem:normDiff}
\end{theorem}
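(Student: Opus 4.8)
The plan is to apply the variance identity (\ref{eq:variance}) with $A_R$ in place of $A^{-1}$: that identity holds verbatim for an arbitrary matrix $B$ driven by $\mathbb{Z}_4$ noise, namely $\Var(t(B)) = \|B\|_F^2 - \sum_{i=1}^N |B_{ii}|^2$. The whole statement then reduces to expressing the Frobenius norm and the main diagonal of $A_R$ through the singular triplets. Since $V$ is unitary, $V^H V_1 V_1^H$ keeps only the first $k$ rows of $V^H$, so $A_R = A(I - V_1 V_1^H) = U\Sigma V^H(I-V_1V_1^H) = U_2 \Sigma_2 V_2^H = \sum_{m=k+1}^N \sigma_m u_m v_m^H$, where $u_m, v_m$ are the columns of $U, V$. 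Orthonormality of $\{u_m\}$ and $\{v_m\}$ then gives $\|A_R\|_F^2 = \Tr(A_R^H A_R) = \sum_{m=k+1}^N \sigma_m^2$, the first term of (\ref{eq:varAr}).

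For the diagonal, $(A_R)_{ii} = \sum_{m=k+1}^N \sigma_m \uim \cvim$. Writing $W = U \odot \bar V$, so that $W_{im} = \uim \cvim$ and $\Delta = W^H W$, expanding $|(A_R)_{ii}|^2$ and summing over $i$ gives $\sum_{i=1}^N |(A_R)_{ii}|^2 = \sum_{m,l=k+1}^N \sigma_m \sigma_l \sum_{i=1}^N W_{im} \overline{W_{il}}$, and by definition (\ref{eq:Delta}) the inner sum is exactly $\Delta_{lm}$. Since the outer double sum is symmetric in $m, l$ with the symmetric weight $\sigma_m \sigma_l$, swapping the dummy indices rewrites it as $\sum_{m,l=k+1}^N \sigma_m \sigma_l \Delta_{ml}$. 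Subtracting from $\|A_R\|_F^2$ yields (\ref{eq:varAr}); taking $k = 0$ specializes it to $\Var(t(A))$.

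It remains to subtract the two formulas and simplify. Here $\sum_{m=1}^N \sigma_m^2 - \sum_{m=k+1}^N \sigma_m^2 = \sum_{m=1}^k \sigma_m^2$, while $\sum_{m,l=1}^N - \sum_{m,l=k+1}^N$ retains precisely the pairs $(m,l)$ with $\min(m,l) \le k$. I would split this index set into the block $\{1,\dots,k\}^2$, the strip $\{1,\dots,k\}\times\{k+1,\dots,N\}$, and its transpose. Relabelling merges the two strips, and combining them with the off-diagonal part of the block produces sums of the shape $\sigma_m \sigma_l (\Delta_{ml} + \Delta_{lm})$ over $1 \le m \le k$, $m < l \le N$, after using $\{m+1,\dots,k\} \cup \{k+1,\dots,N\} = \{m+1,\dots,N\}$ (valid since $m \le k$). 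The diagonal part of the block contributes $-\sum_{m=1}^k \sigma_m^2 \Delta_{mm}$, which combines with $\sum_{m=1}^k \sigma_m^2$ into $\sum_{m=1}^k \sigma_m^2 (1 - \Delta_{mm})$, giving (\ref{eq:vA-vAR}).

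The load-bearing ingredient is the general variance identity (\ref{eq:variance}) applied to a possibly non-Hermitian matrix; granting it, the rest is bookkeeping. The only delicate point is the last step: partitioning the difference of the double sums correctly and tracking which contributions land on the diagonal, and hence inside the $1-\Delta_{mm}$ factor, versus off it. I expect no conceptual obstacle there, only the usual risk of an index-range or sign slip.
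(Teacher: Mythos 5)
Your proposal is correct and follows essentially the same route as the paper: apply the $\mathbb{Z}_4$ variance identity to $A_R$, get $\|A_R\|_F^2=\sum_{m>k}\sigma_m^2$ from the SVD, express $\sum_i|(A_R)_{ii}|^2$ through $\Delta$, and then do the index bookkeeping for the difference (the paper telescopes the double sums with $s_{ml}=\sigma_m\sigma_l\Delta_{ml}$ rather than partitioning the index set, but the outcome and the handling of the diagonal term $1-\Delta_{mm}$ are identical). No gaps.
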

\begin{proof}
\textcolor{black}{Define} the vectors
$D=\diag(A)$ 
and $D_R=\diag(A_R)$ and the traceless 
matrices
$\tilde A = A - \diag(D)$
and $\tilde{A_R} = {A_R} - \diag(D_R)$.
According to (\ref{eq:variance}), we need to obtain an expression for 
$\Var(t(A)) - \Var(t(A_R)) = \|\tilde{A}\|^2_F - \|\tilde{A_R}\|^2_F$.
 From the properties of the SVD we have
\begin{equation}
	\|A\|_F^2 = \sum_{i=1}^N \sigma_i^2 
		= \|\tilde{A}\|^2_F + \|D\|^2_F, \qquad
	\|A_R\|_F^2 = \sum_{i=k+1}^N \sigma_i^2
                    = \|\tilde{A_R}\|^2_F + \|D_R\|^2_F.
	\label{eq:normA-normAR}
\end{equation}
Next we represent the diagonals in terms of the SVD
\begin{equation}
D(i)    =\sum_{m=1}^N\sigma_m \uim \cvim, \qquad
D_{R}(i)=\sum_{m=k+1}^N\sigma_m \uim \cvim .
\end{equation}
Then, using (\ref{eq:Delta}) we obtain expressions for the norms of the diagonals,
\begin{eqnarray}
\begin{array}{rcl}
\|D\|^2_F &=&  \sum_{i=1}^N  
        (\sum_{m=1}^N \sigma_m \cuim \vim)
        (\sum_{l=1}^N \sigma_l \uil \cvil)
    = 
	\sum_{m=1}^N \sum_{l=1}^N \sigma_m \sigma_l  
		\sum_{i=1}^N   \cuim \vim \uil \cvil 	\\
   & = & 
	\sum_{m=1}^N \sum_{l=1}^N \sigma_m \sigma_l \Delta_{ml}.
\end{array}
\end{eqnarray}
\begin{eqnarray}
\begin{array}{rcl}
\|D_R\|^2_F & =&  \sum_{i=1}^N  
        (\sum_{m=k+1}^N \sigma_m \cuim \vim)
        (\sum_{l=k+1}^N \sigma_l \uil \cvil)
    = \sum_{m=k+1}^N \sum_{l=k+1}^N \sigma_m \sigma_l \Delta_{ml}.\\
\end{array}
\end{eqnarray}
Utilizing the above and according to (\ref{eq:variance}) and (\ref{eq:normA-normAR}),
  the variance of the trace estimator for the deflated problem is given by (\ref{eq:varAr}),
  and similarly for the original problem with $k=0$.
To obtain the difference of these variances we denote 
  $s_{ml} = \sigma_m \sigma_l \Delta_{ml}$ for brevity and perform the required algebraic operations,
\begin{eqnarray}
\begin{array}{rcl}
\Var(t(A)) - \Var(t(A_R)) 
	&=&  \sum_{m=1}^k \sigma_m^2 
	-\sum_{m=1}^k \sum_{l=1}^N s_{ml}
	-\sum_{m=k+1}^N ( \sum_{l=1}^N s_{ml} -\sum_{l=k+1}^N s_{ml})\\
	&=&  \sum_{m=1}^k \sigma_m^2 
	-\sum_{m=1}^k \sum_{l=1}^N s_{ml} -\sum_{m=k+1}^N \sum_{l=1}^k s_{ml}\\
	&=&  \sum_{m=1}^k \sigma_m^2 
	-\sum_{m=1}^k (\sum_{l=1}^N s_{ml} +\sum_{l=k+1}^N s_{lm}) \\
	&=&  \sum_{m=1}^k \sigma_m^2 
	-\sum_{m=1}^k (\sum_{l=1}^k s_{ml} + \sum_{l=k+1}^N (s_{ml}+s_{lm})) \\
	&=&  \sum_{m=1}^k \sigma_m^2 -\sum_{m=1}^k s_{mm} 
	-\sum_{m=1}^k (\sum_{l=1,l\neq m}^k s_{ml} + \sum_{l=k+1}^N (s_{ml}+s_{lm})) \\
	&=&  \sum_{m=1}^k \sigma_m^2 (1-\Delta_{mm})
	-\sum_{m=1}^k \sum_{l=m+1}^N (s_{ml}+s_{lm}),
\end{array} \nonumber
\end{eqnarray}
which yields the desired result.
\end{proof}

{\bf Example.}
To achieve variance reduction, we need 
   $\|\tilde{A}\|_F^2-\|\tilde{A_R}\|_F^2>0$.
Contrary to low rank matrix approximations,
  deflation may not achieve this for $\Var(t(A_R))$.
Consider the following example in MATLAB: 
\begin{verbatim}
     [U,~] = qr([ -1 d d; d 1 1; d 1 -1]);
     A  = U*diag([1+2*s, 1+s, 1])*U';
     Ar = U(:,2:3)*diag([1+s, 1])*U(:,2:3)';
     disp(norm(A-diag(diag(A)),'fro')^2/norm(Ar-diag(diag(Ar)),'fro')^2);
\end{verbatim}
With $d$, we control the distance of the {\tt U(:,1)} and {\tt U(:,2:3)} 
  singular subspaces (also eigenspaces) from the first orthocanonical vector.
With $s$, we control the separation of the singular values.
We deflate the largest singular triplets.
For $s=0.5$ (i.e., singular values $[2, 1.5, 1]$), 
  we can verify numerically that $\Var(t(Ar))\geq \Var(t(A))$ for any $d$.
Deflation has a negative effect!
Similarly, if $d = 0.001$ there is no reduction of variance 
  regardless of the separation of the singular values.
On the other hand, for $s=1$  (i.e., singular values $[3, 2, 1]$), 
  we have $\Var(t(Ar))\geq \Var(t(A))$ for $d\leq1$ and 
  $\Var(t(Ar))<\Var(t(A))$ for $d>1$.
Finally, for $s=2$ (i.e., singular values $[5, 3, 1]$) 
  deflating the largest triplet reduces variance for all $d$.

Theorem \ref{theorem:normDiff} differs in two ways from the typical 
SVD based low rank approximations. 
The first is the $(1-\Delta_{mm})$ factor on the sum of $\sigma_m^2$ in (\ref{eq:vA-vAR}).
If $\Delta_{mm}\approx 1$ then $A$ is almost decomposable and therefore
  deflation will not remove any off-diagonal elements from the variance.
However, in this uncommon case, the deflated triplet did not 
  contribute to the variance in the beginning.
The second difference is the subtraction of the double summation term.
The hope is that the deflated singular values are sufficiently large to 
  dominate over the summation of $\sigma_m\sigma_l$.
However, this is complicated by the presence of the $\Delta_{ml}$. 

We attempt to analyze the condition for variance reduction further.
If we rewrite (\ref{eq:vA-vAR}) as,
$
\Var(t(A)) - \Var(t(A_R)) = 2\sum_{m=1}^k \sigma_m
 	( \sigma_m (1- \Delta_{mm}) -
	\sum_{l=m+1}^{N} \sigma_l(\Delta_{ml}+\Delta_{lm}))
$,
we observe that for the difference to be positive, a sufficient but not 
  necessary condition is for every term in the sum to be positive. 
Equivalently, 
\begin{equation}
 	\sigma_m > \sum_{l=m+1}^{N} \sigma_l
		\frac{(\Delta_{ml}+\Delta_{lm})}{(1- \Delta_{mm})},
	\qquad m=1,\ldots,k.
	\label{eq:crudeModel}
\end{equation}
To simplify further, assume that $U=V$, i.e., the matrix is Hermitian.
Then, $\Delta_{ml} = \sum_{i=1}^N |\uim|^2|\uil|^2 \geq 0$, and therefore
$\sum_{m=1}^N \Delta_{ml} = \sum_{l=1}^N \Delta_{ml}=1$, or
 $\Delta$ is doubly stochastic.
For $k=1$, we have
$\sum_{l=2}^N \Delta_{1l} = 1-\Delta_{11}$ and thus if 
$\sigma_1 > 2 \sigma_2$, then
$
\sigma_1 > 2 \sigma_2 \frac{1-\Delta_{11}}{1-\Delta_{11}} 
         = 2 \sigma_2 \sum_{l=2}^{N} 
		\frac{\Delta_{1l}}{(1- \Delta_{11})}
	 > \sum_{l=2}^{N} \sigma_l
		\frac{(\Delta_{1l}+\Delta_{l1})}{(1- \Delta_{11})}
$.
So if $\sigma_1 > 2 \sigma_2$ then deflating the largest singular
  triplet reduces the variance.
Inductively, if the singular spectrum decreases geometrically as 
  $\sigma_{i+1} = 2^{-i}\sigma_1$, we 
  guarantee that any deflation improves variance. 
This requirement, however, is too pessimistic.
In the following we show that if $U$ and $V$ are random unitary matrices, 
  which is approximately true in our LQCD application,
  the $\Delta_{ml}$ are uniformly small with small variance.

\subsection{The case of random singular vectors}
\label{sec:randomtheory}
Let us assume that $U$ and $V$ are standard unitary matrices, i.e., 
  distributed with the Haar probability measure \cite{Hiai-Petz-book}.
In LQCD, the $U$ and $V$ are indeed random matrices, although they do not 
  follow exactly the above distribution.
Our resulting model, however, will not depend on the nature of 
  the distribution but on the expectation and variance of the elements of
  $U$ and $V$, based on which we will provide a bound for the elements 
  of $\Delta$ as a random variable.
Specifically, we base our subsequent analysis on 
  $|\Delta_{ml} - E(\Delta_{ml})| = O(\sqrt{\Var(\Delta_{ml}))}$.
First we need the following.

\begin{proposition}\mbox{\rm \cite[Prop. 4.2.2]{Hiai-Petz-book}}.
Let 
$l\in \mathbb{N}, i_1,\ldots ,i_l,j_1,\ldots,j_l \in \left\{1,\ldots,N\right\}$
and $k_1,\ldots ,k_l,m_1,\ldots,m_l$ $\in \mathbb{Z}^+$. If either 
$\sum_{i_r=i}(k_r-m_r) \neq 0$ for some $1\leq i \leq N$ or 
$\sum_{j_r=i}(k_r-m_r) \neq 0$ for some $1\leq j \leq N$, then
$$E((u^{k_1}_{i_1j_1}\bar{u}^{m_1}_{i_1j_1})
    (u^{k_2}_{i_2j_2}\bar{u}^{m_2}_{i_2j_2}) \cdots
    (u^{k_l}_{i_lj_l}\bar{u}^{m_l}_{i_lj_l})) = 0.$$
\label{prop:book}
\end{proposition}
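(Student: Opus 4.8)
The plan is to exploit the bi-invariance of the Haar measure on the unitary group $U(N)$ under multiplication by fixed unitaries, specialized to diagonal phase matrices. If $U$ is Haar-distributed and $D=\operatorname{diag}(e^{i\theta_1},\dots,e^{i\theta_N})$ is any fixed diagonal unitary, then $DU$ and $UD$ have the same distribution as $U$. Under $U\mapsto DU$ the entries transform as $u_{pq}\mapsto e^{i\theta_p}u_{pq}$, so each factor of a monomial in the entries picks up a phase governed solely by its row index; under $U\mapsto UD$ the phase is governed by the column index. The vanishing then follows from a single observation: a monomial whose row-index (respectively column-index) exponents fail to balance acquires a nontrivial phase under a symmetry that leaves its expectation invariant, which forces that expectation to be zero.

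Concretely, I would first fix an index $i$ with $s_i:=\sum_{r:\,i_r=i}(k_r-m_r)\neq 0$ and take $D=D(\theta)$ to be the identity except for the entry $e^{i\theta}$ in position $(i,i)$, which is plainly a fixed (non-random) unitary. Writing $M(U)=\prod_{r=1}^{l}u_{i_r j_r}^{k_r}\bar u_{i_r j_r}^{m_r}$ for the monomial, the substitution $u_{pq}\mapsto e^{i\theta\,\delta_{pi}}u_{pq}$ gives $u_{i_r j_r}^{k_r}\bar u_{i_r j_r}^{m_r}\mapsto e^{i\theta\,\delta_{i_r i}(k_r-m_r)}u_{i_r j_r}^{k_r}\bar u_{i_r j_r}^{m_r}$, and multiplying over $r$ collapses the exponent to $s_i$, so $M(D(\theta)U)=e^{i\theta s_i}M(U)$. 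Haar invariance then yields $E(M(U))=E(M(D(\theta)U))=e^{i\theta s_i}E(M(U))$ for every $\theta$; choosing $\theta$ with $e^{i\theta s_i}\neq 1$ (possible since $s_i$ is a nonzero integer, e.g.\ $\theta=\pi/s_i$) gives $(1-e^{i\theta s_i})E(M(U))=0$, hence $E(M(U))=0$. Alternatively one can integrate the identity against $d\theta/2\pi$ over $[0,2\pi]$ to obtain $E(M(U))=E(M(U))\cdot\frac{1}{2\pi}\int_0^{2\pi}e^{i\theta s_i}\,d\theta=0$, avoiding the choice of a particular $\theta$.

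The second case — some column sum $s'_j:=\sum_{r:\,j_r=j}(k_r-m_r)\neq 0$ — is handled identically, now using right multiplication $U\mapsto UD(\theta)$ with the phase $e^{i\theta}$ placed in the $(j,j)$ slot, under which $u_{pq}\mapsto e^{i\theta\,\delta_{qj}}u_{pq}$ and therefore $M(UD(\theta))=e^{i\theta s'_j}M(U)$; the same averaging argument closes it. Note the transformations are consistent with the two hypotheses of the proposition: left multiplication probes the row-index balance, right multiplication the column-index balance.

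There is no genuine analytic obstacle here; the only care needed is bookkeeping. The first point is verifying that the phases from $u_{i_r j_r}^{k_r}$ and $\bar u_{i_r j_r}^{m_r}$ combine to $e^{i\theta(k_r-m_r)}$ for each affected $r$ and then sum over $\{r:i_r=i\}$ to $e^{i\theta s_i}$ (and the mirror statement for columns). The second is ensuring the diagonal matrix used is a fixed unitary, so that Haar invariance applies verbatim, and that one is free to vary a single diagonal phase while holding the others at $1$. Granting the quoted bi-invariance of the Haar measure, the rest is a one-line computation; this is why the paper simply cites \cite[Prop. 4.2.2]{Hiai-Petz-book} rather than reproving it.
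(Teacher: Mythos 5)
Your proof is correct. The paper itself offers no proof of this proposition---it is quoted verbatim from Hiai--Petz---and your argument via bi-invariance of the Haar measure under left/right multiplication by a fixed diagonal phase matrix, extracting the phase $e^{i\theta s_i}$ (resp.\ $e^{i\theta s'_j}$) and concluding $E(M(U))=0$ from $s_i\neq 0$, is precisely the standard argument given in that reference.
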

Then we have the following.
\begin{lemma}
For an $N\times N$ standard unitary matrix $U = [u_{ij}]$ it holds:
\begin{eqnarray}
E(u_{ij}) & = & 0,  \label{eq:meanUij}\\
E(u_{ij}\bar{u}_{kj}) & = & 0  \quad (i\neq k), \label{eq:meanUijUkj}\\
E(u_{ij}\bar{u}_{kj}u_{im}\bar{u}_{km}) & = & 0 \quad (i\neq k),\label{eq:meanUijUkjUimUkm}\\
E(|u_{ij}|^2) & = & \frac{1}{N}, \label{eq:varUij}\\
E(|u_{ij}|^4) & = & \frac{2}{N(N+1)}, \label{eq:4momentUij}\\
E(|u_{ij}|^8) & = & \frac{4!}{N(N+1)(N+2)(N+3)}, \label{eq:8momentUij}\\
E(|u_{ij}|^2|u_{mj}|^2) & = & E(|u_{ij}|^2|u_{ik}|^2)  = \frac{1}{N(N+1)}
\quad (i\neq m, j\neq k), \label{eq:meanUij2Umj2}\\
E(|u_{ij}|^4|u_{mj}|^4) & = & E(|u_{ij}|^4|u_{ik}|^4) = E(|u_{ij}|^8)/6
\quad (i\neq m, j\neq k). \label{eq:meanUij4Umj4}
\end{eqnarray}
\end{lemma}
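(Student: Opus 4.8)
The plan is to split the eight displayed identities into two groups handled by very different means. The vanishing identities \eqref{eq:meanUij}, \eqref{eq:meanUijUkj} and \eqref{eq:meanUijUkjUimUkm} will follow immediately from Proposition~\ref{prop:book}, whereas the modulus moments \eqref{eq:varUij}--\eqref{eq:meanUij4Umj4} reduce to identifying the distribution of the squared entries along a single row or column of $U$.

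For the vanishing moments I would simply verify the hypothesis of Proposition~\ref{prop:book}. In \eqref{eq:meanUij} the lone factor $u_{ij}$ contributes $k_1-m_1=1$ to the row index $i$, so the corresponding sum is nonzero and the mean is $0$. In \eqref{eq:meanUijUkj} with $i\neq k$, the factor $u_{ij}$ is the only one touching row $i$, so $\sum_{i_r=i}(k_r-m_r)=1\neq0$. In \eqref{eq:meanUijUkjUimUkm} with $i\neq k$, the two factors $u_{ij}$ and $u_{im}$ both touch row $i$ and each contributes $+1$, so $\sum_{i_r=i}(k_r-m_r)=2\neq0$. In all three cases Proposition~\ref{prop:book} yields $0$. (Equivalently, Haar measure is invariant under $U\mapsto D_1UD_2$ with diagonal unitary $D_1,D_2$, and each of these monomials acquires a nontrivial phase under a suitable $D_1$, forcing its expectation to vanish.)

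For the modulus moments I would reduce everything to one distributional fact. By left-invariance of Haar measure the $j$-th column $Ue_j$ is uniformly distributed on the complex unit sphere of $\mathbb{C}^N$, and by the invariance $U\mapsto U^T$ so is every row of $U$. Realizing a uniform sphere vector as $g/\|g\|_2$ with $g$ a standard complex Gaussian vector, and using that the $|g_r|^2$ are i.i.d.\ $\mathrm{Exp}(1)$, we get that $(|u_{1j}|^2,\dots,|u_{Nj}|^2)$ is Dirichlet distributed with all parameters equal to $1$, i.e.\ uniform on the standard simplex. The Dirichlet moment formula then gives, for nonnegative integers $a_1,\dots,a_N$,
\[
E\Bigl(\prod_{r=1}^N |u_{rj}|^{2a_r}\Bigr)
  = \frac{(N-1)!}{\bigl(N-1+\sum_{r=1}^N a_r\bigr)!}\,\prod_{r=1}^N a_r! ,
\]
and the exponent choices $(1)$, $(2)$, $(4)$, $(1,1)$ and $(2,2)$ yield \eqref{eq:varUij}, \eqref{eq:4momentUij}, \eqref{eq:8momentUij}, \eqref{eq:meanUij2Umj2} and \eqref{eq:meanUij4Umj4} respectively; in particular $E(|u_{ij}|^4|u_{mj}|^4)=4/(N(N+1)(N+2)(N+3))=E(|u_{ij}|^8)/6$. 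The ``same-row'' versions of \eqref{eq:meanUij2Umj2}--\eqref{eq:meanUij4Umj4} require no extra work since a row of $U$ has the same law as a column.

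The only real obstacle is that distributional step: recognizing that the squared moduli along a line of $U$ are uniform on the simplex; once that is in place, all five even moments are a one-line substitution. If one prefers to stay purely within the framework of Proposition~\ref{prop:book}, an alternative is to bootstrap from the normalization identities $\sum_i|u_{ij}|^2=1$ and $\sum_i u_{ij}\bar u_{ik}=\delta_{jk}$: raising them to the second, third and fourth powers, taking expectations, exploiting row-permutation symmetry, and invoking Proposition~\ref{prop:book} to annihilate the cross terms that are unbalanced in a column index, one arrives at a small triangular linear system whose solution is exactly \eqref{eq:varUij}--\eqref{eq:meanUij4Umj4}. This route is elementary but noticeably more tedious, so I would only sketch it.
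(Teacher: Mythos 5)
Your proposal is correct, and the first half coincides with the paper's proof: the three vanishing identities (\ref{eq:meanUij}), (\ref{eq:meanUijUkj}), (\ref{eq:meanUijUkjUimUkm}) are obtained exactly as in the paper, by checking the imbalance condition of Proposition~\ref{prop:book} (your row-$i$ sums of $1$, $1$, and $2$ match the paper's explicit index choices). For the even moments, however, you take a genuinely different and cleaner route. The paper imports (\ref{eq:varUij}), (\ref{eq:4momentUij}), (\ref{eq:meanUij2Umj2}) and the general $2k$-th moment formula from Hiai--Petz, and then spends the bulk of the proof on (\ref{eq:meanUij4Umj4}): it uses that $u_{im}$ and $u_{im}\cos\theta+u_{jm}\sin\theta$ are identically distributed, expands the eighth moment and the mixed moment $E(|u_{im}|^2|u_{jm}|^6)$, integrates the trigonometric coefficients over $[0,2\pi]$, and solves the resulting two-equation linear system to extract $E(|u_{im}|^4|u_{jm}|^4)=E(|u_{im}|^8)/6$. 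You instead observe that a column (and, by invariance of Haar measure under transposition, a row) of $U$ is uniform on the complex unit sphere, so the vector of squared moduli is $\mathrm{Dirichlet}(1,\dots,1)$, i.e.\ uniform on the simplex; the Dirichlet moment formula then yields all five identities (\ref{eq:varUij})--(\ref{eq:meanUij4Umj4}) by direct substitution, and the hardest one, (\ref{eq:meanUij4Umj4}), becomes a one-line computation ($4/(N(N+1)(N+2)(N+3))=E(|u_{ij}|^8)/6$). I verified the exponent substitutions and they all give the stated constants. What your approach buys is brevity and a single unifying distributional fact; what it does not cover (and does not need to for this lemma) are moments mixing entries from distinct rows \emph{and} distinct columns, for which one would still need the Weingarten-type machinery of Proposition~\ref{prop:book}. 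The concluding bootstrap sketch is a reasonable fallback but is not needed given the Dirichlet argument.
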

\begin{proof}
Using Proposition \ref{prop:book} with $l=1, k_1 = 1, m_1 = 0, i_1 = i$,
  and $j_1=j$ gives (\ref{eq:meanUij}).

Using the same proposition with 
  $l=2, i_1=i, j_1=j_2=j, i_2 = k, m_1=0, k_1=1, m_2=1, k_2=0$
  and picking $i_r = i_1=i$, gives (\ref{eq:meanUijUkj}).

Choosing $l=4, i_1=i_3=i, j_1=j_2=j, i_2=i_4=k, j_3=j_4=m$, $k_1 = k_3 = 1, k_2=k_4=0$
and $m_1=m_3=0, m_2=m_4=1$ and picking $i_r = i$ with $r \in \left\{1,3\right\}$
gives (\ref{eq:meanUijUkjUimUkm}).

Equations (\ref{eq:varUij}), (\ref{eq:4momentUij}), (\ref{eq:meanUij2Umj2}) 
  are borrowed from \cite[Prop.4.2.3, p.138]{Hiai-Petz-book}.
Lemma 4.2.4 in \cite{Hiai-Petz-book} states:
\begin{equation}
E(|U_{ij}|^{2k}) = \left(
\begin{array}{c}
N+k-1\\
N-1
\end{array}
\right)^{-1}.
\label{eq:moment2k}
\end{equation}
By setting $k=4$ we obtain (\ref{eq:8momentUij}). 
The derivation of (\ref{eq:meanUij4Umj4}) is based on the proof of 
  Proposition 4.2.3 in the above book but is more tedious.
We exploit the idea that $\uim$ and $(\uim \cos\theta + \ujm \sin\theta)$
  are identically distributed, and consequentially have the same expectations and 
  moments.
After algebraically expanding the eighth moment of the absolute value, 
  we observe that most terms vanish because of Proposition \ref{prop:book}. 
The surviving terms are the following:
\begin{eqnarray}
E(|\uim|^8) &=& E(|\uim \cos\theta + \ujm \sin\theta|^8) \nonumber \\
	    &=& E(|\uim|^8\cos^8\theta) + E(|\ujm|^8 \sin^8\theta) + 16E(|\uim|^2|\ujm|^6\cos^2\theta \sin^6\theta) \nonumber \\
	    & & + 16E(|\uim|^6|\ujm|^2\cos^6\theta  \sin^2\theta) + 36E(|\uim|^4|\ujm|^4\cos^4\theta  \sin^4\theta).
	\label{eq:umi8-1}
\end{eqnarray}
First, we note that  $E(|\uim|^8) = E(|\ujm|^8)$ and similarly $E(|\uim|^2|\ujm|^6) = E(|\uim|^6|\ujm|^2)$.
Then by simple integration on $[0,2\pi]$ we obtain,
$E(\cos^8\theta+\sin^8\theta) = 35/64, \
E(\cos^2\theta\sin^6\theta) = E(\cos^6\theta\sin^2\theta) = 5/128$, and 
$E(\cos^4\theta\sin^4\theta) = 3/128$.
By substituting in (\ref{eq:umi8-1}) we obtain,
\begin{equation}
29 E(|\uim|^8) = 80 E(|\uim|^2|\ujm|^6) + 54 E(|\uim|^4|\ujm|^4).  \label{eq:umi8-2}
\end{equation}
We still need to determine the term  $E(|\uim|^2|\ujm|^6)$.
We follow the same procedure, noting also that $\ujm$ is identically distributed
with $-\sin\theta\uim+\cos\theta\ujm$. By expanding the following moments and canceling 
  several terms because of Proposition \ref{prop:book} we have,
\begin{eqnarray}
E(|\uim|^2|\ujm|^6) 
           &=& E(|\uim\cos\theta+\ujm\sin\theta|^2|\ujm\cos\theta-\uim\sin\theta|^6) \nonumber \\
           &=& E(\uim^8\cos^2\theta\sin^6\theta) + E(\ujm^8\cos^6\theta\sin^2\theta) 
		          + E(\uim^2\ujm^6\cos^8\theta) + E(\uim^6\ujm^2\sin^8\theta \nonumber)\\
	   & & +E(\uim^2\ujm^6(9\cos^4\theta\sin^4\theta - 6\cos^6\theta\sin^2\theta))
			+E(\uim^6\ujm^2(9\cos^4\theta\sin^4\theta-6\cos^2\theta\sin^6\theta)) \nonumber\\
	   & & +E(\uim^4\ujm^4(9\cos^2\theta\sin^6\theta- 18\cos^4\theta\sin^4\theta +
							9\cos^6\theta\sin^2\theta)). \label{eq:umi2ujm6-1}
\end{eqnarray}
As before, we consolidate all expectations that are the same and integrate to find the following expectations:
$E(\cos^2\theta\sin^6\theta)+\cos^6\theta\sin^2\theta)= 5/64,~
E( \cos^8\theta + \sin^8\theta + 18\cos^4\theta\sin^4\theta - 6\cos^6\theta\sin^2\theta - 6\cos^2\theta\sin^6\theta)= 1/2,$ and
$E(9\cos^2\theta\sin^6\theta - 18\cos^4\theta\sin^4\theta + 9\cos^6\theta\sin^2\theta) = 9/32$. 
This yields the following equation:
\begin{equation}
E(|\uim|^2|\ujm|^6)  = 5/32 E(|\uim|^8) + 9/16 E(|\uim|^4|\ujm|^4).  \label{eq:umi2umk6-2}
\end{equation}
By substituting (\ref{eq:umi2umk6-2}) into (\ref{eq:umi8-2}) we obtain the desired (\ref{eq:meanUij4Umj4}).
\end{proof}

{\bf Remark 1.} When $U$ is a real orthogonal matrix the formulas 
(\ref{eq:meanUij})--(\ref{eq:varUij}) and (\ref{eq:meanUij2Umj2}) hold, but the fourth 
moment is given by $E(u_{ij}^4) = 3/(N^2+N)$. 
In the proof of the Proposition 4.2.3 in \cite[p.139]{Hiai-Petz-book} 
  the term $E(u_{11}^2\bar{u}_{21}^2) + E(\bar{u}_{11}^2u_{21}^2)$ 
  vanishes in the unitary case, but in the real case it
  becomes $2E(u_{11}^2u_{21}^2)=2/(N^2+N)$.

{\bf Remark 2.}
The results of the above Lemma can also be obtained by physics and 
combinatorial considerations as for example in Lattice QCD \cite{creutz1983quarks}.

To study $\Delta_{ml}$, we first assume that for general non-Hermitian matrices
  $U$ is statistically independent from $V$. 
We address Hermitian matrices ($U=V$) separately.
Statistical independence cannot be claimed between all the elements of $U$.
In fact, we know that the elements of up to $k\times k$ submatrices of $U$, 
where $k=o(\sqrt{N})$, can be approximated by independent Gaussians, 
${\cal N}(0,1/\sqrt{N})$ \cite{2006math......1457J}. 
We will not use this result because our formulas involve more than 
  ${\cal O}(N)$ elements of $\Delta$. 
Instead, we will find the expected
  value and variance of elements $\Delta_{ml}$.

\begin{lemma} 
Let $\Delta_{ml}$ be defined in (\ref{eq:Delta}) and assume $m\neq l$. 
For non-Hermitian matrices it holds,
\begin{eqnarray}
	E(\Delta_{ml})    &=& 0,\\
	\Var(\Delta_{ml}) &=& 1/(N(N+1)^2),\\
	E(\Delta_{mm})    &=& 1/N,\\
	\Var(\Delta_{mm}) &=& (N-1)/(N^2(N+1)^2).
\end{eqnarray}
\label{lemma:DeltaNonHerm}
\end{lemma}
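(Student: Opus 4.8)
The plan is to compute the four moments directly from the definition $\Delta_{ml} = \sum_{i=1}^N \cuim \vim \uil \cvil$, using the statistical independence of $U$ and $V$ and the moment formulas in the preceding Lemma. Throughout I would use $E(\cuim\vim) = E(\cuim)E(\vim) = 0$ (from \eqref{eq:meanUij}), so the cross terms collapse aggressively.

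For $E(\Delta_{ml})$ with $m\neq l$: by independence of $U$ and $V$, $E(\Delta_{ml}) = \sum_i E(\cuim\uil)\,E(\vim\cvil)$. Since $m\neq l$, the columns $m$ and $l$ of $U$ are distinct, and \eqref{eq:meanUijUkj} — or more precisely the $l=2$ case of Proposition~\ref{prop:book} with exponents forcing a nonzero column-sum of $k_r-m_r$ — gives $E(\cuim\uil) = 0$ for every $i$ (including $i$ fixed: the term $\cuim\uil$ has unbalanced column indices). Hence $E(\Delta_{ml}) = 0$. For $E(\Delta_{mm}) = \sum_i E(|\uim|^2)E(|\vim|^2) = \sum_i (1/N)(1/N) = 1/N$ by \eqref{eq:varUij}.

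For the variances I would write $\Var(\Delta_{ml}) = E(|\Delta_{ml}|^2) - |E(\Delta_{ml})|^2$ and expand $|\Delta_{ml}|^2 = \sum_{i}\sum_{j} (\cuim\vim\uil\cvil)(\ujm\cvjm\cujl\vjl)$. Separating the $U$-factors from the $V$-factors by independence, the summand's expectation is $E(\cuim\uil\ujm\cujl)\,E(\vim\cvil\cvjm\vjl)$. For $m\neq l$, Proposition~\ref{prop:book} kills every term unless the column-index multiset is balanced in each column separately; this forces $i=j$ in both the $U$ and the $V$ factor. So only the diagonal $i=j$ survives, giving $\Var(\Delta_{ml}) = \sum_i E(|\uim|^2|\uil|^2)\,E(|\vim|^2|\vil|^2) = \sum_i \bigl(1/(N(N+1))\bigr)^2 = 1/(N(N+1)^2)$ using \eqref{eq:meanUij2Umj2}. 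For $\Var(\Delta_{mm})$ the same expansion gives $E(|\Delta_{mm}|^2) = \sum_{i,j} E(|\uim|^2|\ujm|^2)\,E(|\vim|^2|\vjm|^2)$; here both $i=j$ and $i\neq j$ contribute. Splitting, $\sum_{i=j} E(|\uim|^4)E(|\vim|^4) = N\bigl(2/(N(N+1))\bigr)^2$ via \eqref{eq:4momentUij}, and $\sum_{i\neq j} \bigl(1/(N(N+1))\bigr)^2 = N(N-1)/(N(N+1))^2$ via \eqref{eq:meanUij2Umj2}; adding these and subtracting $E(\Delta_{mm})^2 = 1/N^2$ should collapse to $(N-1)/(N^2(N+1)^2)$ after routine algebra.

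The main obstacle is bookkeeping rather than any deep idea: one must be careful that Proposition~\ref{prop:book} is applied with the correct identification of which column indices appear with which exponents, so that the claim "only $i=j$ survives" for the off-diagonal case is airtight — in particular one must check there is no surviving contribution from $i\neq j$ when $m\neq l$, which relies on the fact that a factor like $\cuim\uil\ujm\cujl$ with $i\neq j$ has, in column $m$, the pair $\cuim\ujm$ with distinct row indices and unbalanced row-sums, hence vanishes by the row-index hypothesis of Proposition~\ref{prop:book}. The arithmetic consolidation at the end for $\Var(\Delta_{mm})$ is the only other place requiring care, but it is elementary.
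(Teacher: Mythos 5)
Your route is the same as the paper's: use independence of $U$ and $V$ to factor each expectation, reduce everything to the moment formulas of the preceding Lemma, and for the variances expand $|\Delta_{ml}|^2$ as a double sum over $i,j$. Your computations of $E(\Delta_{ml})$, $E(\Delta_{mm})$ and $\Var(\Delta_{mm})$ are correct and match the paper. The genuine gap is the step you yourself single out as delicate: the claim that for $m\neq l$ the cross terms $i\neq j$ vanish. Your justification --- that in column $m$ the pair $\cuim\ujm$ has ``unbalanced row-sums'' --- misreads Proposition~\ref{prop:book}: the sums $\sum_{i_r=i}(k_r-m_r)$ there run over \emph{all} factors carrying row index $i$, not over the factors lying in one column. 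For the monomial $\cuim\uil\ujm\cujl$, row $i$ carries one conjugated factor ($\cuim$) and one unconjugated one ($\uil$), row $j$ likewise, and the same holds for columns $m$ and $l$; hence every row- and column-sum is zero, the hypothesis of the proposition is not met for \emph{any} $i,j$, and it gives no information (in particular, balance does not ``force $i=j$''). In fact the cross terms do not vanish: for $i\neq j$, $m\neq l$ one has $E(\cuim\uil\ujm\cujl)=-1/(N(N^2-1))$, the Weingarten weight of the transposition; for $N=2$ an explicit Haar parametrization gives $E(\bar u_{11}u_{12}u_{21}\bar u_{22})=-E(\cos^2\theta\sin^2\theta)=-1/6$, in agreement. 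The same holds for the $V$-factor, so the $N(N-1)$ cross terms contribute $N(N-1)/\bigl(N(N^2-1)\bigr)^2$, and the exact value is $\Var(\Delta_{ml})=1/((N-1)(N+1)^2)$ rather than $1/(N(N+1)^2)$.

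You are in good company: the paper's own proof discards the same cross terms by citing (\ref{eq:meanUijUkjUimUkm}), but that identity concerns $u_{ij}\bar u_{kj}u_{im}\bar u_{km}$, where \emph{both} unconjugated factors sit in row $i$ and both conjugated ones in row $k$, so the row-sums genuinely fail to balance there; the conjugation pattern of $\cuim\uil\ujm\cujl$ is different and the identity does not apply. The discrepancy is harmless downstream --- both constants are $\Theta(N^{-3})$, which is all that the subsequent analysis and Theorem~\ref{thm:exp-vars} use --- but as a derivation of the stated constant $1/(N(N+1)^2)$ the cancellation step in your argument (and in the paper's) does not close.
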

\begin{proof}
If $m\neq l$, (\ref{eq:meanUijUkj}) implies
  $E(\Delta_{ml}) =  \sum_{i=1}^N E(\cuim \uil) E(\vim \cvil) 
	= \sum_{i=1}^N \overline{E(\uim \cuil)} E(\vim \cvil) = 0$.
 From (\ref{eq:varUij}) we have
 $\Delta_{mm} = \sum_{i=1}^N E(\cuim \uim) E(\vim \cvim) = \sum_{i=1}^N 1/N^2 = 1/N$. \\
For the variance of non-diagonal elements we have,
\begin{equation*}
\begin{array}{lclr}
   \Var(\Delta_{ml}) &=& E(\Delta_{ml}\overline{\Delta_{ml}})-|E(\Delta_{ml})|^2 \\
   &=& E((\sum_{i=1}^N \cuim \uil \vim \cvil) (\sum_{i=1}^N \uim \cuil \cvim \vil)) 
	\qquad & (\mbox{since }E(\Delta_{ml}=0) \\
   &=& \sum_{i,j=1}^N E(\cuim \uil \ujm \cujl)E(\vim \cvil \cvjm \vjl)\\
   &=& \sum_{i=1}^N E(\cuim \uil \uim \cuil)E(\vim \cvil \cvim \vil)  
	\qquad & (i\neq j \mbox{ terms are 0 from (\ref{eq:meanUijUkjUimUkm}))} \\
	&=& \sum_{i=1}^N E(|\uil|^2 |\uim|^2)E(|\vim|^2 |\vil|^2) = \frac{1}{N(N+1)^2}
	\qquad & (\mbox{from (\ref{eq:meanUij2Umj2}))}. 
\end{array}
\end{equation*}
For the diagonal we use (\ref{eq:meanUij2Umj2}) and (\ref{eq:4momentUij}), 
\begin{equation*}
\begin{array}{lclr}
   \Var(\Delta_{mm}) &=& E(\Delta_{mm}\overline{\Delta_{mm}})-|E(\Delta_{mm})|^2 \\
     &=& E((\sum_{i=1}^N |\uim|^2 |\vim|^2) (\sum_{i=1}^N |\uim|^2 |\vim|^2))-1/N^2 \\
     &=& \sum_{i=1,j=1}^N E(|\uim|^2 |\ujm|^2)E(|\vim|^2 |\vjm|^2) -1/N^2 \\
     &=& \sum_{i=1,j=1,i\neq j}^N E(|\uim|^2 |\ujm|^2)E(|\vim|^2 |\vjm|^2) +
	\sum_{i=1}^N E(|u_{mm}|^4)E(|v_{mm}|^4) -1/N^2  \\
	&=& \frac{N^2-N}{N^2(N+1)^2} + \frac{4N}{ N^2(N+1)^2} -\frac{1}{N^2} = \frac{N-1}{N^2(N+1)^2}.
\end{array}
\end{equation*}
\end{proof}
\begin{lemma} Let $\Delta_{ml}$ be defined in (\ref{eq:Delta}) and assume $m\neq l$. 
For Hermitian matrices ($U=V$) it holds,
\begin{eqnarray}
	E(\Delta_{ml})    &=& 1/(N+1),\\
	\Var(\Delta_{ml}) &=& 2(N-1)/((N+1)^2(N+2)(N+3)),\\
	E(\Delta_{mm})    &=& 2/(N+1),\\
	\Var(\Delta_{mm}) &=&  4(N-1)/((N+1)^2(N+2)(N+3)).
\end{eqnarray}
\label{lemma:DeltaHerm}
\end{lemma}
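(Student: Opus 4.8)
The plan is to specialize the general setup to $U=V$, reduce $\Delta$ to a purely modulus object, obtain the two means by linearity, and then get the two variances by expanding $E(\Delta^2)$ and feeding in the moment formulas of the preceding Lemma, with the one genuinely new fourth moment computed by the same rotation-invariance device used for~(\ref{eq:meanUij4Umj4}). Substituting $V=U$ into~(\ref{eq:Delta}) gives $\Delta_{ml}=\sum_{i=1}^N|u_{im}|^2|u_{il}|^2$ and $\Delta_{mm}=\sum_{i=1}^N|u_{im}|^4$, so in the Hermitian case $\Delta$ is real, nonnegative, symmetric, and (as already noted) doubly stochastic. The means are then immediate from linearity of expectation: by~(\ref{eq:4momentUij}), $E(\Delta_{mm})=N\cdot\frac{2}{N(N+1)}=\frac{2}{N+1}$, and for $m\neq l$, by~(\ref{eq:meanUij2Umj2}), $E(\Delta_{ml})=N\cdot\frac{1}{N(N+1)}=\frac{1}{N+1}$; these are consistent with the doubly stochastic identity $\sum_l E(\Delta_{ml})=E(\Delta_{mm})+(N-1)E(\Delta_{ml})=1$, which I would use as a first sanity check.

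For the variances I would write $\Var(\Delta_{ml})=E(\Delta_{ml}^2)-E(\Delta_{ml})^2$ (legitimate since $\Delta_{ml}$ is real) and expand $\Delta_{ml}^2=\sum_{i,j}|u_{im}|^2|u_{il}|^2|u_{jm}|^2|u_{jl}|^2$, splitting the double sum into the $i=j$ part ($N$ terms) and the $i\neq j$ part ($N(N-1)$ terms). In the diagonal case $m=l$: the $i=j$ terms are $E(|u_{im}|^8)$ from~(\ref{eq:8momentUij}) and the $i\neq j$ terms are $E(|u_{im}|^4|u_{jm}|^4)=E(|u_{ij}|^8)/6$ from~(\ref{eq:meanUij4Umj4}); assembling gives $E(\Delta_{mm}^2)=\frac{4(N+5)}{(N+1)(N+2)(N+3)}$, and subtracting $\frac{4}{(N+1)^2}$ yields $\Var(\Delta_{mm})=\frac{4(N-1)}{(N+1)^2(N+2)(N+3)}$ after routine simplification. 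Notably this case needs nothing beyond the Lemma. For $m\neq l$ the $i=j$ terms are $E(|u_{im}|^4|u_{il}|^4)=E(|u_{ij}|^8)/6$ (again~(\ref{eq:meanUij4Umj4})), but the $i\neq j$ terms require the moment $M:=E(|u_{im}|^2|u_{il}|^2|u_{jm}|^2|u_{jl}|^2)$ with $i\neq j$ and $m\neq l$, i.e.\ the expected product of $|\cdot|^2$ over the four corners of a generic $2\times2$ submatrix, which is \emph{not} covered by the formulas of the Lemma.

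Computing $M$ is the main obstacle, and I would handle it exactly as the Lemma's proof treats~(\ref{eq:meanUij4Umj4}): since multiplying $U$ on the left or right by a fixed real plane-rotation preserves the Haar measure, rotating within columns $m$ and $l$ (and, if needed, within rows $i$ and $j$) leaves $M$ invariant; expanding the rotated product, most monomials vanish by Proposition~\ref{prop:book}, and the survivors are expressible through a short list of degree-$8$ moments over that $2\times2$ submatrix: the modulus-type ones such as $E(|u_{im}|^4|u_{jl}|^4)$ and $E(|u_{im}|^4|u_{jm}|^2|u_{jl}|^2)$, a few non-modulus ones such as $E(u_{im}^2u_{jl}^2\bar u_{il}^2\bar u_{jm}^2)$, and the already-known $E(|u_{ij}|^8)$ and $E(|u_{ij}|^4|u_{jm}|^4)$. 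Integrating the surviving trigonometric coefficients over $[0,2\pi]$ turns these relations into a small linear system whose solution gives $M$; alternatively, and more directly, $M$ can be read off from the unitary Weingarten calculus or from the link-variable integration formulas referenced in Remark~2. The heavier bookkeeping here (compared with~(\ref{eq:meanUij4Umj4})) comes from several inequivalent $2\times2$ moments being coupled rather than just two, so closing the system and then substituting $M$ back into the $i\neq j$ sum to obtain the stated closed form for $\Var(\Delta_{ml})$ is where I expect the bulk of the effort. As a final check one can verify that the resulting second moments satisfy the relation obtained by squaring $\sum_l\Delta_{ml}=1$ and taking expectations.
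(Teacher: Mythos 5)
Your proposal follows essentially the same route as the paper: the same reduction $\Delta_{ml}=\sum_i|u_{im}|^2|u_{il}|^2$ with the doubly stochastic observation, the same use of (\ref{eq:meanUij2Umj2}) and (\ref{eq:4momentUij}) for the two means, and the identical computation of $\Var(\Delta_{mm})$ from (\ref{eq:8momentUij}) and (\ref{eq:meanUij4Umj4}). For $\Var(\Delta_{ml})$ the paper itself only states that ``a similar but lengthier strategy'' applies, and your outline — isolating the one new moment $E(|u_{im}|^2|u_{il}|^2|u_{jm}|^2|u_{jl}|^2)$ and computing it by the same rotation-invariance device (or Weingarten calculus) — is exactly that strategy, so you are, if anything, slightly more explicit than the published argument.
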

\begin{proof}
When $U=V$, $\Delta_{ml} = \sum_{i=1}^N |\uim|^2|\uil|^2$.
Moreover, $\sum_{m=1}^N \Delta_{ml} = \sum_{l=1}^N \Delta_{ml}=1$, and because $\Delta_{ml} \geq 0$, 
 $\Delta$ is a doubly stochastic matrix. 
 From (\ref{eq:meanUij2Umj2}) we have 
  $E(\Delta_{ml}) = \sum_{i=1}^N E(|\uim|^2 |\uil|^2) = 1/(N+1)$.
 From (\ref{eq:4momentUij}) we have $E(\Delta_{mm}) = \sum_{i=1}^N E(|\uim|^4) = 2/(N+1)$.
 We turn our attention to the variance of $\Delta_{mm}$, which is larger than of $\Delta_{ml}$.
Using (\ref{eq:8momentUij}) and (\ref{eq:moment2k}) we have,
\begin{equation*}
\begin{array}{lclr}
\Var(\Delta_{mm}) &=&  E((\sum_{i=1}^N |\uim|^4)^2) - |E(\Delta_{mm})|^2
     		= \sum_{i=1,j=1}^N E(|\uim|^4 |\ujm|^4) -4/(N+1)^2 \\
     &=& \sum_{i=1,j=1,i\neq j}^N  E(|\uim|^4 |\ujm|^4) +  \sum_{i=1}^N E(|\uim|^8) - 4/(N+1)^2\\
     &=& (N^2-N)E(|\uim|^8)/6 +   N E(|\uim|^8) - 4/(N+1)^2\\
     &=& 4(N+5)/((N+1)(N+2)(N+3)) - 4/(N+1)^2 = 4(N-1)/((N+1)^2(N+2)(N+3)).
\end{array}
\end{equation*}
A similar but lengthier strategy yields also $\Var(\Delta_{ml})$.
\end{proof}

We are now able to characterize the effect of deflation on matrices 
  with random singular vectors.
We use the variance of the elements of $\Delta$ to ascertain their magnitude as
  $|\Delta_{ml} - E(\Delta_{ml})| = O(\sqrt{\Var(\Delta_{ml}))}$.
The above results show that in the non-Hermitian case 
$|\Delta_{ml}| = \Theta(1/N^{1.5})$, while in the Hermitian case 
$|\Delta_{ml} - 1/(N+1)| = \Theta(1/N^{1.5})$, 
\textcolor{black}{
   and thus $\Delta_{ml} = 1/(N+1) + \Theta(1/N^{1.5})$.
The diagonal elements in both cases are around $1/N$ and $1/(N+1)$ respectively.
}
Using these formulas for $\Delta_{ml}$ and for $\Delta_{mm}$, we revisit
  the sufficient but pessimistic condition of (\ref{eq:crudeModel}) for non-Hermitian and Hermitian
  matrices,
\begin{equation}
	\mbox{non-Hermitian:  }
        \sigma_m > \Theta(\frac{1}{N^{1.5}}) \sum_{l=m+1}^{N} \sigma_l,
	\quad
	\mbox{Hermitian:  }
	\textcolor{black}{
	\sigma_m > (\frac{1}{N-1}+\Theta(1/N^{1.5})) \sum_{l=m+1}^{N} \sigma_l,
	}
        \quad m=1,\ldots,k.
\label{eq:crudeInequalities}
\end{equation}
We are seeking the singular value distributions that would satisfy (\ref{eq:crudeInequalities}).
If we model the summations as $\int_{m+1}^N \sigma(x) dx$, we can readily verify 
  that the least decaying series that satisfy the inequalities for all $m$ are 
\begin{equation}
	\mbox{non-Hermitian:  }
	\sigma_{i} = \Theta(\sqrt{N-i+1}),\qquad
	\mbox{Hermitian:  }
	\sigma_{i} = \Theta(N-i+1).
	\label{eq:improvedPessimistic}
\end{equation}
It is remarkable that it is harder for Hermitian matrices to achieve variance reduction; 
  in other words the deflated singular values must decay much faster (have larger separations)
  to achieve the same variance reduction as in a non-Hermitian matrix.
On the other hand, the $\Delta_{mm}$ and $\Delta_{ml}$ are positive and larger
  for Hermitian than non-Hermitian matrices, which implies that 
  the subtracting term in (\ref{eq:vA-vAR}) is always larger for 
  Hermitian matrices.
Therefore, a Hermitian matrix is expected to have lower starting variance 
  than a non-Hermitian matrix with the same singular spectrum.
We conclude that although non-Hermitian matrices outperform Hermitian ones 
  in variance reduction, it is because they have more variance to reduce.

The above analysis is intuitively useful, but dependent on the pessimistic
  condition (\ref{eq:crudeModel}).
The following theorem gives the expected variance of our trace estimator as 
  an expression of only the mean and variance of the singular values.
Because of the small variance of the $\Delta_{ml}$ elements in 
  Lemma \ref{lemma:DeltaHerm}, the expected variance is very accurate.
\begin{theorem}
Define the mean and the variance of the $N-k$ singular values of $A_R$,
$\mu_k =\frac{1}{N-k} \sum_{m=k+1}^{N}\sigma_m$, and
$V_k = \frac{1}{N-k} \sum_{m=k+1}^N (\sigma_m - \mu_k)^2$, respectively.
Then, for non-Hermitian matrices it holds
\begin{equation*}
E(\Var(t(A_R))) = 
	(N-k) (1-\frac{1}{N}) (V_k + \mu_k^2)
\end{equation*}
and for Hermitian matrices,
\begin{equation*}
E(\Var(t(A_R))) = 
	(N-k)\left( V_k \frac{N}{N+1} + \mu_k^2 \frac{k}{N+1} \right).
\end{equation*}
In addition, the relative standard deviation of our variance estimator, $\Var(t(A_R))$, 
 is bounded by
\begin{equation*}
	\frac{\operatorname{StdDev}( \Var(t(A_R)) )} {E(\Var(t(A_R))) }
	\leq 
	{\cal O}(\frac{N-k}{N^{1.5}}).
\end{equation*}
\label{thm:exp-vars}
\end{theorem}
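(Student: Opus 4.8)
The plan is to apply the expectation operator, taken over the Haar distribution on the singular vectors, to the exact variance identity (\ref{eq:varAr}), reading off the first and second moments of the entries of $\Delta$ from Lemmas \ref{lemma:DeltaNonHerm} and \ref{lemma:DeltaHerm}, and then to bound how far $\Var(t(A_R))$ deviates from its mean.

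For the two mean formulas, first write $S_1 = \sum_{m=k+1}^N \sigma_m = (N-k)\mu_k$ and $S_2 = \sum_{m=k+1}^N \sigma_m^2 = (N-k)(V_k+\mu_k^2)$, the second equality being the usual mean--variance decomposition. Split the double sum in (\ref{eq:varAr}) into its diagonal ($m=l$) and off-diagonal ($m\neq l$) parts, using $\sum_{m\neq l}\sigma_m\sigma_l = S_1^2-S_2$. In the non-Hermitian case all off-diagonal means vanish and $E(\Delta_{mm})=1/N$, so $E(\Var(t(A_R))) = (1-\tfrac1N)S_2$, which is the stated expression. In the Hermitian case $E(\Delta_{mm})=2/(N+1)$ and $E(\Delta_{ml})=1/(N+1)$, giving $E(\Var(t(A_R))) = \tfrac{N}{N+1}S_2 - \tfrac{1}{N+1}S_1^2$; substituting $S_1,S_2$ in terms of $\mu_k,V_k$ and collecting the $\mu_k^2$ terms produces $(N-k)\big(\tfrac{N}{N+1}V_k + \tfrac{k}{N+1}\mu_k^2\big)$. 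This step is pure bookkeeping.

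For the relative standard deviation, observe that $\sum_{m=k+1}^N\sigma_m^2 = S_2$ is deterministic, so $\operatorname{StdDev}(\Var(t(A_R))) = \operatorname{StdDev}(Y)$ with $Y = \sum_{m,l=k+1}^N \sigma_m\sigma_l\Delta_{ml}$. Since the coefficients $\sigma_m\sigma_l$ are nonnegative, Minkowski's inequality gives $\operatorname{StdDev}(Y) \le \sum_{m,l=k+1}^N \sigma_m\sigma_l\,\operatorname{StdDev}(\Delta_{ml})$, and from the two $\Delta$-lemmas each $\operatorname{StdDev}(\Delta_{ml}) = \sqrt{\Var(\Delta_{ml})} = O(N^{-1.5})$ uniformly in $m,l$; hence $\operatorname{StdDev}(Y) \le O(N^{-1.5})S_1^2 \le O(N^{-1.5})(N-k)S_2$ by Cauchy--Schwarz ($S_1^2\le(N-k)S_2$). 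Dividing by $E(\Var(t(A_R)))$ and bounding it below by $\tfrac12 S_2$ in the non-Hermitian case, or by $\tfrac{k}{N+1}S_2$ in the Hermitian case (again via $S_1^2\le(N-k)S_2$ applied to $\tfrac{1}{N+1}(NS_2-S_1^2)$), yields the claimed ratio $O\!\big(\tfrac{N-k}{N^{1.5}}\big)$.

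The hard part is the Hermitian case. Its mean carries a subtractive $-\tfrac{1}{N+1}S_1^2$, so the lower bound on $E(\Var(t(A_R)))$ rests on exactly the inequality $S_1^2\le(N-k)S_2$ already spent in the numerator, and the plain Minkowski bound on $\operatorname{StdDev}(Y)$ then only delivers $O\!\big(\tfrac{N-k}{k\sqrt N}\big)$; recovering $O\!\big(\tfrac{N-k}{N^{1.5}}\big)$ for all $k$ requires the sharper estimate $\operatorname{StdDev}(Y)^2 = \sum_{m,l}\sigma_m^2\sigma_l^2\Var(\Delta_{ml}) + (\text{cross terms})$, where one checks that $\Cov(\Delta_{ml},\Delta_{m'l'})$ for distinct index pairs is of lower order (it is $\Theta(N^{-5})$, computed exactly as in Lemmas \ref{lemma:DeltaNonHerm}--\ref{lemma:DeltaHerm}), so that $\operatorname{StdDev}(Y) = O(N^{-1.5})S_2$. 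The non-Hermitian case needs none of this and follows directly from the Minkowski bound, which already returns precisely $O\!\big(\tfrac{N-k}{N^{1.5}}\big)$.
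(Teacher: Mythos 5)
Your derivation of the two mean formulas is exactly the paper's: take expectations in (\ref{eq:varAr}), split the double sum into diagonal and off-diagonal parts, insert $E(\Delta_{mm})$ and $E(\Delta_{ml})$ from Lemmas \ref{lemma:DeltaNonHerm} and \ref{lemma:DeltaHerm}, and use $\sum_{m>k}\sigma_m^2=(N-k)(V_k+\mu_k^2)$; your algebra checks out. Your bound on $\operatorname{StdDev}(\Var(t(A_R)))$ is also the same quantity the paper obtains, namely $O(N^{-1.5})S_1^2$ with $S_1=(N-k)\mu_k$ (the paper gets it by expanding $\Var(\sum\sigma_m\sigma_l\Delta_{ml})$ over covariances and applying Cauchy--Schwarz to each $\Cov(\Delta_{ij},\Delta_{ml})$; your Minkowski/triangle-inequality version is equivalent). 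Where you genuinely diverge is the final division, and here you have caught a real weakness in the published argument: the paper asserts $E(\Var(t(A_R)))\geq (N-k)(V_k+\mu_k^2)\geq (N-k)\mu_k^2$ ``for both non-Hermitian and Hermitian matrices,'' but for Hermitian matrices $E(\Var(t(A_R)))=(N-k)(\tfrac{N}{N+1}V_k+\tfrac{k}{N+1}\mu_k^2)$, which for a flat tail ($V_k\approx 0$) and small $k$ is far below $(N-k)\mu_k^2$; the sharpest bound available from the stated formula is your $\tfrac{k}{N+1}S_2$, and with the $O(N^{-1.5})S_1^2$ numerator this only yields $O(\tfrac{N-k}{k\sqrt{N}})$. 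Your diagnosis is therefore correct, and your non-Hermitian argument (where $E(\Var(t(A_R)))=(1-\tfrac1N)S_2\geq\tfrac12 S_2$) is cleanly complete.

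The one unfinished piece is your repair of the Hermitian case. You assert that $\Cov(\Delta_{ml},\Delta_{m'l'})$ for distinct index pairs is $\Theta(N^{-5})$, so that $\operatorname{StdDev}(Y)=O(N^{-1.5})S_2$; this is plausible and in the spirit of Remark 3 of the paper (which concedes that sharper estimates via the techniques of Lemmas \ref{lemma:DeltaNonHerm}--\ref{lemma:DeltaHerm} give $O(1/N)$), but it is a nontrivial eighth-moment computation over the Haar measure that you have not carried out, and the claimed exponent matters: if the cross-covariances were only $O(N^{-4})$, your improved bound would degrade and the $k$-uniform claim would again be out of reach. Also note that for $k=0$ your Hermitian lower bound $\tfrac{k}{N+1}S_2$ vanishes, so that endpoint (where $E(\Var(t(A)))=\tfrac{N^2}{N+1}V_0$ can itself be zero) needs to be excluded or treated separately. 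In short: same approach as the paper, a legitimate catch of a gap in the paper's Hermitian tail bound, but the proposed patch still rests on an unverified moment estimate.
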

\begin{proof}
First note that 
$V_k = \frac{1}{N-k} \left( \sum_{m=k+1}^N \sigma_m^2 
	- \frac{1}{N-k}\sum_{m, l=k+1}^N \sigma_m \sigma_l \right), $
which gives
\begin{eqnarray}
\sum_{m=k+1}^N \sigma_m^2  &=&
   (N-k) V_k + 
   \frac{1}{N-k}\sum_{m=k+1}^N \sum_{l=k+1}^N \sigma_m \sigma_l 
   = (N-k) V_k + (N-K) \mu_k^2.
	\label{eq:SVvar}
\end{eqnarray}
Taking expectation values in (\ref{eq:varAr}) we have,
\begin{eqnarray} 
E(\Var(t(A_R)))
     &=& 
	\sum_{m=k+1}^N \sigma_m^2 
	- E(\Delta_{mm}) \sum_{m=k+1}^N \sigma_m^2  
	- E(\Delta_{ml}) \sum_{m=k+1}^N \sum_{l=k+1,l\neq m}^N \sigma_m \sigma_l.
\label{eq:intermediate-exp}
\end{eqnarray}
Then, for non-Hermitian matrices (\ref{eq:intermediate-exp}), (\ref{eq:SVvar}) 
and Lemma \ref{lemma:DeltaHerm} yield,
\begin{equation*} 
\begin{array}{c}
	E(\Var(t(A_R))) =
	\sum_{m=k+1}^N \sigma_m^2(1- \frac{1}{N}) = (N-k) (1-\frac{1}{N}) (V_k + \mu_k^2).
\end{array}
\end{equation*}
Similarly, for Hermitian matrices we have,
\begin{equation*}
\begin{array}{rcl}
E(\Var(t(A_R)))
     &=& 
	\sum_{m=k+1}^N \sigma_m^2 
	- \frac{2}{N+1} \sum_{m=k+1}^N \sigma_m^2  
	- \frac{1}{N+1} \sum_{m=k+1}^N \sum_{l=k+1,l\neq m}^N \sigma_m \sigma_l\\
     &=&  
	\sum_{m=k+1}^N \sigma_m^2  (1- \frac{1}{N+1})
	- \frac{1}{N+1} \sum_{m=k+1}^N \sum_{l=k+1}^N \sigma_m \sigma_l \\
     &=&  
	(N-k)V_k(1-\frac{1}{N+1}) + (N-k)\mu_k^2(1-\frac{1}{N+1})  
	- \frac{(N-k)^2}{N+1} \mu_k^2 \nonumber \\
     &=&  
	(N-k)V_k(1-\frac{1}{N+1}) + k\frac{N-k}{N+1} \mu_k^2.\nonumber
\end{array}
\end{equation*}
To gauge the accuracy of the above estimation, we need to compute the variance of 
  our variance approximation. 
First note that for both non-Hermitian and Hermitian matrices, $\Var(\Delta_{ml}) \leq c/N^3$, 
  with $c$ being the maximum of the variance constants in Lemmas \ref{lemma:DeltaNonHerm} 
  and \ref{lemma:DeltaHerm}.
Then, using the rule for the variance of the sum of random variables we get,
\begin{equation*}
\begin{array}{rcl}
\Var( \Var(t(A_R)) ) &=& 
	\Var( \sum_{m=k+1}^N \sum_{l=k+1}^N \sigma_m \sigma_l \Delta_{ml} ) 
	\; = \;
	\sum_{i,j=k+1}^N \sum_{m,l=k+1}^N 
	\sigma_i \sigma_j \sigma_m \sigma_l \Cov(\Delta_{ij},\Delta_{ml}) \\
	&\leq&
	\sum_{i,j=k+1}^N \sum_{m,l=k+1}^N 
	\sigma_i \sigma_j \sigma_m \sigma_l \sqrt{\Var(\Delta_{ij})\Var(\Delta_{ml})} 
	\; \leq\;
	\frac{c}{N^3} (\sum_{m,l=k+1}^N \sigma_m \sigma_l)^2 \\
	&=&
	\frac{c(N-k)^4}{N^3} \mu_k^4.
\end{array}
\end{equation*}
Since $E(\Var(t(A_R))) \geq (N-k)(V_k+\mu_k^2) \geq (N-k)\mu_k^2$ for both non-Hermitian and 
Hermitian matrices, the relative error of using $E(\Var(t(A_R)))$ instead of 
$\Var(t(A_R))$ can be bounded as,
\begin{equation*}
	\frac{\sqrt{\Var( \Var(t(A_R)) )}}
	{E(\Var(t(A_R))) }
	\leq
	\frac{ \frac{\sqrt{c}(N-k)^2}{N^{1.5}} \mu_k^2 }
	{(N-k)\mu_k^2}
	\leq
	 \frac{\sqrt{c}(N-k)}{N^{1.5}}.
\end{equation*}
\end{proof}

{\bf Remark 3.} The bound on the relative error on the estimator is pessimistic.
In fact, using the techniques in Lemmas \ref{lemma:DeltaNonHerm} and \ref{lemma:DeltaHerm}
we could prove that the upper bound is ${\cal O}(1/N)$, which also agrees with 
experimental observations. However, such complexity is unnecessary as our goal 
is simply to show that our model for $\Var(t(A_R)))$ is sufficiently accurate for large $N$.

{\bf Remark 4.} By setting $k=0$ in Theorem \ref{thm:exp-vars} we obtain 
expressions for $E(Var(t(A)))$, the undeflated Hutchinson estimator:
$(N-1)(V_0+\mu_0^2)$ for non-Hermitian $A$ and 
$N^2 V_0/(N+1)$ for Hermitian $A$.

{\bf Remark 5.} Our original assumption that the singular vector matrices
  are random unitary is not required by Theorem \ref{thm:exp-vars}. 
It is sufficient that the elements of $\Delta_{ml}$ have expectation 
  values and variances as given by Lemmas \ref{lemma:DeltaNonHerm} and 
  \ref{lemma:DeltaHerm}.

\begin{corollary}
For non-Hermitian matrices and for any $1\leq k \leq N$, 
  $E(\Var(t(A_R))) \leq E(\Var(t(A)))$.
For Hermitian matrices, the expected deflated variance reduces only if 
$\mu_0^2  - \frac{(N-k)^2}{N^2} \mu_k^2 < \frac{1}{N} \sum_{i=1}^k \sigma_i^2$.
\label{Corollary:ratios}
\end{corollary}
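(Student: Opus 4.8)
The plan is to derive both statements directly from the closed forms in Theorem~\ref{thm:exp-vars}, reducing each to an elementary comparison of partial sums of $\sigma_m^2$. The workhorse identity is (\ref{eq:SVvar}), which gives $(N-k)(V_k+\mu_k^2)=\sum_{m=k+1}^N\sigma_m^2$ and, at $k=0$, $N(V_0+\mu_0^2)=\sum_{m=1}^N\sigma_m^2$; these are immediate from the definitions of $\mu_k$ and $V_k$.

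For the non-Hermitian case I would rewrite $E(\Var(t(A_R)))=(N-k)\bigl(1-\tfrac1N\bigr)(V_k+\mu_k^2)$ using (\ref{eq:SVvar}) as $\bigl(1-\tfrac1N\bigr)\sum_{m=k+1}^N\sigma_m^2$, and set $k=0$ to get $E(\Var(t(A)))=\bigl(1-\tfrac1N\bigr)\sum_{m=1}^N\sigma_m^2$ (cf.\ Remark~4). Subtracting, the difference equals $\bigl(1-\tfrac1N\bigr)\sum_{m=1}^k\sigma_m^2\ge 0$, which is the claim; equality holds exactly when $\sigma_1=\dots=\sigma_k=0$, consistent with the non-strict~$\le$.

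For the Hermitian case I would write the desired reduction $E(\Var(t(A_R)))<E(\Var(t(A)))$ as
\[
(N-k)\!\left(V_k\tfrac{N}{N+1}+\mu_k^2\tfrac{k}{N+1}\right) < \tfrac{N^2}{N+1}\,V_0 ,
\]
clear the common factor $N+1$, and then eliminate the variances via (\ref{eq:SVvar}): substitute $(N-k)V_k=\sum_{m=k+1}^N\sigma_m^2-(N-k)\mu_k^2$ on the left and $NV_0=\sum_{m=1}^N\sigma_m^2-N\mu_0^2$ on the right. The $\mu_k^2$ contributions collapse as $-N(N-k)\mu_k^2+k(N-k)\mu_k^2=-(N-k)^2\mu_k^2$, the two sums of $\sigma_m^2$ differ by $\sum_{m=1}^k\sigma_m^2$, and dividing through by $N^2$ leaves exactly $\mu_0^2-\frac{(N-k)^2}{N^2}\mu_k^2<\frac1N\sum_{i=1}^k\sigma_i^2$. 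This actually establishes the condition as an equivalence, so in particular the stated ``only if'' follows.

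There is no substantive obstacle: the whole argument is a substitution of (\ref{eq:SVvar}) into the formulas of Theorem~\ref{thm:exp-vars} followed by cancellation. The only point needing care is bookkeeping of the three distinct scalings $N-k$, $N$, and $N+1$ when clearing denominators, and recording that the non-Hermitian inequality degenerates to equality precisely in the trivial case above.
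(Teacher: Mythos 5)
Your proposal is correct and follows essentially the same route as the paper: both arguments substitute the identity (\ref{eq:SVvar}) into the closed forms of Theorem~\ref{thm:exp-vars} and reduce the comparison to partial sums of $\sigma_m^2$ (the paper phrases this as a ratio of deflated to undeflated variance being at most one, you as a nonnegative difference, which is only a cosmetic distinction). Your observation that the Hermitian condition is in fact an equivalence, and your explicit identification of the equality case in the non-Hermitian bound, are accurate refinements of what the paper states.
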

\begin{proof}
Based on Theorem \ref{thm:exp-vars} and Remark 4, we want the ratio
  of deflated to undeflated variance
$$
   \frac{E(\Var(t(A_R)))}{E(\Var(t(A)))} 
       = 
   \frac{ (N-k) (1-\frac{1}{N}) (V_k + \mu_k^2) }
       {(N-1)(V_0+\mu_0^2)}
       = 
   \frac{(N-k) (V_k + \mu_k^2)}
       {  N   (V_0+\mu_0^2)} \leq 1.$$
Note that $V_0 +\mu_0^2 =  \frac{1}{N} \sum_{i=1}^N \sigma_i^2$.
  and $V_k +\mu_k^2 =  \frac{1}{N-k} \sum_{i=k+1}^{N} \sigma_i^2$.
Because $\sigma_m \geq \sigma_l,\ m\leq k, l>k$,
  the expected value of their squares will also be larger and thus
  $V_0 +\mu_0^2 > V_k +\mu_k^2$, which proves the inequality.

For Hermitian cases the ratio of deflated to undeflated variance becomes
$$
   \frac{E(\Var(t(A_R)))}{E(\Var(t(A)))}
   	=
   \frac{ (N-k) (N V_k + k \mu_k^2) } {N V_0}
       = 
   \frac{\sum_{i=k+1}^N \sigma_i^2 -\frac{(N-k)^2}{N} \mu_k^2}
       { \sum_{i=1}^N \sigma_i^2 - N \mu_0^2}.
$$
Requiring that the above ratio is less than one yields the desired result.
\end{proof}

Corollary \ref{Corollary:ratios} shows that for non-Hermitian matrices 
  the condition (\ref{eq:improvedPessimistic}) on the decay of the 
  singular values is unnecessary---deflation will always reduce the variance.
This is not guaranteed for Hermitian matrices, for which 
  condition (\ref{eq:improvedPessimistic}) seems to be valid, as we show
  experimentally in the next section.

Although the above corollary is qualitative,
Theorem \ref{thm:exp-vars} facilitates a quantitative prediction of the outcome of deflation
  based solely on the variance and the expectation of the undeflated singular values.
Often, users have some idea of the singular spectrum of their matrix and thus can decide 
  not only if deflation works, but also how many singular triplets to deflate and what 
  the expected benefit will be.
Moreover, as we show in our numerical experiments, the estimates based on our formulas
  are extremely robust even when the eigenvectors are not random unitary matrices.

\section{The effect of the singular spectrum}
Having factored out the effects of the singular vectors, we now study the 
  effect of the singular value distribution using the previous theory to 
  predict actual experiments.
Clearly, the larger the gap between deflated and undeflated singular values,
 the larger the reduction in (\ref{eq:vA-vAR}).
In the following experiments we study the effect of deflation for six 
  different model distributions of $\sigma_i$.

Given a diagonal matrix of singular values $\Sigma$, we generate a pair 
  of random unitary matrices $U$ and $V$, and construct 
  one Hermitian matrix $U\Sigma U^H$ and one non-Hermitian matrix
  $U\Sigma V^H$.
For each model distribution we construct matrices of several sizes.
We report the ratio of the variance of the deflated matrix, where we deflate
  various percentages of its largest singular triplets, to the variance of 
  the original undeflated matrix.
This can be computed explicitly from (\ref{eq:variance}),
  or through our model in Theorem \ref{thm:exp-vars}.
As statistically expected, beyond small matrices of dimension less than 100,
  there is perfect agreement between our model predictions and experimentally 
  determined variances.
Thus, we only present results from our model.

\begin{figure}[h]
  \centering
  \subcaptionbox{log\label{log_c2_multiN}}{\includegraphics[width=0.45\textwidth]{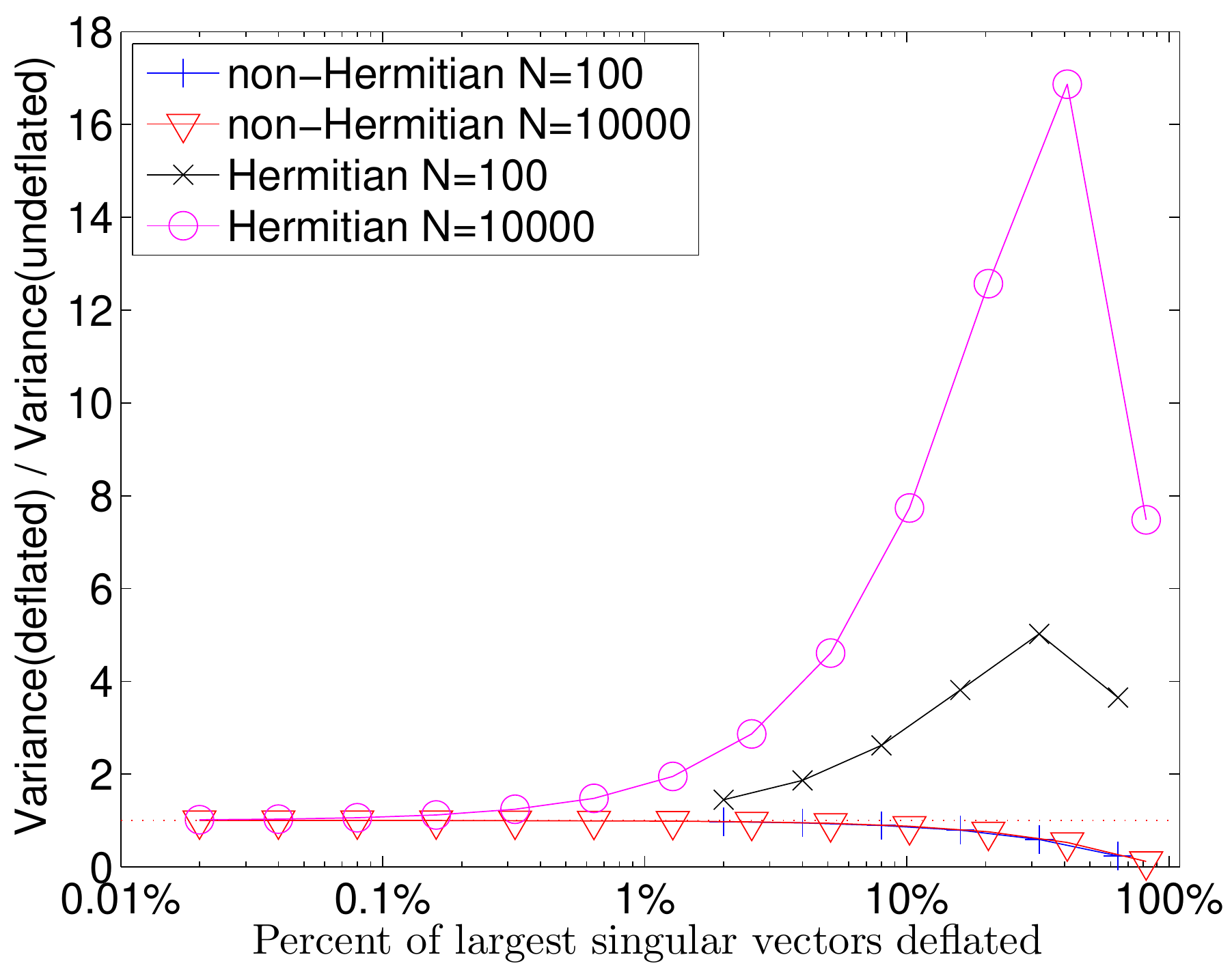}}\hspace{0em}%
  \subcaptionbox{square root\label{sqrt_multiN}}{\includegraphics[width=0.45\textwidth]{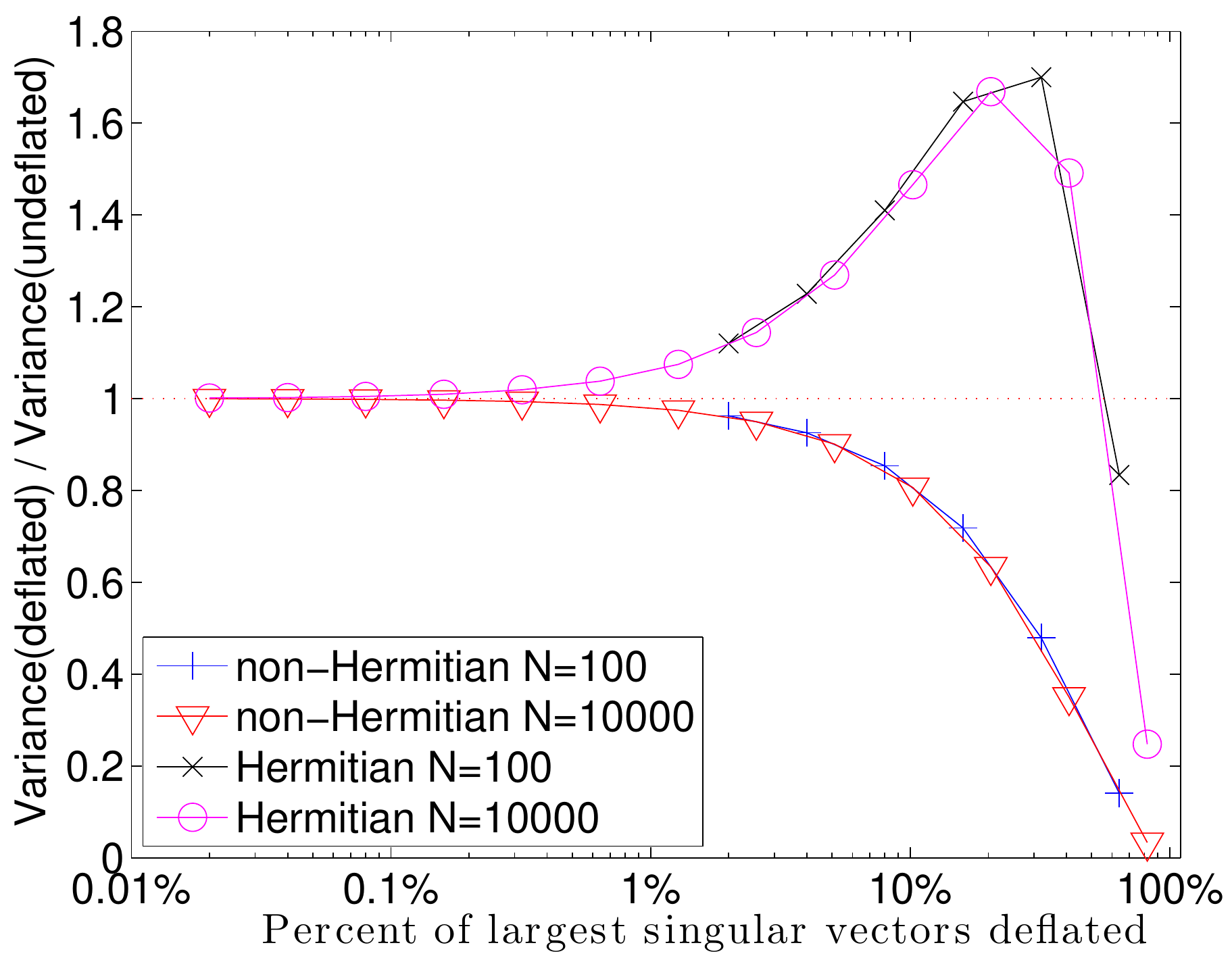}}
  \caption{On the left is a logarithmic spectrum: $\sigma_{N-i+1} = 1 + 2\cdot\operatorname{log}(i)$. On the right is a square root spectrum: $\sigma_{N-i+1} = \sqrt{i}$. The dotted red line in both plots is a constant line at $y=1$. Points below this line signify an improvement in variance with deflation. Points above the line denote a deflated operator with a higher Frobenious norm than the original matrix, a case in which deflation is hurtful and variance increases.}
\end{figure}

\begin{figure}[h]
  \centering
  \subcaptionbox{linear\label{linear_multiN}}{\includegraphics[width=0.33\textwidth]{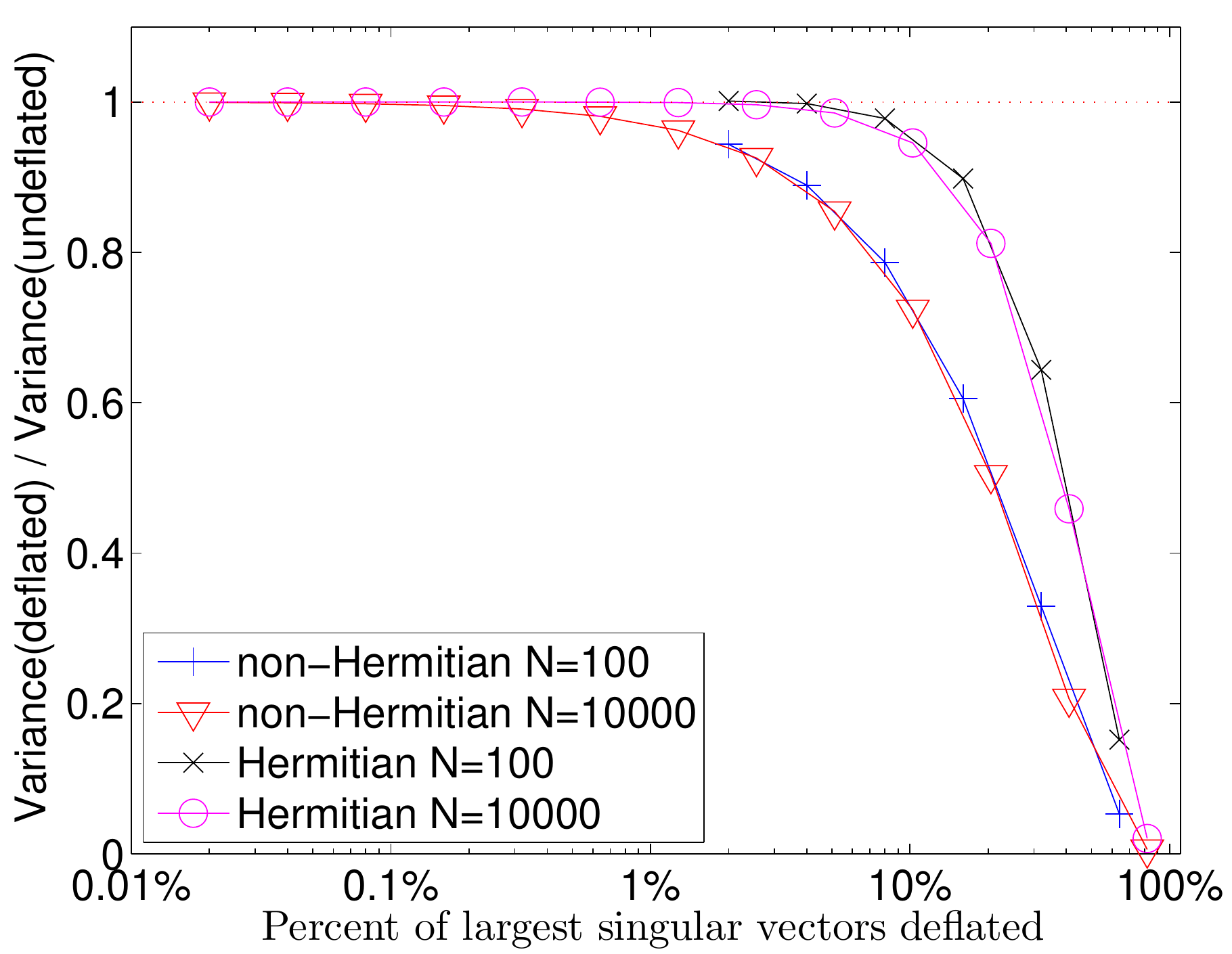}}
  \subcaptionbox{quadratic\label{quadratic_multiN}}{\includegraphics[width=0.33\textwidth]{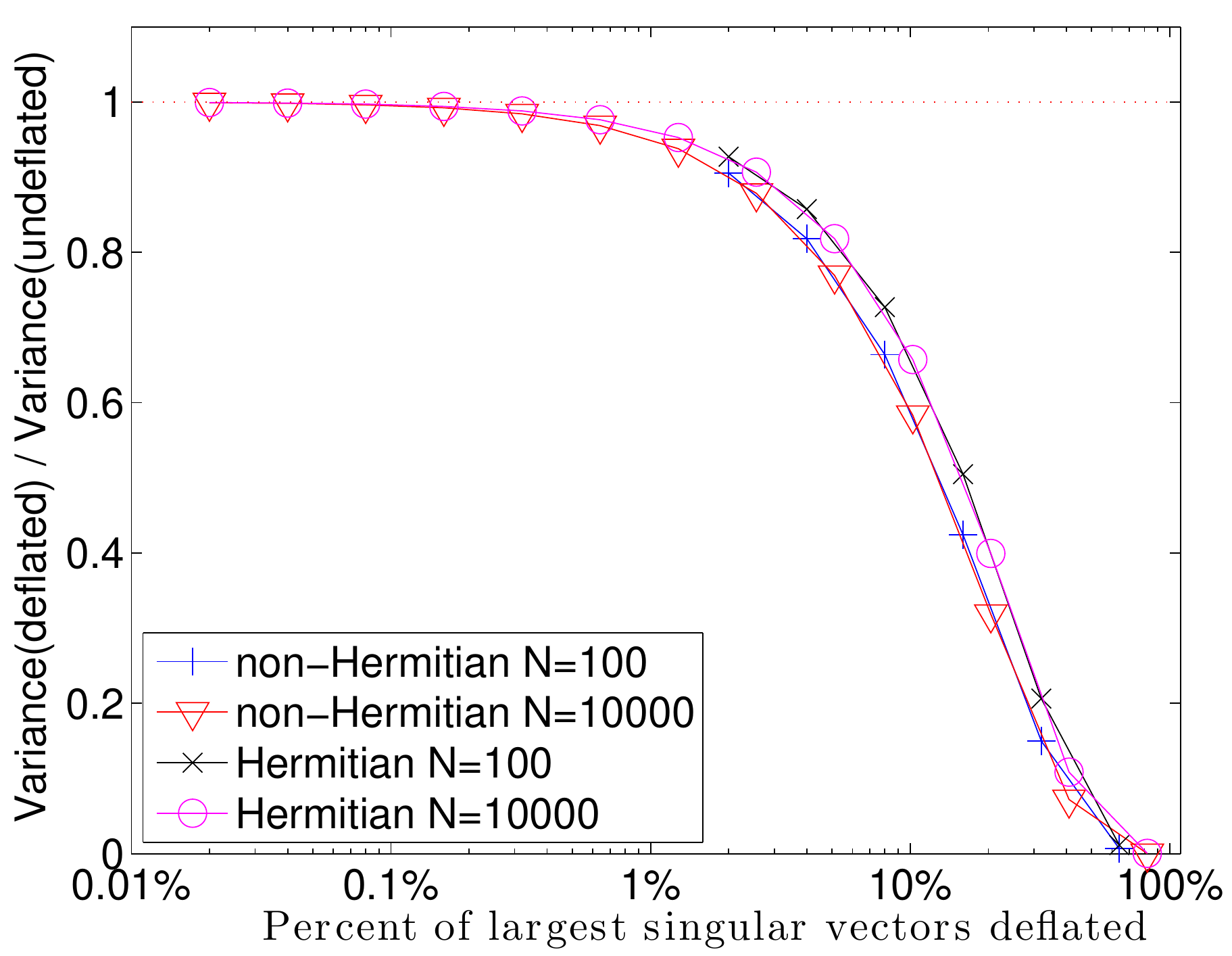}}\hspace{0em}%
  \subcaptionbox{cubic\label{cubic_multiN}}{\includegraphics[width=0.33\textwidth]{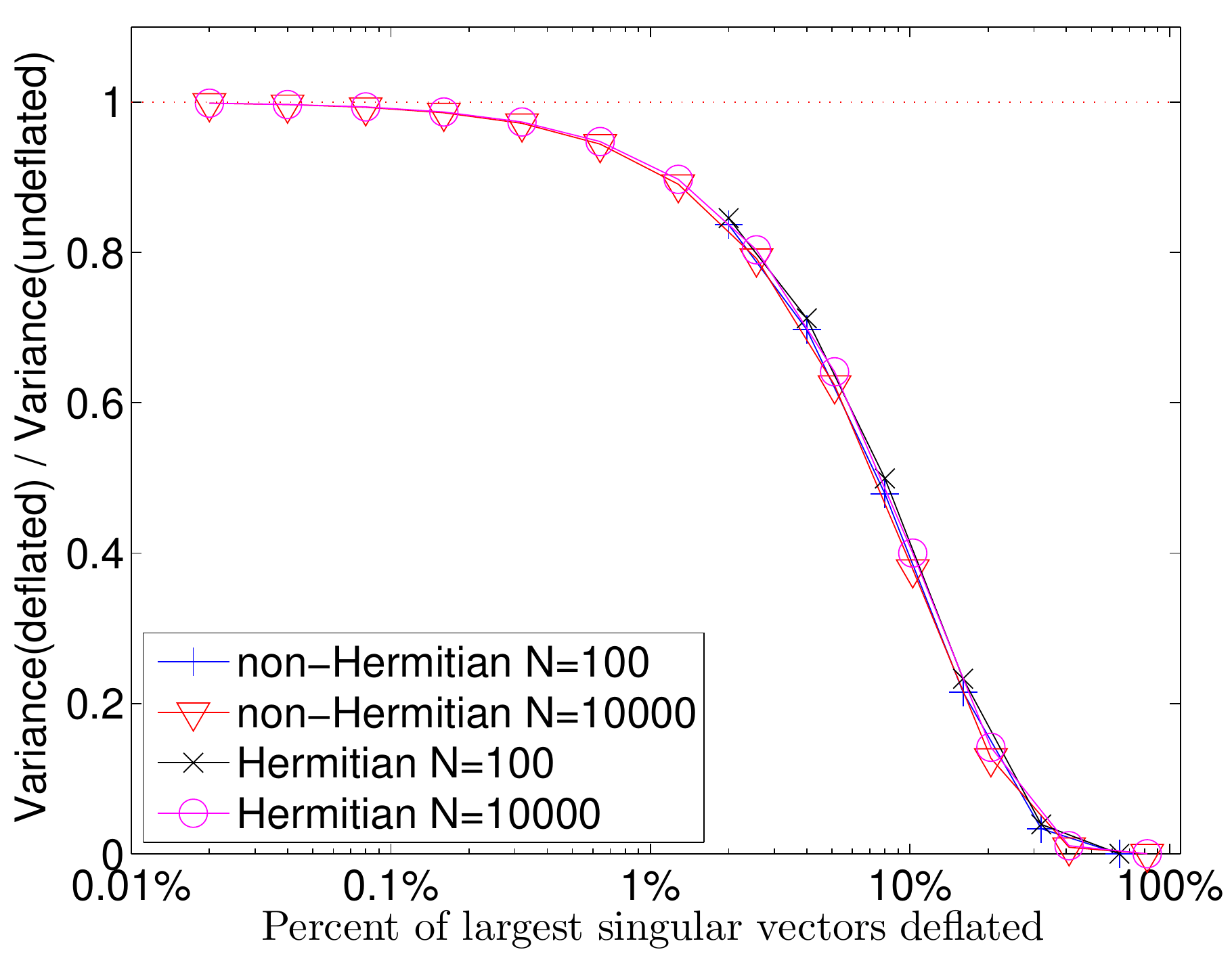}}
\caption{Variance reduction ratios for matrices with spectra with 
linear, $\sigma_{N-i+1} = i$,
quadratic, $\sigma_{N-i+1} = i^2$,
and cubic, $\sigma_{N-i+1} = i^3$, growth rates. }
\end{figure}

In Figure \ref{log_c2_multiN} we consider a model where the singular values
  increase at a logarithmic rate with respect to their index. 
As Corollary \ref{Corollary:ratios} predicts, for Hermitian matrices the 
  variance increases with the number of deflated singular triplets, 
  and the problem is more pronounced with larger matrix size. 
Although for non-Hermitian matrices the ratio is always below one,
  it requires deflating a substantial part of the spectrum to 
  reduce the variance appreciably.
In Figure \ref{sqrt_multiN} the spectrum increases as the square root 
  of the index, and the effects of deflation, although improved,
  still are not beneficial for Hermitian matrices.

In Figures \ref{linear_multiN}, \ref{quadratic_multiN}, and \ref{cubic_multiN}
  the growth of the singular values is linear, quadratic, and cubic, 
  respectively.
The ratio is now below one for both types of matrices (confirming 
  the condition (\ref{eq:improvedPessimistic}) for Hermitian matrices).
We can see that with larger growth rates, the variance reduction is larger 
  for a particular fraction of singular values deflated.
Additionally, for sufficiently large growth rates, the difference between 
  Hermitian and non-Hermitian matrices vanishes.

\begin{figure}[h]
  \centering
  \subcaptionbox{$1/\sqrt{i}$\label{oneOverSqrt_multiN}}{\includegraphics[width=0.45\textwidth]{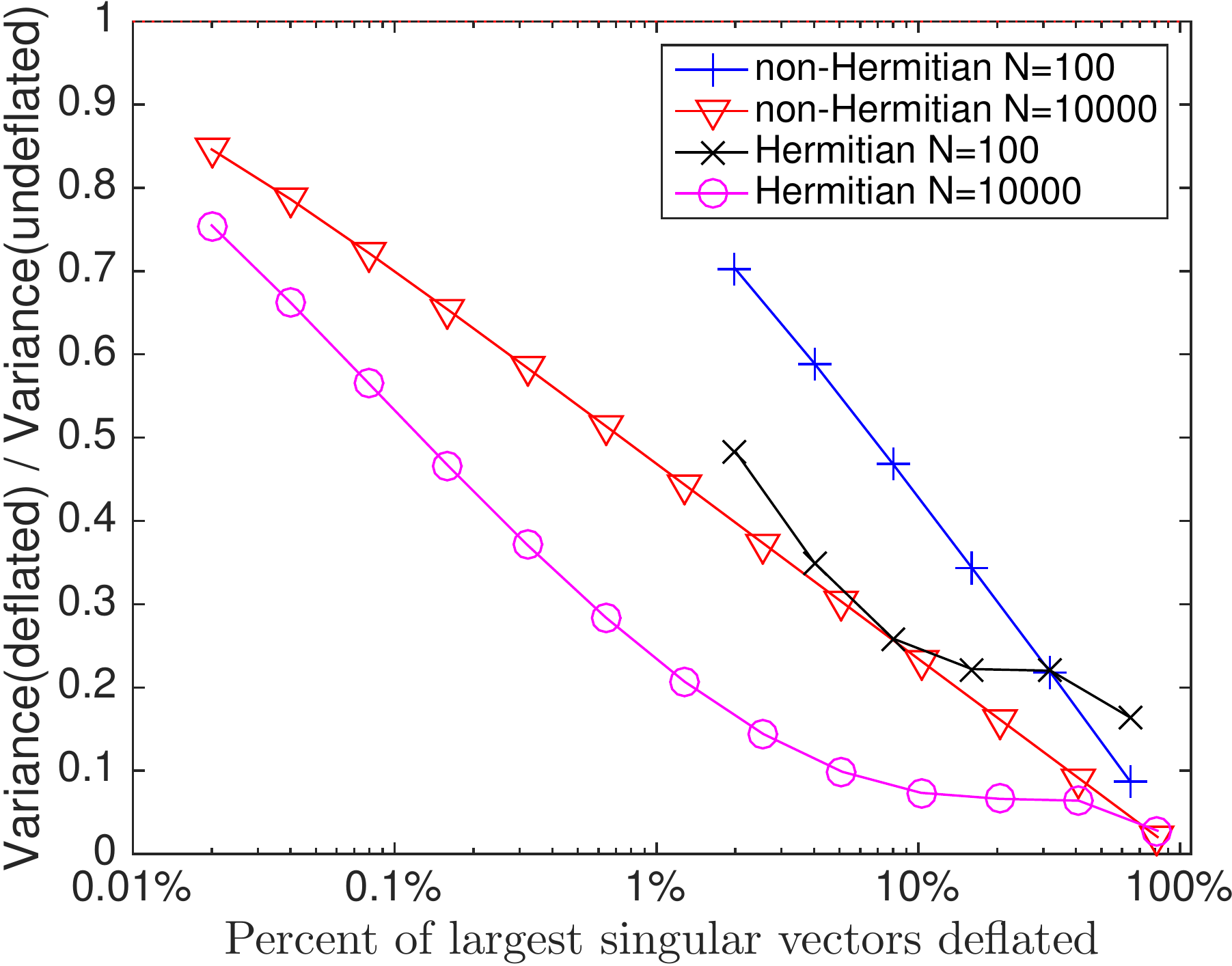}}\hspace{0em}%
  \subcaptionbox{Laplacian\label{2D_Laplacian_Inverse_multiN}}{\includegraphics[width=0.46\textwidth]{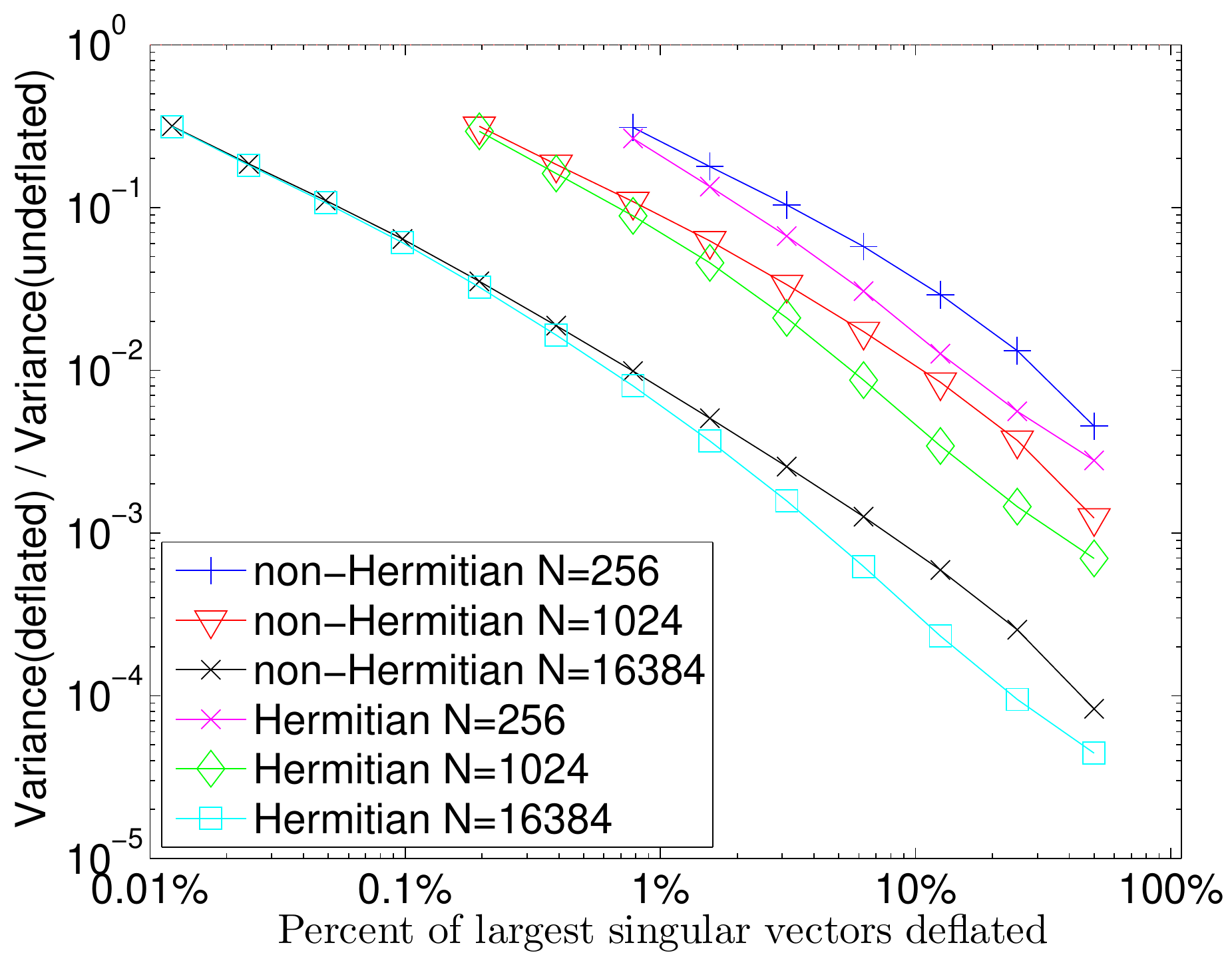}}
  \caption{On the left we have spectrum $\sigma_{i} = 1/\sqrt{i}$. The right plot shows the deflation of the inverse of a 2D discrete Laplacian on a grid $\sqrt{N}\times\sqrt{N}$ with Dirichlet boundary conditions.}
\end{figure}

For spectra that decay as a rational polynomial, the picture is different.
Figure \ref{oneOverSqrt_multiN} shows an example where the spectrum
  is $\sigma_i = 1/\sqrt{i}$.
There are a few large singular values but the rest do not reduce appreciably.
The effect of this is that Hermitian matrices experience larger relative 
  improvement with deflation over non-Hermitian matrices. 
We have observed this effect also for other rational polynomials, $1/i^p$, 
  but the difference between matrix types seems to peak at the $1/\sqrt{i}$.
This observation is particularly relevant to our problem of finding 
  the trace of the inverse of a matrix.
In Figure \ref{2D_Laplacian_Inverse_multiN} we study the spectrum 
  of the inverse of the discrete Laplacian, a common problem that also 
  has some of the features of our target QCD problem at the free field limit.
We see significant variance reduction, especially as the lattice size grows.
The Hermitian matrices continue to have an advantage over non-Hermitian
  matrices, but the difference for practical problems is negligible.

\FloatBarrier

\section{Experiments on general matrices}
The previous section studied the effect of spectra on matrices with 
  random unitary singular vectors. 
In this section we investigate the extent to which our theory is applicable 
  to general matrices with singular vectors that are not random.
We choose four matrices from the University of Florida sparse matrix 
  collection \cite{Davis:2011:UFS:2049662.2049663} with relatively small sizes (675--2000)
  that are derived from real world problems in various fields, such 
  as chemical transport modeling and magnetohydrodynamics. 
In all our results we deflate the estimator of the trace of $A^{-1}$.

\begin{figure}[h]
  \centering
  \subcaptionbox{BWM2000\label{BWM2000_Inverse_Experimental}}{\includegraphics[width=0.45\textwidth]{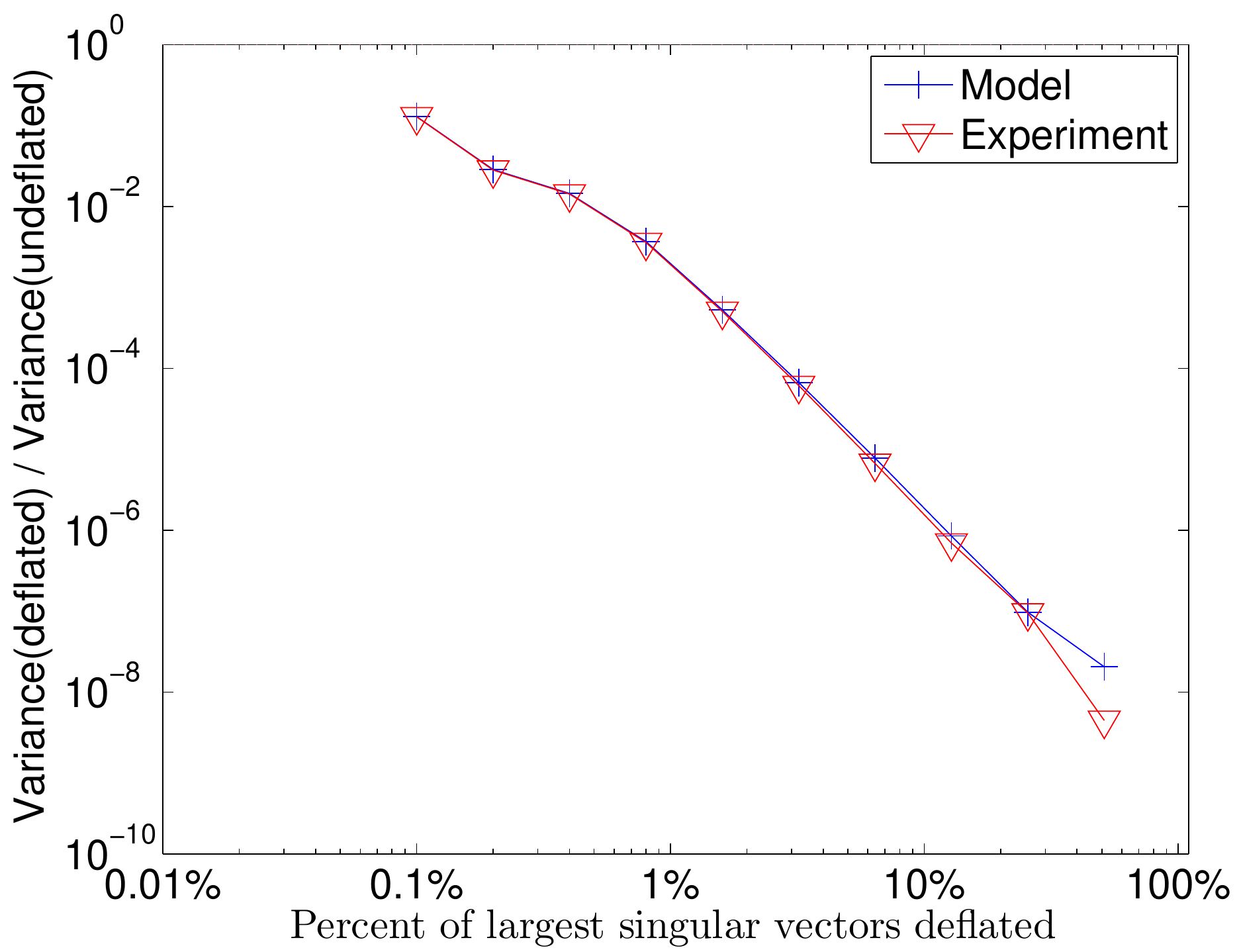}}\hspace{0em}%
  \subcaptionbox{MHD1280B\label{MHD1280B_Inverse_Experimental}}{\includegraphics[width=0.45\textwidth]{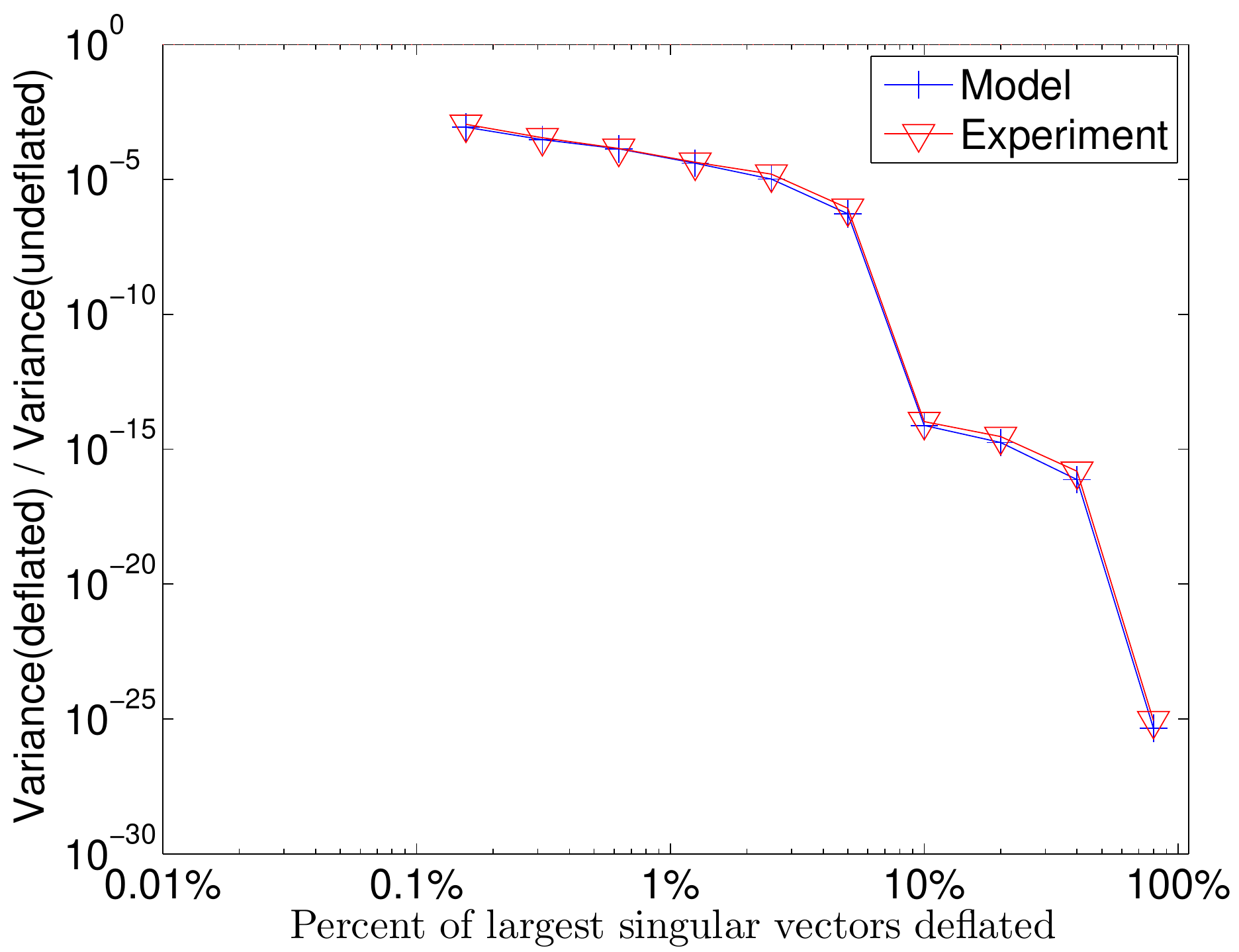}}
  \caption{Matrix BWM2000 has a size of $N=2000$ and condition number of 2.37869e+5. Matrix MHD1280B has $N=1280$ and a condition number of 4.74959e+12. Both matrices are real, non-symmetric.}
\end{figure}

\begin{figure}[h]
  \centering
  \subcaptionbox{NOS6\label{NOS6_Inverse_Experimental}}{\includegraphics[width=0.45\textwidth]{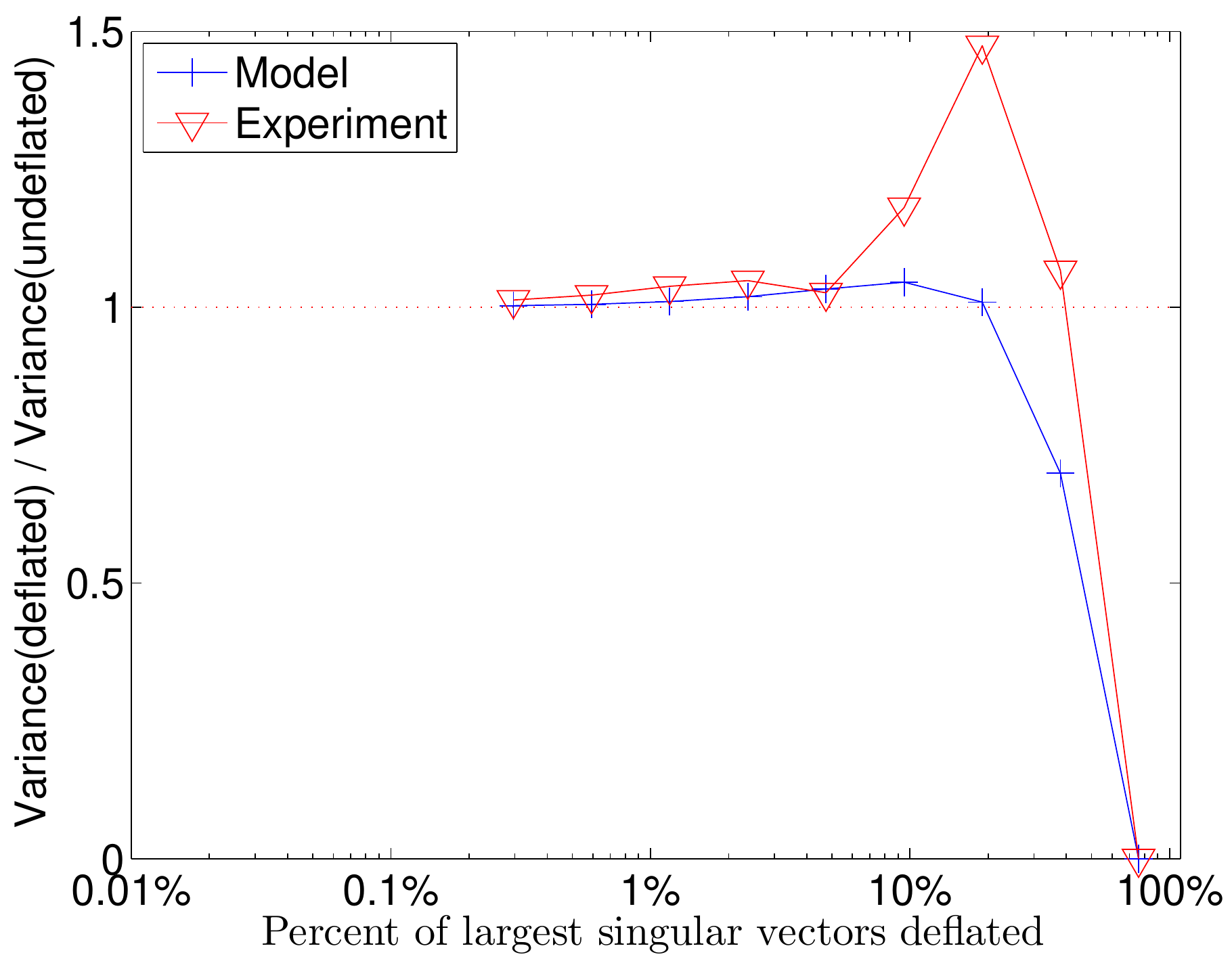}}\hspace{0em}%
  \subcaptionbox{OLM1000\label{OLM1000_Inverse_Experimental}}{\includegraphics[width=0.45\textwidth]{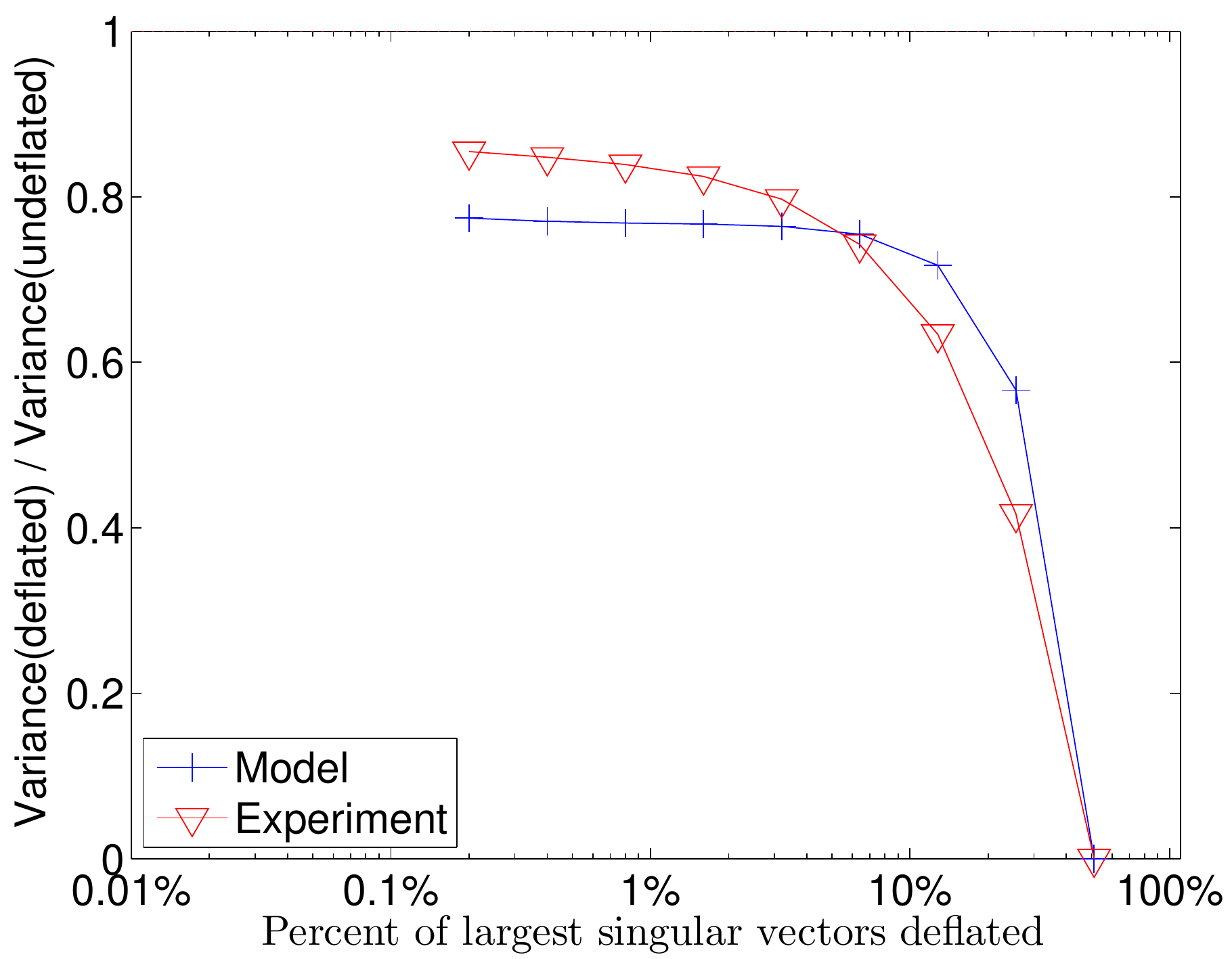}}
  \caption{Matrix NOS6 is symmetric, with $N=675$ and condition number of 7.65049e+06. Matrix OLM1000 is non-symmetric, with $N=1000$ and a condition number of 1.48722e+06. }
\end{figure}
For Figures \ref{BWM2000_Inverse_Experimental} and 
  \ref{MHD1280B_Inverse_Experimental}, the model and experimental results 
  agree very closely. 
Both demonstrate dramatic variance reduction even when deflating a small 
  fraction of the SVD space.
Both matrices have a high condition number implying that their 
  small singular values contribute most of the variance of the matrix inverse
  estimator.
Thus it pays to remove them.

In contrast, deflation does not improve the variance for the matrix in 
  Figure \ref{NOS6_Inverse_Experimental}, unless almost the entire spectrum
  is deflated.
For deflating less than 10\% of the singular triplets---the most realistic situation---model and experimental results agree.
Beyond that number, the experiment performs worse than predicted. 
However, the model still captures the overall effect and recommends avoiding 
  deflation altogether.
In Figure \ref{OLM1000_Inverse_Experimental} the effect of deflation is
  beneficial but limited. 
The disagreement between model and experiment is about 10\%, and therefore 
  the model can be used to predict the outcome effectively.

In summary, the presence of non random singular vectors could generate 
  a few discrepancies, but these and other extensive experiments show that 
  our model is useful in predicting the overall effect of deflation. 
Specifically, even a rough knowledge of the particular singular value spectrum
  can help us determine whether deflation would be valuable or hurtful.
Finally, we emphasize the small sizes of the above matrices. 
In large real world problems, the singular vectors are more likely 
  to behave like random ones.

\section{Application of deflation to Lattice QCD}

In LQCD, we may assume that the singular vectors of the Dirac operator are 
  approximately random, uniformly distributed unitary matrices.
This is justified by the random matrix theory~\cite{Shuryak:1992pi,Verbaarschot:2000dy} approximation to QCD.
 In this approximation, 
  the Dirac matrix is replaced by a random matrix with a suitable 
  probability distribution that satisfies the fundamental symmetries of QCD.
It has been shown that this approach explains the numerically observed 
  spectral density of the Dirac matrix very well \cite{Verbaarschot:2000dy}.
Therefore, we expect that our model should capture the essential properties of deflation on the stochastically estimated trace
\footnote{Using a non-uniform distribution of the singular vector matrix, such as the distributions used in~\cite{Carlsson:2008dh}, similar results can be obtained.}.

\subsection{How to obtain the deflation space}
We are interested in the trace of the inverse of a matrix, so
  we need to compute its smallest singular triplets.
Then, we apply the Hutchinson method on the deflated matrix 
  by solving a series of linear systems of equations.
It would have been desirable to compute the deflation 
  space from the search spaces built by the iterative methods for 
  solving these linear systems.
This idea has been explored effectively for Lattice QCD in the past
  \cite{Morgan_Wilcox,Stathopoulos:2007zi,AbdelRehim:2009by}.
However such methods are not suitable for our current problem for 
  the following reasons.

First, methods such as GMRESDR or eigBICG produce approximations 
  to the lowest magnitude eigenvalues of the non Hermitian matrix $A$. 
Much experimentation has shown that this eigenspace is not effective
  as deflation for reducing the variance of the Hutchinson method. 
To produce the smallest singular triplets we would have to work with 
  eigCG on the normal equations $A^HA$ \cite{Stathopoulos:2007zi}.
Second, only the lowest few eigenpairs produced by eigCG are accurate. 
The rest may have a positive effect on speeding up the linear solver,
  but they do not seem adequate for variance reduction.
Third, and most important, we are interested in large scale 
  problems for which unpreconditioned eigCG would not converge
  in reasonable time. 
However, if a preconditioner $M^{-1}$ is used, all the above methods find
  the eigenpairs of $M^{-1}A$ or of $M^{-H}M^{-1} A^HA$.
These may help speed up the linear solver but are not relevant for
  deflating $A^{-1}$ for variance reduction.

The alternative is to compute the deflation space through an explicit 
  eigensolver on $A^HA$. 
Numerical difficulties arising by using the $A^HA$ are not an issue for 
  the relatively low accuracy needed for deflation.
This is a challenging problem for our large problem sizes because the 
  lower part of the spectrum becomes increasingly dense and 
  eigenvalue methods converge slowly.
Moreover, to achieve sufficient deflation power, hundreds or 1000s of 
  eigenpairs must be computed.
Although Lanczos type methods are good for approximating large parts 
  of the spectrum, they cannot use preconditioning so they are unsuitable 
  for our problems.

We have used the state-of-the-art library PRIMME (PReconditioned Iterative 
  MultiMethod Eigensolver) \cite{PRIMME} which offers 
  a suite of near-optimal methods for Hermitian eigenvalue problems.
Among several unique features, PRIMME has recently added support for 
  solving large scale SVD problems, including preconditioning capability,
  something that is not directly supported by other current software.
In Lattice QCD, a multi-group, multi-year effort has resulted in 
  a highly efficient preconditioner which is based on domain decomposition 
  and adaptive Algebraic Multigrid (AMG) \cite{Babich:2010qb}.
In that community, AMG is a game changer but it has only been used 
  to solve linear systems of equations. 
We employ AMG as a preconditioner in PRIMME to find 1000 lowest singular 
  triplets. 
For most methods in PRIMME, AMG accelerates the number of iterations by 
  orders of magnitude and results in wallclock speedups of around 30. 

To obtain the best performance for our problem we have experimented
  with various PRIMME methods and parameters, and AMG configurations. 
A determining factor for these optimizations was the accuracy with which 
  eigenvectors had to be computed.
For high accuracy, PRIMME's near-optimal method GD+k is the method of choice,
  but for the low accuracy that is sufficient for variance reduction, 
  i.e., residual tolerance less than 1e-2 or 1e-3, methods with a 
  large block size are typically more efficient.
There are a couple of reasons for this, besides better cache utilization 
  and lower memory traffic.
Single vector methods with large tolerance may misconverge to interior
  eigenvalues before the exterior ones become visible to the method. 
In our problem, the smallest few eigenvalues are $O$(1e-5) so tolerance 
  has to be smaller than that.
Large block size avoids this problem and, additionally, allows many 
  eigenpairs to converge to much lower residual norms than the 
  requested tolerance. 
We experimented with various block sizes and methods and settled to
  GD+k with block-size of 30 and a total subspace of 90.
This is equivalent to the LOBPCG method with 90 vectors that locks 
  converged eigenvectors out of the basis as they converge.
This window approach is far more efficient than the original LOBPCG, 
  which sets the block size equal to the number of required eigenvalues.
In PRIMME this method can be called directly as 
  \texttt{LOPBCG\_Orthobasis\_Window}.

The AMG software provides a solver for a non-Hermitian linear system 
  $Ax = b$, not just a preconditioner. 
There are three levels of multigrid with a GCR smoother at each level 
  \cite{Babich:2010qb}.
Because PRIMME needs a preconditioner for the normal equations, 
   $A^HA \delta = r$, 
  each preconditioning application involves two calls to AMG to solve 
  the two systems approximately, $A^Hy = r$ and $A\delta = y$.
We found 4 GCR iterations at the fine level and 5 GCR iterations at each 
  of the two coarse levels to be optimal.
Preconditioning for eigenvalue problems differs from linear systems in 
  the sense that it should approximate $(A^HA - \sigma I)^{-1}$
  to improve eigenvalues near $\sigma$. 
In our AMG preconditioner $\sigma$ is zero, so we expect the quality of the 
  preconditioner to wane as we find eigenvalues inside the spectrum.
However, the lowest part of the spectrum is quite clustered and as such
  for multigrid this deterioration is small.

As we discuss in the experiments section, our code was able to efficiently 
  produce one thousand eigenpairs in one of the largest eigenvalue 
  calculations  we performed in Lattice QCD.

\subsection{Deflating the trace method and combining with Hierarchical Probing}

Given $k$ eigenpairs ($\Lambda, V$) of the normal equations,
  the left singular vectors can be obtained as $U = AV\Sigma^{-1}$, 
  where $\Sigma = \Lambda^{1/2}$.
Following (\ref{eq:deflatedA}), we can decompose
$\Tr(A^{-1}) = \Tr(A^{-1}_D)+\Tr(A^{-1}_R) = 
	\Tr(V\Sigma^{-1}U^H) + \Tr(A^{-1}-V\Sigma^{-1}U^H)$.
Using the cyclic property of the trace, we have
$\Tr(V\Sigma^{-1}U^H) = \Tr(\Sigma^{-1}U^HV) = \Tr(\Lambda^{-1}V^HA^HV)$.
This means that the trace of $A^{-1}_D$ can be computed explicitly
  through $k$ matrix vector multiplications and $k$ inner products.
Similarly, we see that $\Tr(A^{-1}_R) = \Tr(A^{-1}-V\Lambda^{-1}V^HA^H)$,
  so the quadratures required in Hutchinson's method can be 
  computed as $z^HA^{-1}z$ and $z^HV \Lambda^{-1} V^H (A^Hz)$.
This means that we can avoid the significant storage of $U$.

We now have all the components to run the deflated Hutchinson method
  using random Rademacher vectors. 
However, the same deflation technique can be used on the Hutchinson method 
  if the vectors come from the Hierarchical Probing (HP) method. 
HP uses an implicit distance-$d$ coloring of the lattice
  to pick the probing vectors as certain permutations of Hadamard vectors
  that remove all trace error that corresponds to $A_{ij}^{-1}$ elements
  with $i,j$ having up to $d$ Manhattan distance in the lattice.
The hope is that deflation removes error in a complementary way from HP
  and the two techniques together lead to faster convergence. 

To avoid the deterministic bias of the HP method, we follow the technique
  proposed in \cite{Stathopoulos:2013aci} which first computes a random 
  \textcolor{black}{$\mathbb{Z}_4$ vector} $z_0$, and then in the Hutchinson method uses the 
  vector $z = z_0 \odot z_h$, which is the elementwise product of $z_0$ 
  with each Hadamard vector $z_h$ from the HP sequence. 
We have shown this method to be unbiased and to reduce the measured error.
Algorithm \ref{Alg:main} summarizes our approach.

\begin{algorithm}
\caption{$Trace = $ deflatedHP$(A)$ }
\begin{algorithmic}[1]
\State $[\Lambda, V] = \operatorname{PRIMME}(A^HA)$ 
\State $T_D = \Tr(\Lambda^{-1}V^HA^HV)$; $T_R = 0$
\State \textcolor{black}{$z_0=\operatorname{randi}([0,3], N, 1)$; $z_0=exp(z_0 \pi i/2)$}
\For{$j=1:s$}
\State $z_h = $ next vector from Hierarchical Probing or other scheme
\State $z = z_0 \odot z_h$
\State Solve $Ay = z$
\State $T_R = T_R + z^Hy - z^HV \Lambda^{-1} V^H (A^Hz)$
\State $Trace = T_R/j + T_D$
\EndFor
\end{algorithmic}
\label{Alg:main}
\end{algorithm}

We conclude this algorithmic part of the paper by mentioning an important 
  application of this technique.
In Lattice QCD, we are often interested in computing $\Tr(\Gamma A^{-1})$ 
  for several different 
  $\Gamma$ matrices whose application to a vector are inexpensive to compute.
In such cases, the SVD decomposition (\ref{eq:deflatedA}) still applies,
$\Tr(\Gamma A^{-1}) = 
	\Tr(\Gamma V\Lambda^{-1}V^HA^H) + 
	\Tr(\Gamma A^{-1}-\Gamma  V\Lambda^{-1}V^HA^H)$.
The computations are similar to Algorithm \ref{Alg:main}, with a $\Gamma$ 
  matrix vector product inserted at each step. 
Therefore, the computational cost of the SVD and the storage for singular 
  vectors can be amortized 
  by reusing the deflation space to compute traces with multiple $\Gamma$ matrices.
  
\section{QCD Experiments}
We present results from experiments with two representative Dirac matrices.   Both are from $32^3\times 64$, $\beta=6.3$ Clover improved Wilson ensembles. In both cases, the pion mass was about $300 MeV$.  However, the first matrix comes from an ensemble with 3 flavors of dynamical quarks, whose masses were turned to match the physical strange quark mass. In this case we employed a lower quark mass (quark mass $m_q=-0.250$ in lattice units) for our numerical experiments in order to achieve a more singular matrix. In the second case, the ensemble from which we selected the Dirac matrix is one with 2 light quark flavors and one strange quark. The strange quark is again, at its physical value, and the light quarks have masses $-0.239$ that result in 300MeV pions.  The interested reader can find further  details about these ensembles in~\cite{Yoon:2016dij}. Subsequently, we will refer to the matrix with a quark mass of $m_q=-0.250$ as the Dirac operator from ensemble A, and the $m_q=-0.239$ mass matrix as the Dirac matrix from ensemble B.

The above matrices have a size of $N=$ 25,165,824 and condition numbers of 1747 and 1788 respectively.  
The subspaces were obtained using PRIMME set to the 
  {\tt LOBPCG\_Orthobasis\_Window} method with a tolerance of $10^{-2}$ and a block size of 30 \cite{PRIMME}. This was supplemented with a three level AMG preconditioner with $4^4$ and $2^4$ blocking and a fine/coarse maximum iteration count of 4 and 5 respectively \cite{Babich:2010qb}.

\begin{figure}[h]
  \centering
  \subcaptionbox{QCD SVD spectra\label{SVDSpectrum}}{\includegraphics[width=0.49 \textwidth]{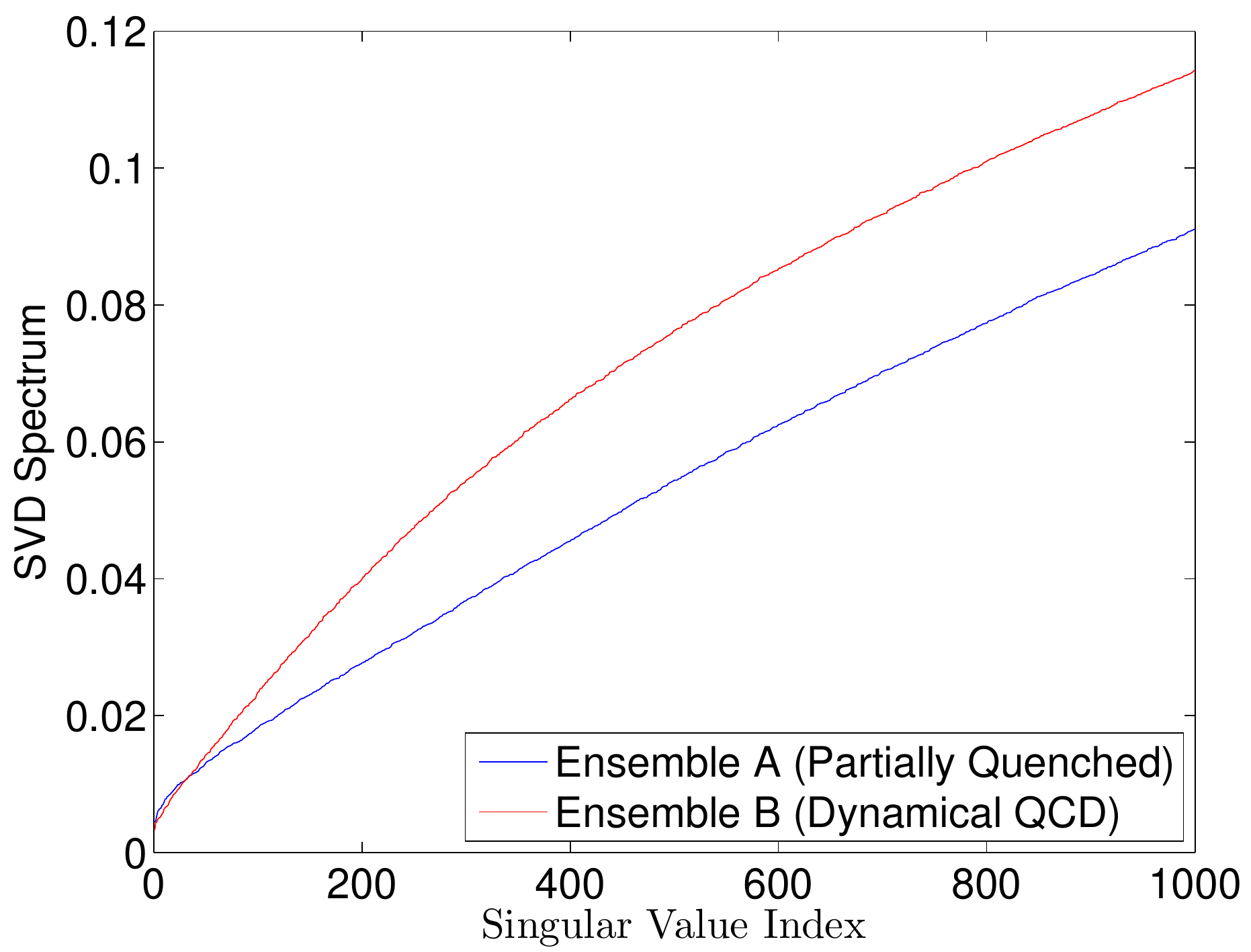}}\hspace{0em}%
  \subcaptionbox{Ensemble B $\Tr{A^{-1}}$ Deflation Model\label{QCD_plot}}{\includegraphics[width=0.48 \textwidth]{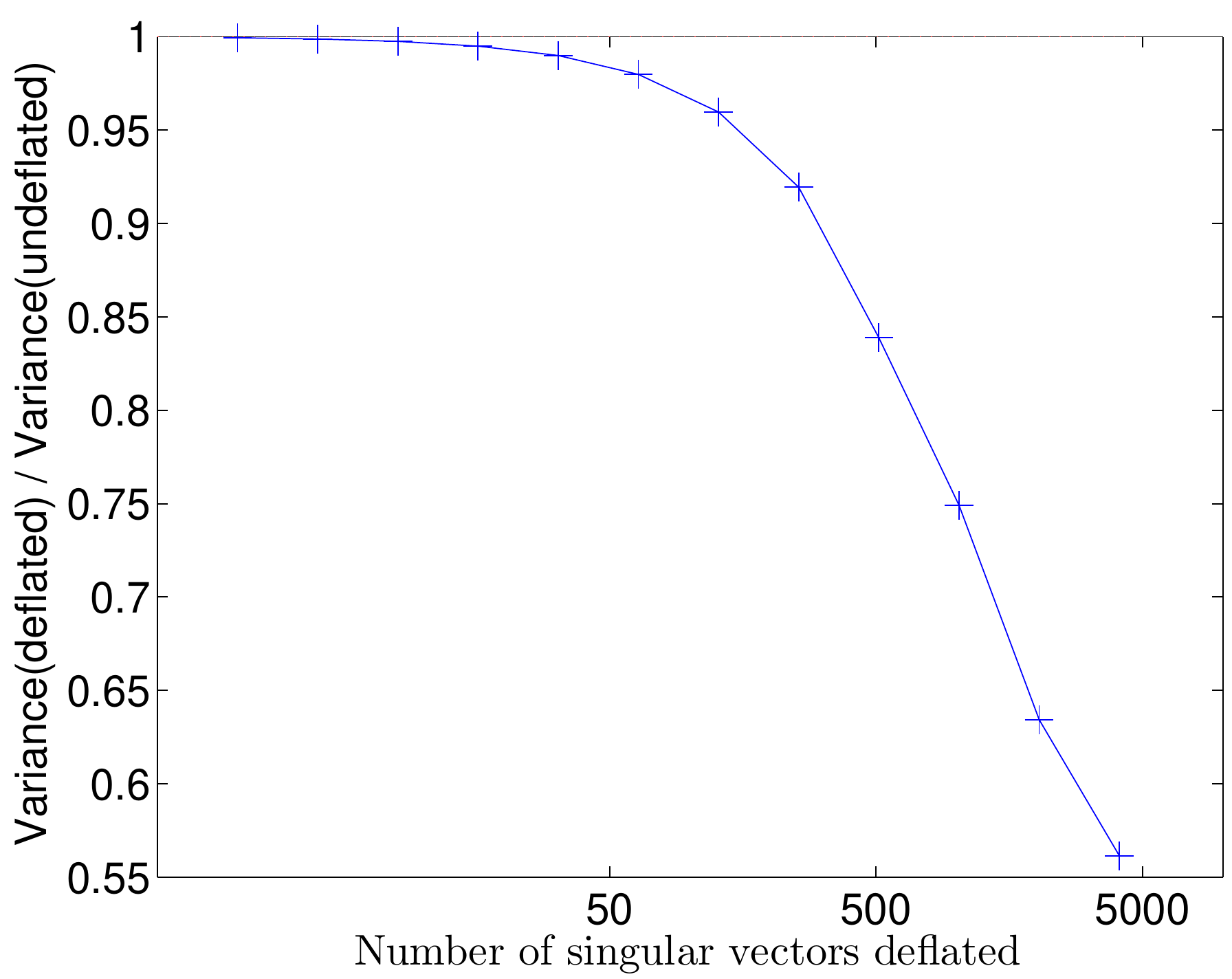}}
  \caption{The left plot displays the 1000 lowest magnitude singular values for both matrices as obtained by PRIMME. The right plot shows results from our deflation model using the 1000 computed singular values of the ensemble B matrix, 
  and simulating the rest of the spectrum as a Wilson Dirac operator 
  in free field.}
\end{figure}

\subsection{Monte Carlo with deflation}
We analyze the singular spectra of these matrices in the context of 
  our deflation theory from Section \ref{sec:theory}.
Figure \ref{SVDSpectrum} shows the smallest 1000 singular values of $A$ 
  for both ensembles.
The lowest 20 rise rapidly before the spectrum growth slows down to 
  slightly sublinear growth (ensemble B) or close to linear (ensemble A).
Since our focus is the inverse $A^{-1}$, the situation seems to similar to
  Figure \ref{2D_Laplacian_Inverse_multiN}.
To run this through our model, we wanted a rough estimate of the rest of 
  the spectrum. 
We have merged our 1000 smallest, explicitly computed singular values, with the analytically obtained singular values of the free field Wilson Dirac operator 
to obtain an approximate full spectrum for $A$. This was achieved by quadratically fitting the exactly computed singular values up to 4000 vectors and joining them with the free field spectrum via a small line segment.
Then, we use our model to simulate the effects of deflation on variance 
  for up to 5000 lowest singular triplets.
In Figure \ref{QCD_plot}, our deflation model predicts a variance reduction 
  of approximately $30\%$ for 1000 singular vectors, and $40\%$ for 
  5000 singular vectors.

Since the trace and variance of the undeflated and deflated matrix are
  not known, the model has to be compared with the statistically measured 
  variance of Monte Carlo.
An experiment with the full dynamical matrix from ensemble B was conducted to
  compute $\Tr(A^{-1})$ with the Hutchinson method using random Rademacher 
  vectors (no HP).
Table \ref{Tab:deflation} shows the results for both the undeflated
  operator (first line) and the operator deflated with a various numbers 
  of singular vectors (from 25 to 1000 starting with the smallest in 
  magnitude).
The three result columns show the statistical variance after 32, 64, and 
  128 Monte Carlo steps, respectively.
Past the lowest 25-50 singular values, there is little improvement with the 
  Monte Carlo estimator. With 128 Rademacher vectors the deflation speedup 
  is about $30\%$.
The improvement may not be impressive, but what is impressive is the 
  level of agreement with the prediction of our model in Figure \ref{QCD_plot}.
However, this agreement is not surprising since our model assumes uniformly random unitary singular vector matrices, which is approximately the case in QCD~\cite{Shuryak:1992pi,Verbaarschot:2000dy}.

\begin{table}[!ht]
\centering  
\caption{\label{Tab:deflation} Ensemble B $\Tr(A^{-1})$ Variance} 
\begin{tabular}{c c c c} 
\hline\hline 
Monte Carlo Step & 32\ & 64 \ & 128 \\ [0.5ex] 
\hline 
Undeflated & 1.0735e+04 & 5.1764e+03 & 2.7336e+03 \\
25 & 7.7396e+03 & 4.0158e+03 & 2.3081e+03 \\
50 & 7.0769e+03 & 3.8168e+03 & 2.0751e+03 \\
100 & 7.0645e+03 & 3.8108e+03 & 2.0641e+03 \\
200 & 6.9917e+03 & 3.9187e+03 & 2.1308e+03  \\ 
300 & 7.0246e+03 & 3.8921e+03 & 2.1127e+03  \\ 
400 & 6.9628e+03 & 3.9373e+03 & 2.1466e+03 \\
500 & 7.0002e+03 & 3.8166e+03 & 2.1132e+03 \\
600 & 7.1782e+03 & 3.8422e+03 & 2.0921e+03 \\
700 & 7.2679e+03 & 3.8326e+03 & 2.1068e+03 \\
800 & 7.1029e+03 & 3.8064e+03 & 2.0927e+03 \\
900 & 7.1378e+03 & 3.8768e+03 & 2.1036e+03 \\
1000 & 7.0484e+03 & 3.8355e+03 & 2.0922e+03 \\  
\hline 
\end{tabular} 
\end{table}

\subsection{Synergy between deflation and hierarchical probing}
HP used with the Hutchinson method reduces the error 
  (when run deterministically) or the variance (when run stochastically 
  as in Algorithm \ref{Alg:main}).
Depending on the conditioning of the matrix, improvements over an order of magnitude have 
  been observed \cite{Stathopoulos:2013aci}.
In Figures \ref{m250VarianceLog} and \ref{m239VarianceLog} we present
  results of Algorithm \ref{Alg:main} with the ensemble A and ensemble B matrices 
  respectively, where HP is augmented by deflation.
The error bars on the variance were estimated with the Jackknife resampling 
  procedure on 40 runs of Algorithm \ref{Alg:main} with different $z_0$ 
  noise vectors.
Local minima appear on the y axis of both plots at every power of two.
This is a characteristic of the HP method, which is meaningful only 
 at these points \cite{Stathopoulos:2013aci}.
At least one order of magnitude improvement in variance is observed with 
  deflation over HP alone.

\begin{figure}[h]
  \centering
  \subcaptionbox{Ensemble A $\Tr(A^{-1})$ Variance\label{m250VarianceLog}}{\includegraphics[width=0.49\textwidth]{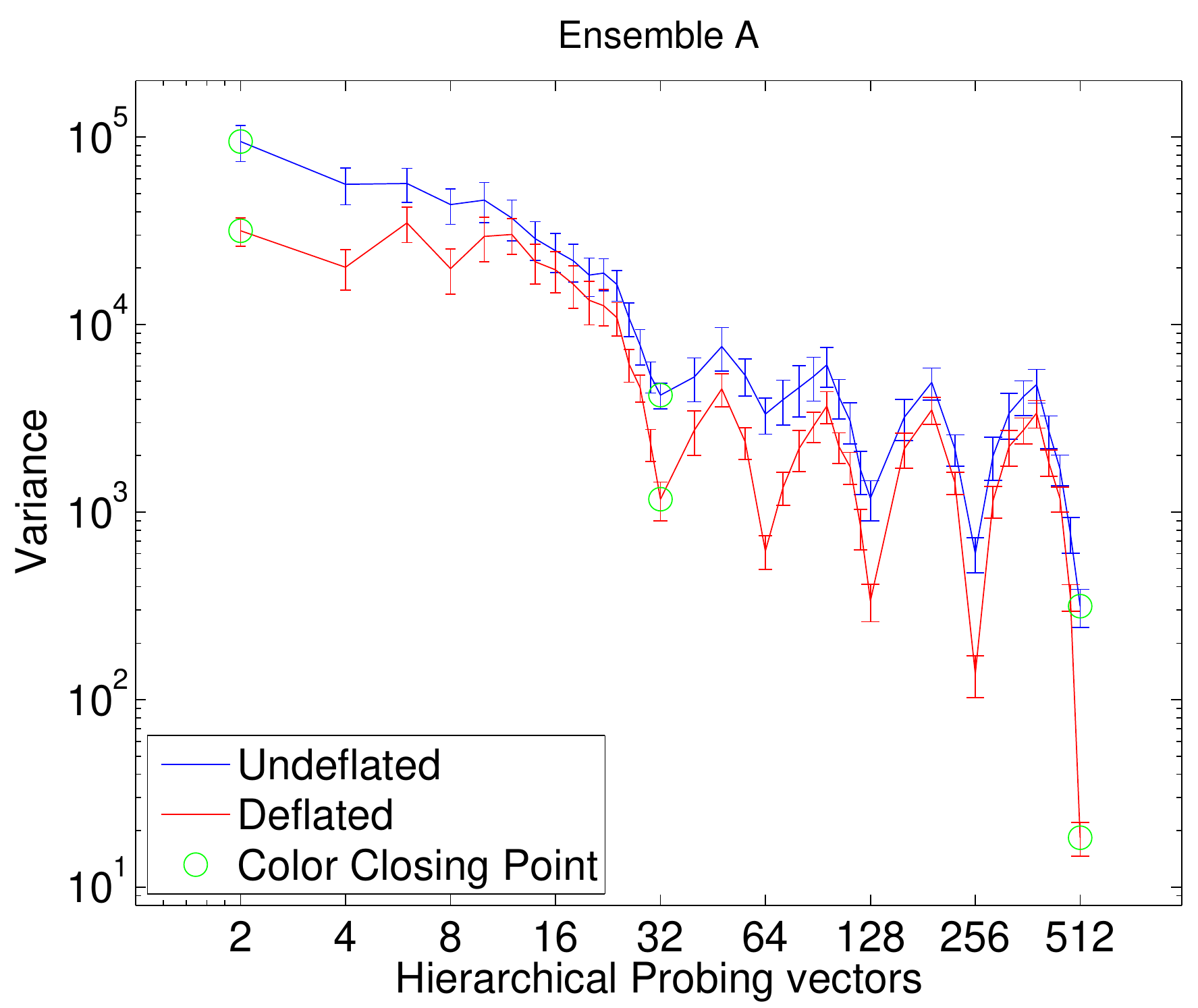}}\hspace{0em}%
  \subcaptionbox{Ensemble B $\Tr(A^{-1})$ Variance\label{m239VarianceLog}}{\includegraphics[width=0.49\textwidth]{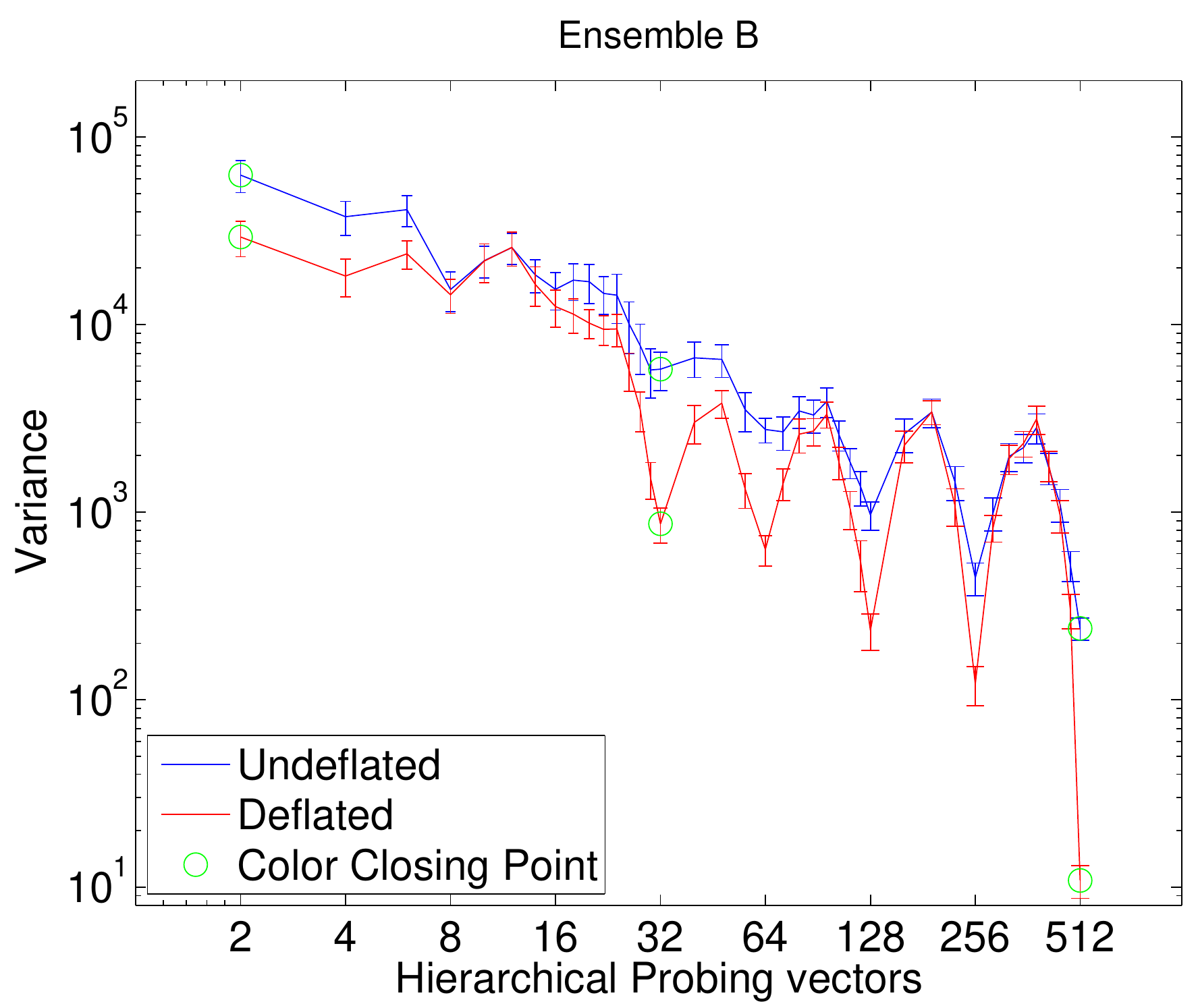}}
  \caption{Above is the variance of the hierarchical probing trace estimator with and without deflation. The full 1000 vector subspace is used as the deflated operator in red. Complete color closings are marked with green circles. For the ensemble A matrix, a factor of 15 is achieved in variance reduction between deflated and undeflated probing. Deflation yields over a factor of 20 reduction of variance for the ensemble B matrix.}
\end{figure}

\textcolor{black}{
Additionally, we compute the speedup of HP and deflated HP compared to the 
  basic MC estimator as
$$R_s=\frac{V_{stoc}}{V_{hp}(s)\times s}.$$
Here, $V_{stoc}$ is the variance from the pure noise MC estimator, 
  and $V_{hp}(s)$ is the HP variance computed with Jackknife resampling 
  over the $40$ runs.
The factor of $s$ is the number of probing vectors, and it is used to 
  normalize the speedup ratio since the error from random noise scales 
  as $(\frac{V_{stoc}}{s})^{1/2}$.
The speedup for both ensembles are displayed in figures \ref{m250SpeedUp}
  and \ref{m239SpeedUp}.
HP alone yields speedups of 2-3 instead of the speedups of 10 we noticed 
  on a matrix from Ensemble B in \cite{Stathopoulos:2013aci}.
The difference is that in the previous paper we set the quark parameter to the
  strange quark mass while in this paper we set it to the light quark mass
  which yields a much more ill conditioned matrix in Figure \ref{m239SpeedUp}.
Deflation and HP together, however, achieve a factor of 60 speedup over 
  the original Monte Carlo method.
We elaborate on this further.
}

\begin{figure}[h]
  \centering
  \subcaptionbox{Ensemble A $\Tr(A^{-1})$ Speed Up $(R_s)$\label{m250SpeedUp}}{\includegraphics[width=0.49\textwidth]{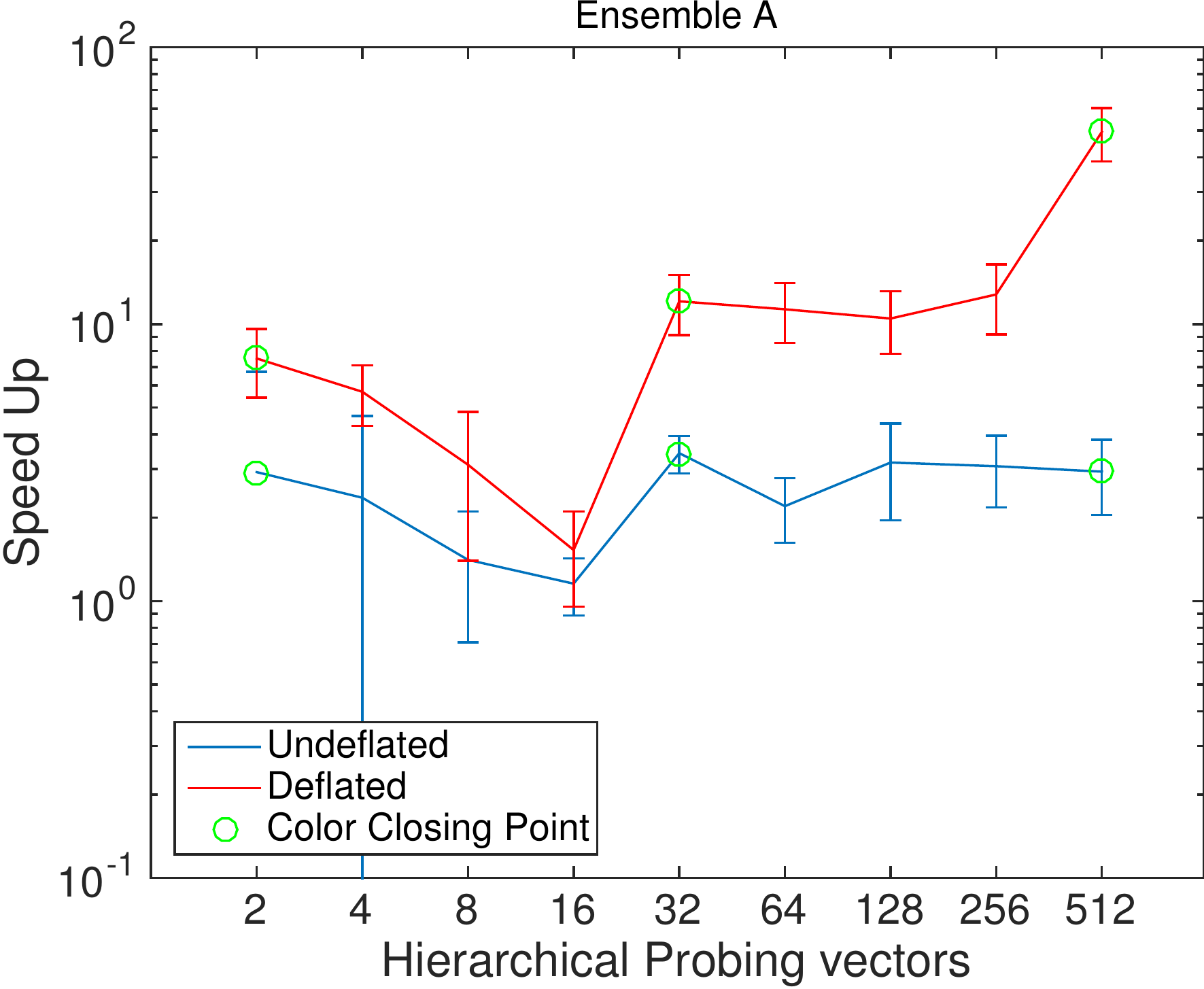}}\hspace{0em}%
  \subcaptionbox{Ensemble B $\Tr(A^{-1})$ Speed Up $(R_s)$\label{m239SpeedUp}}{\includegraphics[width=0.49\textwidth]{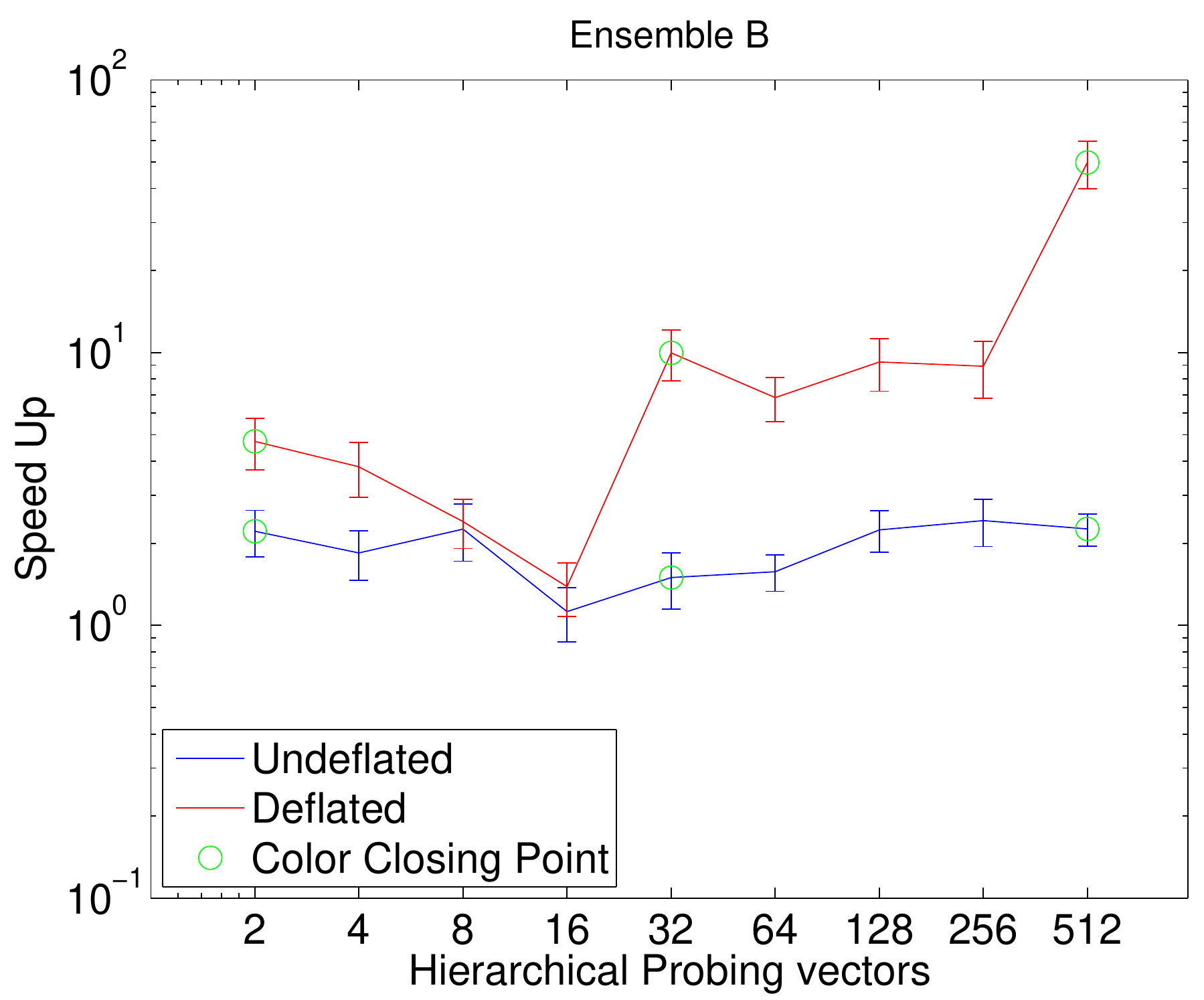}}
  \caption{
	\textcolor{black}{
  Speedup of the combined deflated HP estimator compared to pure $\mathbb{Z}_4$ noise is shown. The speedup to basic MC is estimated for both HP alone and HP with deflation. The errors are computed with Jackknife resampling.}
  }
\end{figure}

It is apparent that deflation aids the HP estimator in a much more pronounced manner 
than the basic noise estimator. 
This is because of the synergistic way deflation and HP work.
The idea of HP is based on the local decay of the Green's function. 
By assuming that the neighbors of a source node in matrix A will have 
  weights in $A^{-1}$ that decay with their distance from the source, 
  HP kills the error from progressively larger distance neighborhoods.
This works well for well conditioned matrices, but for ill conditioned ones
  the $A^{-1}$ is dominated by the contributions of the near null eigenspace.
Such contributions are typically non-local which are not captured by HP.
Deflation, however, captures exactly these contributions 
  and by removing them, a much easier structure for HP is left.
In Lattice QCD, this synergy completely resolves the scaling problem
  as the mass approaches the critical mass, 
  and significantly reduces the effects of lattice size.
  
\begin{figure}[h]
\centering
\includegraphics[width=0.6\textwidth]{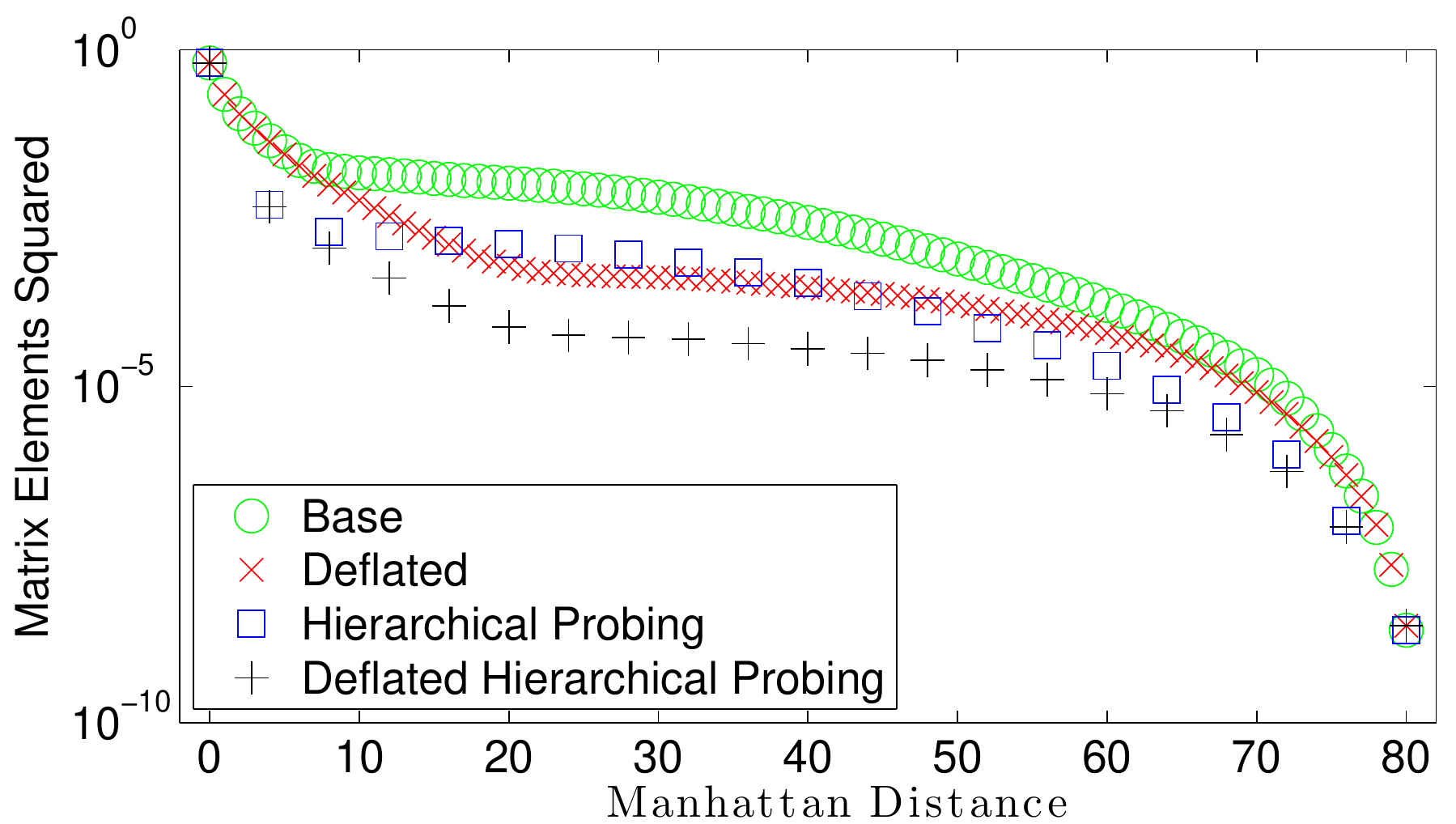}
\caption{The sum of squared absolute values of matrix elements at specified Manhattan distances from the corresponding diagonal elements for 10 randomly sampled rows.
Base case is the original Monte Carlo method.
Deflation refers to the Monte Carlo with deflation.
HP and deflated HP refer to a space spanned by the 32 hierarchical 
  probing vectors. 
A combination of HP and deflation suppresses the sum of matrix elements by orders of magnitude more than probing or deflation alone.}
\label{m250Taxi}
\end{figure}

We investigate this synergy experimentally on the matrix from ensemble A.
We seek to quantify the remaining variance on 
the original matrix ($\|A^{-1}\|_F^2$), 
after applying deflation ($\|A^{-1}_R\|_F^2$),
after applying 32 HP probing vectors $H$ ($\|(HH^H) \odot A^{-1}\|_F^2$),
and after applying both deflation and HP ($\|(HH^H) \odot A_R^{-1}\|_F^2$).
Let $B$ denote any of these four matrices. 
Since we cannot compute $\|B\|_F$ explicitly, we randomly sample 10 of 
  its rows, denoting this set as $S$. 
Then for each corresponding lattice node $i\in S$, we find all its $m_d$ 
  neighbors $j$ that are $d$ hops away in the lattice 
  (i.e., its Manhattan distance-$d$ neighborhood)
  and sum their squared absolute values $|B_{ij}|^2$.
Averaging these over all $m_d$ neighbors and all nodes in $S$ gives us 
  an estimate of how much variance remains from elements at distance $d$.
These $W_d$ are plotted in Figure \ref{m250Taxi},
$$ 
W_d = 1/|S| \sum_{i\in S} \sum_{j \in {\cal N}_d} |B_{ij}|^2/m_d, \
    \mbox{ where } {\cal N}_d = \left\{j:\ dist(i,j)=d \right\}
    \mbox{ and } m_d = |{\cal N}_d|. $$
The figure shows how HP eliminates the variance from the first
3 distances and repeats this pattern in multiples of 4 (1,2,3,5,6,7,$\ldots$)
\cite{Stathopoulos:2013aci}.
While probing eliminates better short-distance variance,
  deflation is better at long-distance.
Combining them achieves a much greater reduction in variance than either 
  of the two alone.

\subsection{Varying the SVD deflation space}

We also study the effect of the size of the deflation SVD subspace. By saving all inner products performed in the trace estimator, we are able to play back the trace simulation deflating with different numbers of singular triplets.
We combine deflation and HP and report results for 32 and 512 probing vectors,
  which represent the proper color closings for HP in a 4D lattice 
  \cite{Stathopoulos:2013aci}.
As before, the error bars are obtained from 40 different runs of
  Algorithm \ref{Alg:main} with different $z_0$.

\begin{figure}[h]
  \centering
  \subcaptionbox{Ensemble A SVD and 32 HP vectors\label{m250VarianceSVD32}}{\includegraphics[width=0.45\textwidth]{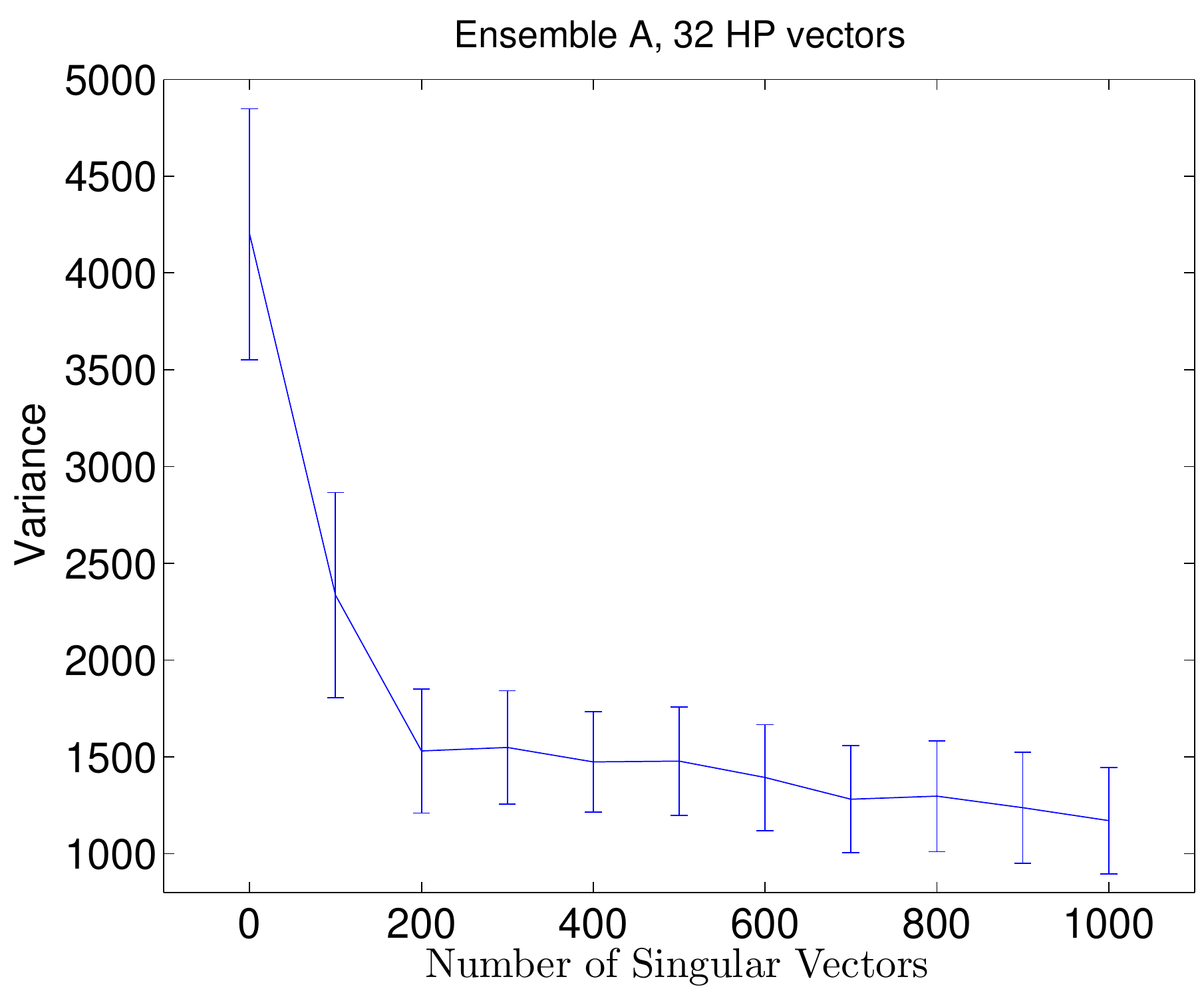}}\hspace{0em}%
  \subcaptionbox{Ensemble A SVD and 512 HP vectors\label{m250VarianceSVD512}}{\includegraphics[width=0.44\textwidth]{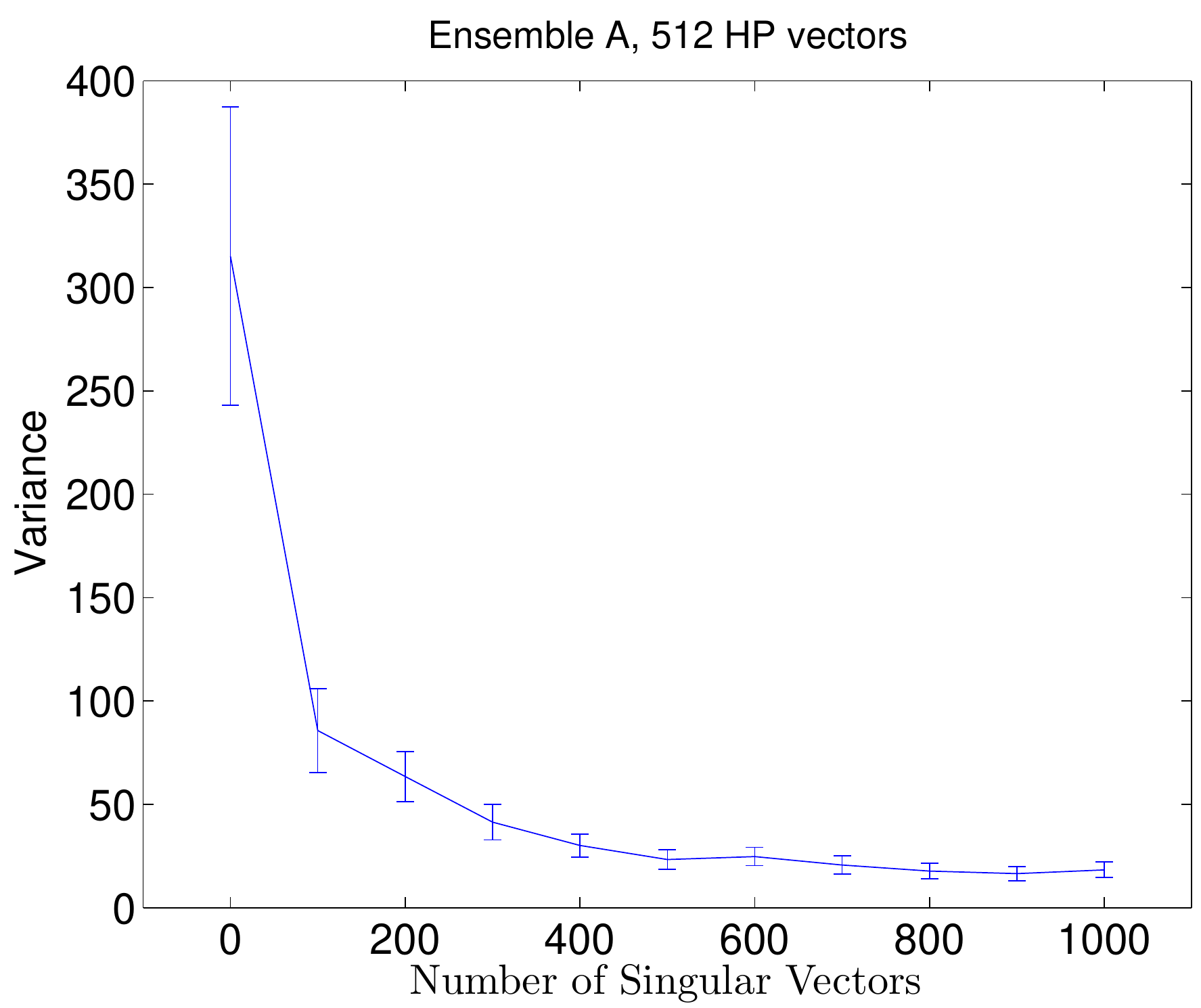}}
  \caption{Variance for the ensemble A matrix as a function of the deflated SVD subspace dimension at two color closing points of HP. The left plot is with 32 probing vectors, the right is with the full 512. }
\end{figure}
\begin{figure}[h]
  \centering
  \subcaptionbox{Ensemble B SVD and 32 HP vectors\label{m239VarianceSVD32}}{\includegraphics[width=0.45\textwidth]{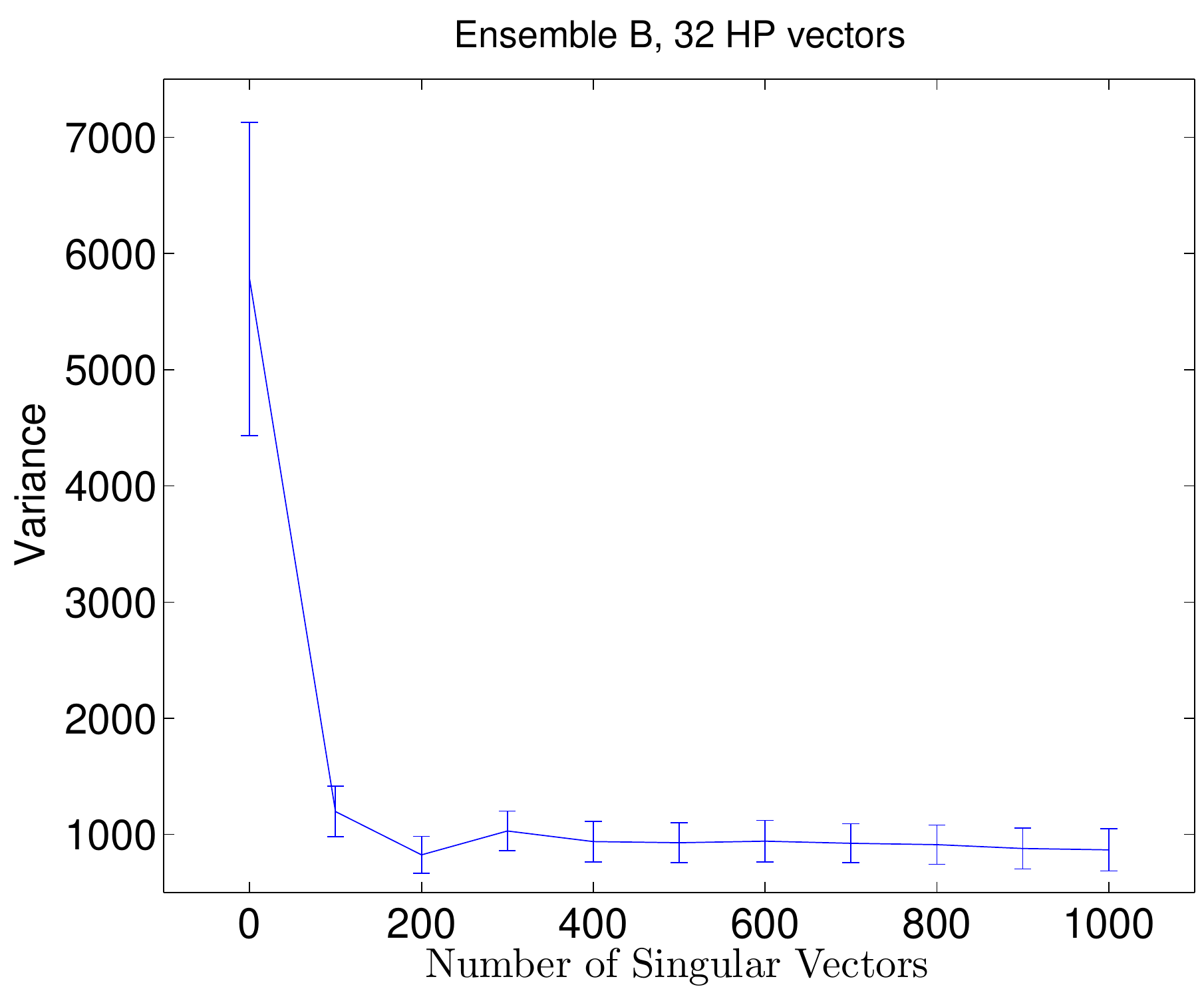}}\hspace{0em}%
  \subcaptionbox{Ensemble B SVD and 512 HP vectors\label{m239VarianceSVD512}}{\includegraphics[width=0.44\textwidth]{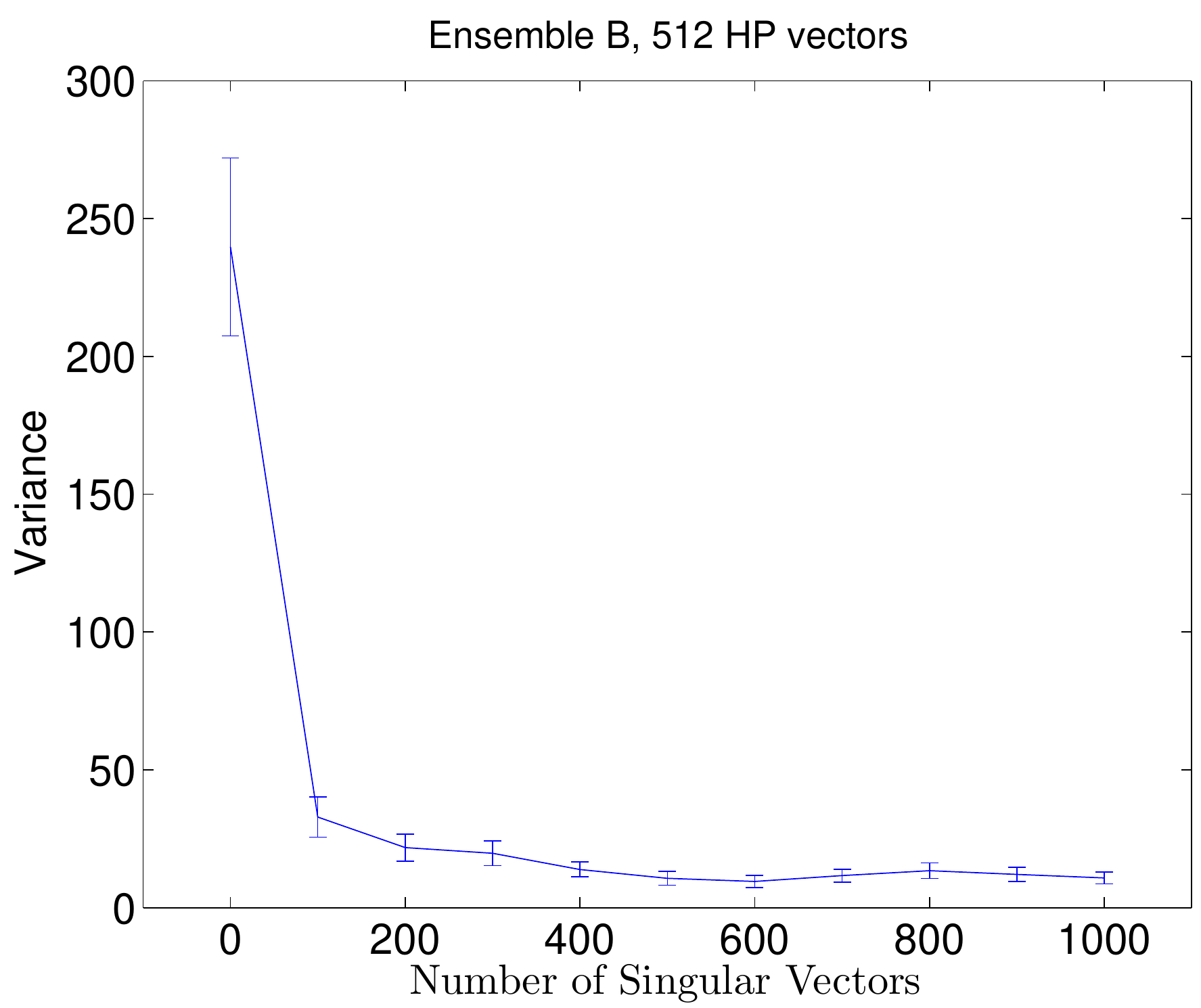}}
  \caption{Variance for the matrix from ensemble B, as a function of the deflated SVD subspace dimension at two color closing points of HP. The left plot is with 32 probing vectors, the right is with the full 512. }
\end{figure}

Figure \ref{m250VarianceSVD32} shows that deflation with 200 singular vectors 
reduces variance by a factor of 3, and beyond 200 little improvement is gained.
In Figure \ref{m250VarianceSVD512}, HP has removed the error for larger 
  distances and therefore it can use more singular vectors effectively, 
  yielding more than an order of magnitude improvement.
Still there is potential for computational savings since 500 singular vectors
  have the same effect as 1000 ones.
Figures \ref{m239VarianceSVD32} and \ref{m239VarianceSVD512} display similar attributes for the ensemble B matrix.

These experiments illustrate that the optimal number of vectors to 
  be used in each of the two techniques depends on each other. 
This is only an issue if one needs to figure out how many singular vectors to 
  compute a priori, because if these are already available, their application 
  in the method is not computationally expensive.
Moreover, while using a sufficiently large number of probing vectors 
  is important, the performance of deflation 
  seems to be much less sensitive to the number of singular vectors.
Once the near null space has been removed, there are diminishing returns
  to deflate with bigger subspaces.
In general, the effect of this can be estimated through the model 
  while computing the singular spectrum.
The experiments we provide in this paper should provide a good rule of 
  thumb when computing disconnected diagrams for a similar class 
  of Lattice QCD gauge configurations.

\subsection{Wallclock timings and efficiency}
Implementing either MC or hierarchical probing with deflation requires an additional setup cost from finding the SVD space. In Lattice QCD, this cost is of little importance since the subspace may be stored and reused several times for computing various correlation functions. 

Deflation is valuable even as a ``one shot method'' for our QCD matrices.
We investigate the case in which the trace of $A^{-1}$ only needs to be 
  computed once, and report the time to compute the SVD, first separately and 
  then as the overhead of the preprocessing of Hutchinson's method.
Our experiments were performed on the Cray Edison using 32 12-core 
  Intel Ivy Bridge nodes clocked at 2.4 GHz, each with only 8 cores 
  enabled due to memory and node topology considerations.

\begin{figure}[!ht]
\centering

\subcaptionbox{PRIMME cost\label{m250PRIMMETimings}}
{\includegraphics[width=0.45\textwidth]{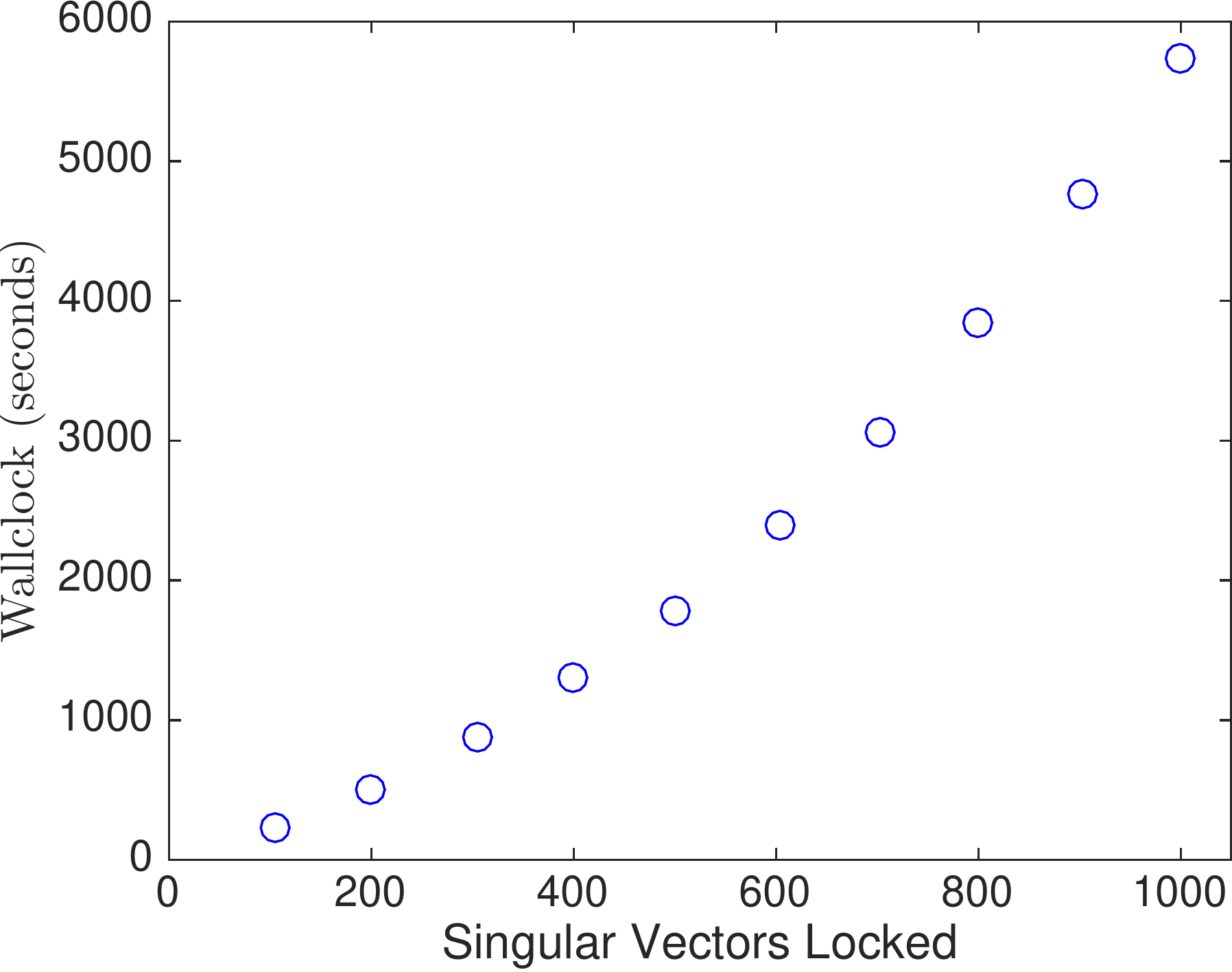}}
\subcaptionbox{
	\textcolor{black}{
	Variance vs simulation cost\label{m250TraceTimings}}
}
{\includegraphics[width=0.45\textwidth]{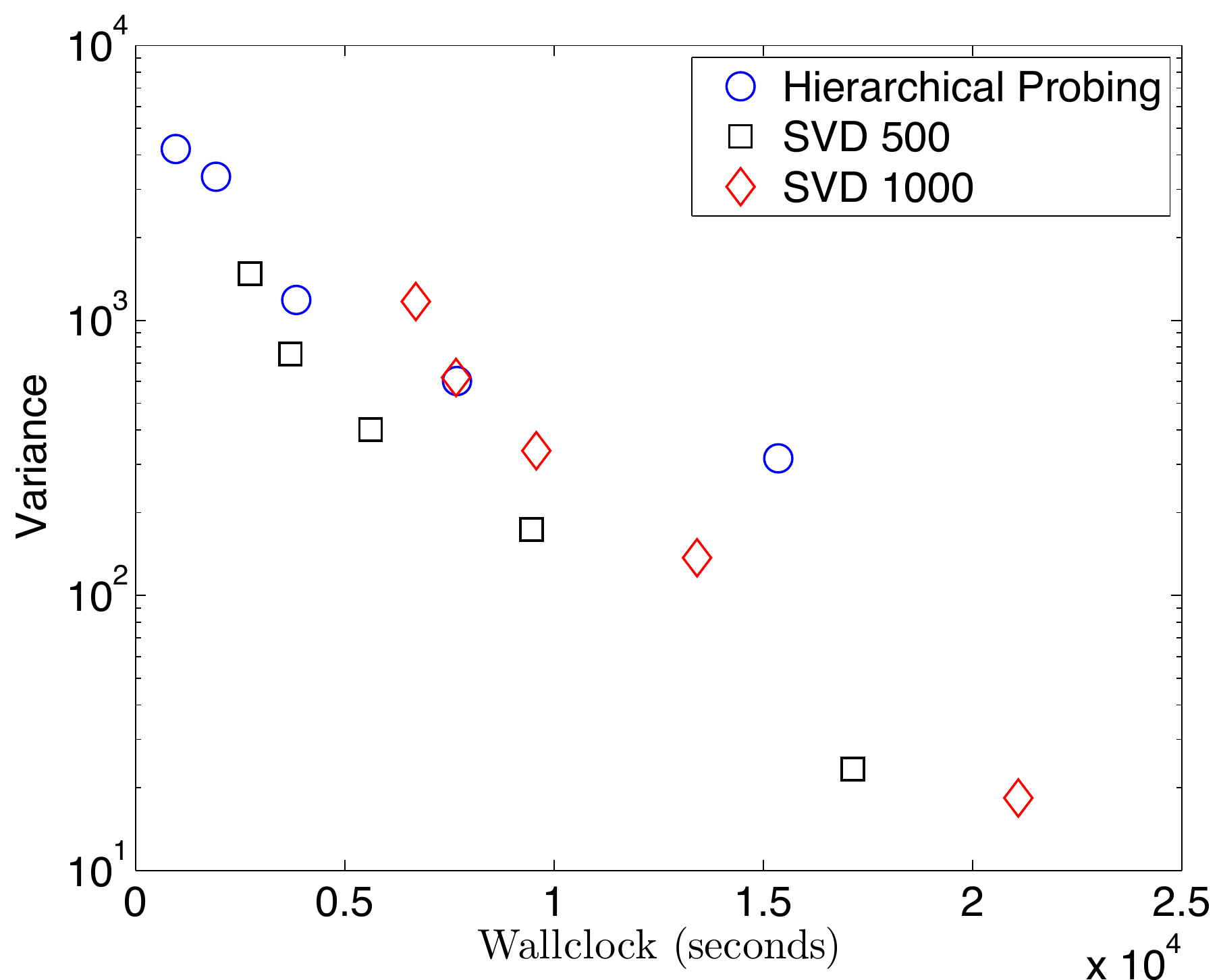}}
\caption{
Eigenvectors computed by PRIMME from 100 to 1000 for the matrix from ensemble A. 
A log plot of variance and cost. Each case displays 5 points, which represent the variance and wallclock at probing vectors 32, 64, 128, 256, and 512.
}
\end{figure}
Figure \ref{m250PRIMMETimings} shows the timings for PRIMME as a function
  of the number of eigenvectors found.
As more eigenvectors converge, orthogonalization costs increase resulting 
  in time increasing super linearly.
The expected reduction in the efficiency of the AMG preconditioner 
  as we move to the interior of the spectrum is in fact negligible. 
Obtaining 1000 eigenvectors takes 1.5 hours, while 500 vectors are computed 
  in less than half an hour.
Indeed with the help of the AMG preconditioner, PRIMME was able to solve for 
  the eigenvalues of $A^HA$ at a fraction of the cost of the probing estimator.

We now add the time to compute the singular space as well as the time 
  to perform the projections with that space to the timings for the remaining
  steps of Algorithm \ref{Alg:main}. 
We consider two simulations; one with deflation space of 500 vectors and one
  with 1000 vectors. \textcolor{black}{ From figures \ref{m239VarianceSVD32} and 
  \ref{m239VarianceSVD512} we do not expect gain beyond 500 singular triplets.}
For each closing point of HP (32, 64, 128, 256, and 512 probing vectors),
  Figure \ref{m250TraceTimings} plots the achieved variance as a function 
  of total wallclock time.
We observe that the variance with 500 deflation vectors at probing vector 128 
  is comparable to the variance of the plain HP method at 512 probing vectors.
This translates to a 3-fold reduction in wallclock, even with the SVD 
  computations included. 
Furthermore, at 512 probing and 500 deflation vectors, we see a 15-fold reduction 
  in variance with the SVD time being less than 10\% of total wallclock. 
This suggests that deflation can be used equally well as a one shot 
  method for variance reduction. 

\section{Conclusion}
We have studied theoretically and experimentally the effects of deflating 
  the near null singular value space on reducing the variance 
  of the Hutchinson method.
This is a Monte Carlo method for estimating the trace of the inverse of a
  large, sparse matrix, which among other areas is also common in Lattice QCD.
Our theoretical analysis showed that variance reduction 
\textcolor{black}{is guaranteed if the singular values of the matrix increase 
	at an exponential rate. For slower increasing rates, the singular
vector structure plays a role.}
By assuming that the singular vectors are random unitary matrices, 
  we were able to quantify the above in a concise, elegant formula
  that requires only the first two moments of the singular values.
Experiments have shown that the formulas model even general, non-random 
  matrices very well.
We have also shown an interesting property, where singular vector deflation 
  applied to Hermitian matrices can increase the variance, whereas 
  deflation applied to non-Hermitian matrices with the same spectrum 
  always decreases the variance. 

In the second part of the paper we use deflation to solve a particularly 
  challenging, large scale QCD application defined on a 4D regular lattice.
The singular values are computed using PRIMME with an AMG preconditioner 
  in one of the largest SVD computations performed in Lattice QCD.
Although deflation on its own has a limited impact on the variance, 
  combining it with the current state-of-the-art method of 
  Hierarchical Probing (HP) provides a factor of 10-15 speedup over HP.
We explain this synergy theoretically and provide a thorough experimental
  analysis that confirms our explanation. 
These Lattice QCD tests, which were performed 
  on  Edison (the Cray supercomputer at  the National Energy Research Scientific Computing Center) 
  show that our method can have significant efficiency improvements on similar Lattice QCD calculations
  that require the computation of the trace of matrices related to the inverse of the Dirac matrix.

\section*{Acknowledgments}
This work has been supported by NSF under grants No. CCF 1218349 and ACI SI2-SSE 1440700, and by DOE under a grant No. DE-FC02-12ER41890.
KO and AG have been supported by the U.S. Department of Energy through Grant Number DE- FG02-04ER41302. 
KO has been supported through contract Number DE-AC05-06OR23177 under which JSA operates the Thomas Jefferson National Accelerator Facility. 
AG has been supported by the U.S. Department of Energy, Office of Science, Office of Workforce Development for Teachers and Scientists, Office of Science Graduate Student Research (SCGSR) program. The SCGSR program is administered by the Oak Ridge Institute for Science and Education for the DOE under contract number DE-AC05-06OR23100.
This research used resources of the National Energy Research Scientific Computing Center, a DOE Office of Science User Facility supported by the Office of Science of the U.S. Department of Energy under Contract No. DE-AC02-05CH11231.

\bibliography{qcd}
\bibliographystyle{siamplain}
\end{document}